\definecolor{darkblue}{RGB}{0,0,160}
\newcommand{\excise}[1]{}
\newtheorem{thm}{Theorem}[section]
\newtheorem{lemma}[thm]{Lemma}
\newtheorem{cor}[thm]{Corollary}
\newtheorem{prop}[thm]{Proposition}
\theoremstyle{definition}
\newtheorem{example}[thm]{Example}
\newtheorem{remark}[thm]{Remark}
\newtheorem{defn}[thm]{Definition}
\numberwithin{equation}{section}
\newcommand{\ring}[1]{\ensuremath{\mathbb{#1}}}
\renewcommand\>{\rangle}
\newcommand\1{\mathbf{1}}
\newcommand\CC{\ring{C}}
\newcommand\II{\ring{I}}
\newcommand\VV{\ring{V}}
\newcommand\cK{{\mathcal K}}
\newcommand\QQ{\ring{Q}}
\newcommand\RR{\ring{R}}
\newcommand\ZZ{\ring{Z}}
\newcommand\cC{{\mathcal C}}
\newcommand\cL{{\mathcal L}}
\newcommand\cN{{\mathcal N}}
\newcommand\cT{{\mathcal T}}
\newcommand\cS{{\mathcal S}}
\newcommand\cP{{\mathcal P}}
\newcommand\cY{{\mathcal Y}}
\newcommand\cI{{\mathcal I}}
\renewcommand\iff{\Leftrightarrow}
\newcommand{\hadam}{\star}
\renewcommand{\cdot}{}
\renewcommand{\tilde}{\widetilde}
\DeclareMathOperator\sign{sign} 
\DeclareMathOperator\im{im} 
\DeclareMathOperator\rs{rowspace} 
\DeclareMathOperator\Sat{Sat} 
\DeclareMathOperator{\cone}{cone}
\newcommand{\inD}[1][\relax]{\def\argone{#1}\def\temprelax{\relax}
\ifx\argone\temprelax\right.\else\,\middle|#1\right.{}\fi}
\newcommand{\Mathematica}{\texttt{Mathematica}\xspace}
\newcommand{\Maple}{\texttt{Maple}\xspace}
\newcommand{\scip}{\texttt{scip}\xspace}
\begin{document}

\title[Multistationarity in the space of total
concentrations]{Multistationarity in the space of total
  concentrations for systems that admit a monomial parametrization} 

\author{Carsten Conradi}
\address{Hochschule für Technik und Wirtschaft\\ Berlin, Germany}
\email{carsten.conradi@htw-berlin.de}

\author{Alexandru Iosif}
\address{Joint Research Center for Computational Biomedicine\\ Aachen, Germany}
\email{iosif@aices.rwth-aachen.de}

\author{Thomas Kahle}
\address{Otto-von-Guericke Universität\\ Magdeburg, Germany}
\email{thomas.kahle@ovgu.de}

\date{\today}

\makeatletter
  \@namedef{subjclassname@2010}{\textup{2010} Mathematics Subject Classification}
\makeatother

\subjclass[2010]{Primary: 13P15, 37N25; Secondary: 92C42, 13P25, 13P10}

\keywords{Polynomial systems in biology, chemical reaction networks,
steady states, multistationarity}

\begin{abstract}
We apply tools from real algebraic geometry to the problem of
multistationarity of chemical reaction networks.  A particular focus
is on the case of reaction networks whose steady states admit a
monomial parametrization.  For such systems we show that in the space
of total concentrations multistationarity is scale invariant: if there
is multistationarity for some value of the total concentrations, then
there is multistationarity on the entire ray containing this value
(possibly for different rate constants) -- and vice versa.  Moreover,
for these networks it is possible to decide about multistationarity
independent of the rate constants by formulating semi-algebraic
conditions that involve only concentration variables. These conditions
can easily be extended to include total concentrations. Hence
quantifier elimination may give new insights into multistationarity
regions in the space of total concentrations.  To demonstrate this, we
show that for the distributive phosphorylation of a protein at two
binding sites multistationarity is only possible if the total
concentration of the substrate is larger than either the total
concentration of the kinase or the total concentration of the
phosphatase. This result is enabled by the chamber
decomposition of the space of total concentrations from polyhedral
geometry. Together with the corresponding sufficiency result of Bihan
et al.\ this yields a characterization of multistationarity up to
lower dimensional regions.
\end{abstract}

\maketitle
\setcounter{tocdepth}{1}

\section{Introduction}
\label{sec:intro}

The dynamics of many biochemical processes can be described by systems
of ordinary differential equations (ODEs).  Already the steady states
of such ODEs contain important information, for example, about the
long term behavior of a process. In particular, in modeling signal
transduction and cell cycle control, one is often interested in the
existence of multiple steady states, multistationarity for
short~\cite{conradi2012multistationarity,degree_paper}.

As measurement data is often noisy and realistic models tend to be
large, parameter values often come with large uncertainties or are not
known at all.  Hence, given an ODE system, one asks whether or 
not there exist parameter values such that the system admits
multistationarity. This is a mathematically challenging problem, even
in the simplest case when all kinetics are of mass-action form and the
ODEs have polynomial right hand sides. In this case one has to
identify those parameter values for which a parametrized family of
polynomials admits at least two positive real solutions.

A variety of necessary conditions for multistationarity in mass-action
networks are known in the literature, for example graph-based
conditions (e.g.~\cite{fein-008,inj-003,inj-002} and the references
therein), conditions based on the determinant of the Jacobian
(e.g.~\cite{inj-010,inj-007} and the references therein) or conditions
based on network concordance (e.g.~\cite{fein-045,fein-060}).

Conditions that are both necessary and sufficient can usually only be
found if the network satisfies additional conditions, for example
encoded in the network
deficiency~\cite{fein-016,fein-017,fein-015,fein-013,fein-014}, in the
stoichiometric matrix~\cite{conradi2012multistationarity}, the steady
state ideal~\cite{millan2012chemical} and~\cite[Section~3.1]{inj-006}
or the Brouwer degree of the polynomial map defining the steady state
ideal~\cite{degree_paper,dickenstein2018multistationarity}.
  The last two references are similar in spirit to the results
  presented here: while we require a monomial parametrization of the
  positive steady states, their results require a rational
  parametrization. Moreover, our results are derived independent of the
  Brouwer degree.

Some of the aforementioned results allow to determine rate
constants where multistationarity occurs. It is, however, currently
not possible to directly infer total concentrations (a different but
equally important set of parameters) based on these results. The
notable exception is \cite{bihan2018lower}, where results similar to
the ones obtained in Section~\ref{subsec:qechambers} are presented: if
the total concentrations satisfy a linear inequality, then there exist
rate constants such that multistationarity is possible. The linear 
inequalities, however, are not arbitrarily imposed by the authors of
\cite{bihan2018lower} but arise from the system itself. It is
therefore currently not possible to decide whether or not
multistationarity is possible for arbitrary polynomial inequalities in
the total concentrations.

As the total concentrations are experimentally more accessible than
the rate constants, conditions directly incorporating total
concentrations are desirable.  Here we initiate the study of such
conditions with a focus on systems whose positive steady states admit
a monomial parametrization (Definition~\ref{def:mono}). These systems
are closely related to systems with toric steady states described
in~\cite{millan2012chemical}, that is, to systems whose steady state
ideal is binomial (i.e.\ the ideal is generated by polynomials with at
most two terms). One way to establish this property is to find a
Gr{\"o}bner basis that is binomial. The results
in~\cite{sadeghimanesh2018groebner} allow for the efficient
computation of such Gr{\"o}bner bases for the enzymatic systems
frequently used in modeling intracellular signaling and control.
Multistationarity conditions are described in~\cite{binomialCore}.
The systems discussed in~\cite{sadeghimanesh2018groebner,binomialCore}
belong to the larger class of MESSI systems~\cite{MESSI}.  However,
while the former systems always admit a monomial parametrization this
need not be the case for the latter.

For systems that admit a monomial parametrization we show that in the
space of total concentrations multistationarity is scale invariant:
from Theorems~\ref{thm:scaling_c} and~\ref{thm:scale} it follows that
if there is multistationarity for some vector $c$ of the total
concentrations (and for some vector of rate constants~$k$), then, for
any $\alpha>0$ there is multistationarity for $\alpha c$, albeit for a
different~$k$.  And vice versa: if for some $c$ there is no $k$ such
that multistationarity is possible, then there is no $k$ such that
multistationarity is possible for $\alpha c$, $\alpha>0$.

In Theorem~\ref{thm:multi_c_space} and Corollary~\ref{cor:semi1} we
formulate semi-algebraic conditions for multistationarity that
use only variables representing concentrations.  Such conditions can
be extended to incorporate constraints on the total concentrations.
Hence, for such systems it is possible to decide about
multistationarity without knowing the rate constants.

There are many biologically meaningful networks that admit a
monomial parametrization, see for example the networks discussed in
\cite{ptm-028}.
We apply our results to one of those, the well-known
sequential distributive phosphorylation of a protein at two binding
sites~\cite{conradi2008multistationarity}
(see~\cite{holstein2013multistationarity} for proteins with an
arbitrary number of phosphorylation sites).  These networks are
arguably among the best studied systems when it comes to
multistationarity: in \cite{sig-005} multistationarity has been shown
numerically, in \cite{fein-012} via sign patterns. This analysis also
allows to study the effect of parameter variations on
multistationarity~\cite{conradi2008multistationarity}.
In~\cite{conradi2014catalytic} conditions on a subset of the rate
constants called catalytic constants have been derived: if the
catalytic constants satisfy this condition, then multistationarity is
possible for some values of the total concentrations.  Here we
describe a similar result for the total concentrations: applying
Corollary~\ref{cor:semi1} we show in Theorem~\ref{theo:no_multi_dd}
and Corollary~\ref{cor:MM-setting} that multistationarity is possible
only if the total concentration of the substrate is at least as
large as either the total concentration of the kinase or the total
concentration of the phosphatase.  A result of \cite{bihan2018lower}
shows the converse up to lower dimensional regions: multistationarity
occurs if the substrate concentration is strictly larger than the
total concentration of kinase or the total concentration of
phosphatase.  Corollary~\ref{cor:MM-setting} summarizes the situation.
The description is complete up to lower dimensional boundary cases.
In particular, multistationarity occurs in the Michaelis--Menten
regime, where the total concentration of the substrate exceeds those
of both enzymes by orders of magnitude.

To arrive at this condition we make use of the chamber decomposition
of the cone of total concentrations: in Theorem~\ref{thm:chambers} we
show that, independent of the number of phosphorylation sites, this
cone consists of five full-dimensional sub-cones called chambers.
These chambers are determined by subsets of linearly independent
columns of a matrix defining the conservation relations.  In
Theorem~\ref{thm:table} we show that for two sites, multistationarity is
only possible in four of these chambers.

  The paper is organized as follows: we close this section with an
  introduction of some basic mathematical notation used throughout
  this paper. In Section~\ref{sec:mass-action-networks} we
  introduce 
  the ODEs that arise from chemical
  reaction networks with mass-action kinetics and formally define
  multistationarity. We also comment on the relationship between
  steady states and rate constants. In Section~\ref{subsec:nonbin} we
  formally define systems that admit a monomial parametrization and
  discuss the consequences concerning steady states and
  multistationarity. In Section~\ref{subsec:qechambers} we apply the
  results of Section~\ref{subsec:nonbin} to a reaction network
  describing the distributive phosphorylation of a protein at two
  binding sites. And in Section~\ref{sec:existence} we comment on
  conditions for the existence of monomial parametrizations. The paper
  closes with a brief discussion of the main findings in
  Section~\ref{sec:disc}.

\subsection{Notation}
\label{sec:notation}
For any $m\times n$ matrix $A$, we write $\im(A) = \{Ax | x\in\RR^n\}$
for the right image and $\rs(A) = \{yA | y\in\RR^m\}$ for the rowspace
(left image).  If $A$ and $B$ are two matrices of the same dimensions,
then $A\hadam B$ denotes their Hadamard product, that is
$(A\hadam B)_{ij} = A_{ij}B_{ij}$.  
If $x$ is a vector of length $m$ and $A$ is
an $m\times n$ matrix, then we write $x^A$ for the $n$-vector with
entries
\[
(x^A)_j = \prod_{i=1}^mx_i^{A_{ij}}, j=1,\dots,n.
\]
Slightly deviating from the matrix-vector product notation, this
operation is possible independent of whether $x$ is a row or column
vector and should always return the same type of vector.  We also
apply scalar functions to vectors which always means coordinate-wise
application.  Using this, for example, one can check that
\[
\ln x^A = (\ln x)\cdot A \quad \text{ if $x$ is a row vector, }
\]
and
\[
\ln x^A = A^T\cdot (\ln x) \quad \text{ if $x$ is a column vector. }
\]
A vector which has $1$ in every entry is denoted by $\1$. If
$I \subseteq k[x_1,\dots, x_n]$ is an ideal of $n$-variate polynomials
with coefficients in a field $k$, then the variety
$\VV(I) \subseteq k^n$ is the set of all points where all elements of
$I$ simultaneously vanish. See
\cite[Chapter~4]{cox96:_ideal_variet_algor} for basics on
computational algebraic geometry.

\section{Chemical reaction networks}
\label{sec:mass-action-networks}

A \emph{chemical reaction network} is a finite directed graph whose
vertices are labeled by {\em chemical complexes} and whose edges are
labeled by positive parameters called the \emph{rate
  constants} (cf.~(\ref{eq:fexample}) of Example~\ref{ex:1-site}). The
digraph is denoted by $\cN = ([m],E)$, with vertex set $[m]$ and edge
set $E$. 
Each chemical complex $i \in [m]$ has the form $\sum^n_{j=1}
(y_i)_jX_j$ for some $y_i\in\ZZ^n_{\ge0}$, where $X_1,\ldots,X_n$ are
\emph{chemical species}. The vectors $y_i$ are the
\emph{complex-species incidence vectors} and they are gathered as the
columns of the \emph{complex-species incidence matrix} $Y=
(y_1,\ldots,y_m)$. Throughout this article the integers $n$, $m$, and
$r$ denote the number of species, complexes, and reactions,
respectively. Each finite directed graph has an incidence matrix $\cI$
with $\cI_{jl} = - \cI_{il} = 1$ whenever the $\ell^{\text{th}}$ edge 
points from the $i^{\text{th}}$ vertex to the $j^{\text{th}}$ vertex
and $0$ otherwise. A~complex which is the source of a reaction is an
\emph{educt complex} and a complex which is the sink of a reaction is
a \emph{product complex}.  Each complex can be an educt and product
for several reactions. For each reaction network one has a matrix
$\mathcal{Y}$ whose columns are the complex-species incidence vectors
corresponding to 
  the educt complexes for each reaction:
\begin{equation}
\label{eq:def_Ycal}
\mathcal{Y}=(\tilde{y}_1,\ldots,\tilde{y}_r),
\text{ where } \tilde{y}_i=y_k \text{ when reaction $i$ has educt
complex $k$.}
\end{equation} 

  We exemplify the above notation in
  Example~\ref{ex:1-site} below. The system serves as a running example
  in Sections~\ref{sec:mass-action-networks} and~\ref{subsec:nonbin}
  to illustrate our definitions and results.

\begin{example}
  \label{ex:1-site}
  The following reaction network is the \emph{1-site phosphorylation
    network}: 
\begin{equation*}
\label{eq:fexample}
\begin{tikzcd}[row sep=small, column sep=small, every arrow/.append
style={shift left=.75}]
X_1+X_2 \arrow{rr}{k_1} && X_3 \arrow{ll}{k_2} \arrow{rr}{k_3} &&
X_1+X_4 \\
X_4+X_5 \arrow{rr}{k_{4}} && X_6 \arrow{rr}{k_{6}} \arrow{ll}{k_5} && X_2+X_5.
\end{tikzcd}\tag{$\cN_1$}
\end{equation*}
The chemical species are $X_1$, $X_2$, $X_3$, $X_4$, $X_5$, and
$X_6$ and the chemical complexes are $X_1+X_2$, $X_3$,
$X_1+X_4$, $X_4+X_5$, $X_6$, and $X_2+X_5$.  The species
$X_1$ is a catalyst for the phosphorylation of
$X_2$ which goes through an intermediate state
$X_3$ before becoming the phosphorylated~$X_4$.  Similarly
$X_5$ catalyzes the dephosphorylation.  The network has
$6$ reactions, each one labeled by a rate constant $k_1$,
$k_2$, $k_3$, $k_4$, $k_5$ or~$k_6$. The matrix
$\cY$ of this network is
\begin{equation*}
\cY= \left[
\begin{array}{cccccc}
  1&0&0&0&0&0 \\
  1&0&0&0&0&0 \\
  0&1&1&0&0&0 \\
  0&0&0&1&0&0 \\
  0&0&0&1&0&0 \\
  0&0&0&0&1&1  
\end{array}\right].
\end{equation*}
\end{example}

\subsection{Dynamical systems defined by mass-action networks}
\label{sec:dyn-sys-mass-action}

Every chemical reaction network defines a dynamical system of the form
\begin{equation}\label{eq:DiffEq}
\dot{x}=S\nu(k,x),
\end{equation} 
where $S=Y\cI$ is the \emph{stoichiometric matrix} and $\nu(k,x)$ is
the \emph{vector of reaction rates}.  It depends on the vector of
concentrations $x$ and the vector of rate constants~$k$.  

In this paper we are concerned with \emph{mass-action networks} for
which the kinetics is of \emph{mass-action form}, i.e.\ the rate of
each reaction is proportional to the product of the concentrations of
its educt complex. Thus, for mass-action networks,
\begin{equation*}
\nu(k,x) = k \hadam \phi(x),
\end{equation*}
where
$\phi(x) = (x^{\tilde{y}_1}, \ldots, x^{\tilde{y}_r})^T =
\left(x^\mathcal{Y}\right)$, and $k=(k_1,\dots,k_r)^T$ is a vector of
parameters.

\begin{example}
The stoichiometric matrix and the monomial vector $\phi(x)$ of
\ref{eq:fexample} are
  \begin{equation*}
    S= \left[
      \begin{array}{rrrrrr}
        -1& 1& 1& 0& 0& 0 \\
        -1& 1& 0& 0& 0& 1 \\
        1&-1&-1& 0& 0& 0 \\
        0& 0& 1&-1& 1& 0 \\
        0& 0& 0&-1& 1& 1 \\
        0& 0& 0& 1&-1&-1  
      \end{array}
    \right] \text{ and }
    \phi(x) = \left(
      \begin{array}{c}
        x_1 x_2 \\ 
        x_3\\
        x_3\\
        x_4 x_5\\
        x_6 \\
        x_6
      \end{array}
    \right).
  \end{equation*}
  The reaction rates are then
  \begin{equation*}
    \nu_1 = k_1 x_1 x_2,\; \nu_2 = k_2 x_3,\; \nu_3 = k_3 x_3,\; \nu_4 =
    k_4 x_4 x_5,\; \nu_5 = k_5 x_6, \text{ and } \nu_6 = k_6 x_6.
  \end{equation*}
  Consequently, the dynamics of \ref{eq:fexample} is given by the
  following system of ODEs:
\begin{equation*}
\begin{array}{ll}
  \dot x_1 = -  k_1x_1x_2 + (k_2 + k_3)x_3,\\
  \dot x_2 = -  k_1x_1x_2 + k_2x_3 + k_6x_6,\\
  \dot x_3 = \hspace{0.3cm} k_1x_1x_2 - (k_2 + k_3)x_3,\\
\end{array} \quad
\begin{array}{ll}
  \dot x_4 = \hspace{0.3cm} k_3x_3 - k_4x_4x_5 + k_5x_6,\\
  \dot x_5 = -  k_4x_4x_5 + (k_5 + k_6)x_6,\\
  \dot x_6 = \hspace{0.3cm} k_4x_4x_5 - (k_5 + k_6)x_6.\\
\end{array}
\end{equation*}
\end{example}

\subsection{Conservation relations and total concentrations}
\label{sec:con-rel}

For many reaction networks there are linear dependencies among
$\dot{x}_1, \ldots, \dot{x}_n$: they are relations of the form
$z\dot{x}=0$, where $z$ is an element of the left kernel of~$S$. If
$z\dot{x}=0$ for $z^T\in\RR^n$ then, by integrating with respect to
time, $zx$ is constant along trajectories.  These constants $zx$ are
the \emph{total concentrations} or \emph{conserved moieties}.  As, by
\eqref{eq:DiffEq}, every $z^T\in\RR^n$ with $zS=0$ yields $z\dot x=0$,
the left kernel of the stoichiometric matrix is called the
\emph{conservation space} $\cL_{\text{cons}}$.  A matrix $Z$ whose
rows are a basis of $\cL_{\text{cons}}$ is a \emph{conservation
matrix}.  
In general, every conservation matrix
defines total concentrations via
\begin{equation}
\label{eq:def_tots}
c = Z x.
\end{equation}

  As in our setting the elements of $x$ represent chemical species, we
  are usually only interested in those elements $c$ of
  (\ref{eq:def_tots}) associated to $x\in\RR_{\geq 0}^n$. We use
  the following notation to refer to these
  \begin{equation}
    \label{eq:def_imp_Z}
    \im_+(Z) = \left\{c\in\RR^{n-s} | \exists x\in\RR_{\geq 0}^n \text{
      such that } c = Z x\right \}.
  \end{equation}
  Let $x(0) \in \RR_{>0}^n$ with corresponding $c=Z x(0)$.
  If $x(0) \in \RR^n_{>0}$ is  the initial condition of the  trajectory
  $\{x(t) | t > 0\}$, then, under mass-action kinetics, the above
  discussion implies that $x(t)$ is constrained to the following polyhedron
  associated with $c$:
\begin{equation}
  \label{eq:parametricPoly}
  \cP_{c}= \{ x\in \RR^n_{\ge 0} | Z x = c\}.
\end{equation}
The set $\cP_{c}$ is known as the \emph{invariant polyhedron} with
respect to $x(0)$ \cite{alg-017}, or the
\emph{stoichiometric compatibility class} of
$x(0)$~\cite{fein-016,fein-017}.

  \begin{remark}
  For a given $c\in\RR^{n-s}$ one has $\cP_c\neq\emptyset$ if and only
  $c\in\im_+(Z)$.
  \end{remark}

\begin{example}
The conservation space $\cL_{\text{cons}}$ of \ref{eq:fexample} is
spanned by the rows of the matrix
\begin{displaymath}
Z = \left[
  \begin{array}{cccccc}
    1&0&1&0&0&0 \\
    0&0&0&0&1&1 \\
    0&1&1&1&0&1.
  \end{array}
\right]
\end{displaymath}
Consequently, \ref{eq:fexample} has three linearly independent
conservation relations and three  total concentrations $c_1$, $c_2$
and $c_3$:
  \begin{align*}
    x_1+x_3& =c_1,\\
    x_5+x_6& =c_2,\\
    x_2+x_3+x_4+x_6&=c_3.
  \end{align*}
  The values $c_1$, $c_2$ and $c_3$ can be interpreted as total amount
  of kinase, phosphatase and substrate, respectively. Further
  examples can be found in \cite{CRNT-chapter,shiu2010algebraic}.
\end{example}

\subsection{Steady states}
\label{sec:steady-states}

If $k$ and $x$ are such that 
\begin{equation}
  \label{eq:ss_eq}
  S \nu(k,x) = 0,
\end{equation}
then $x$ is a \emph{steady state}. In the mass-action networks
setting, as $\nu(k,x)$ is a vector of monomials,
equations~\eqref{eq:ss_eq} are algebraic; hence tools from algebraic
geometry are useful in the study of steady states. As $x$ is a vector
of concentrations of chemical species, only nonnegative $x$ are
chemically meaningful. Consequently, when talking about steady states,
we mean nonnegative real solutions of equations~\eqref{eq:ss_eq}.  A
steady state is \emph{positive} if all its coordinates are positive
real numbers.  It is a \emph{boundary steady state} if all coordinates
are nonnegative but it is not positive.  The \emph{steady state ideal}
$I$ is the polynomial ideal generated by the entries of $S\nu(x,k)$.
This ideal can be considered in different polynomial rings. The
parameters $k$ can be part of the indeterminates, i.e.\
$I\subset \RR[x,k]$, or appear as the variables in rational functions
that serve as coefficients.  In the second case $I \subset \RR(k)[x]$.
In both cases the \emph{steady state variety} is the zero locus of the
steady state ideal.

\begin{example}
\label{exa:SteadyStatesNp}
The equations $\dot x_i = 0$ define the steady state ideal of
\ref{eq:fexample}:
\begin{equation*}
\begin{array}{rll}
  I &=
       \langle - k_1x_1x_2 + (k_2 + k_3)x_3,
       - k_1x_1x_2 + k_2x_3 + k_6x_6,k_1x_1x_2 - (k_2 + k_3)x_3,
  \\ & \qquad \quad k_3x_3 - k_4x_4x_5 + k_5x_6,
       - k_4x_4x_5 + (k_5 + k_6)x_6,
       k_4x_4x_5 - (k_5 + k_6)x_6 \rangle\\
   &=
      \langle
      k_1x_1x_2 - (k_2 + k_3)x_3,
      k_3x_3 - k_6x_6,
     k_4x_4x_5 - (k_5 + k_6)x_6 \rangle.
\end{array}
\end{equation*}
The second equality results from elementary simplification and
omitting redundant generators.  While such simplifications are useful
to understand the geometry of steady states, the resulting
polynomials need not have a biochemical interpretation anymore.
\end{example}

When modeling chemical reaction systems, one is often interested in
questions of the form \lq Does there exist a $k$, such that \ldots
?\rq.  The following definition aims to capture such questions by
including both $x$ and $k$ as coordinates.  Following
\cite{alg-041}, we use the word variety, although strictly speaking it
is the positive real part of a variety.
\begin{defn}
\label{def:V+}
The \emph{positive steady state variety} of a reaction network $\cN$
with mass-action kinetics is
\begin{displaymath}
V^+ = \left\{ (k,x)\in\RR^r_{>0} \times \RR^n_{>0} | S\nu(k,x) = 0
\right\}.
\end{displaymath}
\end{defn}

\begin{remark}
  \label{rem:I_V+}
  It would be very interesting to systematically understand the ideal
  $\II(V^+)$ of polynomials that vanish on~$V^+$.  This ideal is
  typically much larger than the steady state ideal.  First, the steady
  state ideal need not contain all functions that vanish on its real
  variety (i.e.\ it need not be a real-radical ideal).  Real-radicals can
  be computed~\cite{neuhaus1998computation,becker1993computation}.  The
  second and more severe problem is that there is no simple method to
  determine~$\II(V^+)$, the ideal of all polynomials that vanish on the
  strictly positive part.  If the steady state equations are binomial
  equations in the $x$ variables (that is if the steady state ideal is
  binomial in $\RR(k)[x]$), then a remedy of sorts is offered at the end
  of Section~\ref{sec:existence}.
\end{remark}

Often it is possible to obtain a parametrization of $V^+$ as shown in
Example~\ref{exa:mono-para-1-site} below. Such parametrizations
simplify the study of multistationarity (which we formally define
after Example~\ref{exa:mono-para-1-site}) and are the topic of
Section~\ref{subsec:nonbin}.

\begin{example}
\label{exa:mono-para-1-site}
According to Example~\ref{exa:SteadyStatesNp}, the steady state ideal
of \ref{eq:fexample} is generated by $3$ polynomials.  Since we are
only interested in positive $x_i$, the equations that describe $V^+$
can be rearranged as
\begin{equation}
\label{eq:binom-para-1-site}
\frac{x_3}{x_6}=\frac{k_6}{k_3},\; \frac{x_1x_2}{x_3}=\frac{k_2+k_3}{k_1},\;
\frac{x_4x_5}{x_6}=\frac{k_5+k_6}{k_4}.
\end{equation} These equations can be solved as
\begin{equation}
\label{eq:mono-para-1-site}
x_3=\frac{k_1}{k_2+k_3}x_1x_2,\;
x_4=\frac{k_1k_3(k_5+k_6)}{(k_2+k_3)k_4k_6}\frac{x_1x_2}{x_5}, \;
x_6=\frac{k_1k_3}{(k_2+k_3)k_6}x_1x_2.
\end{equation}
This shows that the positive steady state variety of \ref{eq:fexample}
can be parametrized by $x_1,x_2$, and $x_5$ together with
$k_1,\ldots,k_6$.  This parametrization uses only products (and
divisions) of the~$x_i$, but no sums.  This \emph{monomial
parametrization} is crucial for the developments of
Section~\ref{subsec:nonbin}.
\end{example}

The following is the central property studied in this paper.
\begin{defn}
  \label{def:multi}
  A network $\cN$ \emph{admits multistationarity} if there are
  $k \in \RR_{>0}^r$ and $a\neq b\in\RR_{>0}^{n}$ such that
  $(k,a) \in V^+, (k,b) \in V^+$, and $a,b\in\cP_{c}$ for $c=Z
    a=Z b$.
\end{defn}
Multistationarity requires the existence of a vector of rate constants
$k$ and an affine subspace $x_0+\im(S)$ that intersects the variety
$\{x | S\nu(k,x) = 0\}$ in at least two distinct positive points.
Often it is useful to have a dual view of this variety: globally, as a
variety in $\RR^r\times\RR^n$, or as a family of varieties in $\RR^n$,
parametrized over~$k$.  The theory of multistationarity is
mathematically interesting because the existential quantifier
``$\exists k\in\RR^r_{>0}$'' can often be eliminated and equivalently
expressed without quantifiers.  Theorems~\ref{thm:multi_c_space}
and~\ref{thm:scaling_c} are instances of this phenomenon.  

\subsection{Steady states and rate constants}

  We revisit the equation (\ref{eq:ss_eq}) and observe that $S
  \nu(k,x)=0$ for $k\in\RR_{>0}^r$ and $x\in\RR_{\geq x}^n$, if and only if
  $\nu(k,x)\in\ker(S) \cap \RR_{\geq 0}^r$ (as $\nu(k,x)$ is nonnegative
  for $k\in\RR_{>0}^r$ and $x\in\RR_{\geq x}^n$). As discussed in -- among
  many other references -- \cite{conradi2012multistationarity},
  $\ker(S) \cap \RR_{\geq 0}^r$ is a pointed polyhedral cone. As such it
  is generated by finitely many generators that are unique up to scalar
  multiplication. Let $E_1,\ldots,E_d$ denote the generators and let
  $E=[E_1, \ldots, E_d]$ be the {\em cone generator matrix}. In
  particular, every generator $E_i$ is nonnegative and every element of
  the cone can be represented by a nonnegative linear combination of the
  generators \cite{lin-004}:
  \begin{displaymath}
    \nu(k,x) \in \ker(S) \cap \RR_{\geq 0}^r \Leftrightarrow
    \nu(k,x)=E\lambda\text{, for some } \lambda\in\RR_{\geq 0}^r. 
  \end{displaymath}
  And $(k,x)\in V^+$, if and only if $\nu(k,x)$ is in the (relative)
  interior of $\ker(S) \cap \RR_{\geq 0}^r$, that is if and only if
  $\nu(k,x) \in \ker(S) \cap \RR_{> 0}^r$. As suggested in
  \cite{conradi2012multistationarity}, the cone $\ker(S) \cap \RR_{>
    0}^r$ can be parametrized with the help of the following set that
  we call {\em coefficient cone}:
  \begin{equation}
    \label{eq:def_LAM}
    \Lambda(E) = \left\{ \lambda\in\RR_{\geq 0}^d | E\lambda >0 \right\}.
  \end{equation}
  See \cite[Remark~4]{conradi2012multistationarity} for more on
  $\Lambda(E)$. For future use we observe:
  \begin{remark}
    \label{rem:LAM_empty}
    $\Lambda(E)=\emptyset$ if and only if $E$ contains a zero row.
  \end{remark}

  \begin{example}
    The network~\ref{eq:fexample} has the following cone generator
    matrix and coefficient cone:
    \begin{displaymath}
      E=\left[
        \begin{array}{ccc}
          1 & 0 & 1 \\
          1 & 0 & 0 \\
          0 & 0 & 1 \\
          0 & 1 & 1 \\
          0 & 1 & 0 \\
          0 & 0 & 1 
        \end{array}
      \right] \text{ and } \Lambda(E) = \RR_{>0}^3.
    \end{displaymath}
    The first connected component of network~\ref{eq:fexample}, can be
    considered as a reaction network in its own right with $x=(x_1$,
    \ldots, $x_4)$ the concentrations of $X_1$, \ldots, $X_4$. This
    network has the following stoichiometric matrix, cone generator
    matrix and vector of reaction rates:
    \begin{displaymath}
      S = \left[
        \begin{array}{rrr}
          -1 & 1 & 1 \\
          -1 & 1 & 0 \\
          1 & -1 & -1 \\
          0 & 0 & 1
        \end{array}
      \right], 
      E=\left[
        \begin{array}{r}
          1 \\ 1 \\ 0
        \end{array}
      \right]
      \text{ and }
      \nu(k,x) = 
      \begin{pmatrix}
        k_1 x_2 x_2 \\ k_2 x_3 \\ k_3 x_3
      \end{pmatrix}
    \end{displaymath}
    In this case the coefficient cone $\Lambda(E)$ is the empty set
    (there is no nonnegative real number $\lambda$ such that
    $E\lambda>0$).  Moreover, $(k,x)$ is a steady state, if and only if
    $\nu(k,x) = E\lambda$.
    From the structure of $E$ if follows that $x_3=0$  for every steady
    state. Hence there are no positive steady states and
    $V^+=\emptyset$ in this case.
  \end{example}

  The following lemma formalizes the observation of the above example.
  \begin{lemma}
    \label{lem:Vplus_E}
    $V^+ \neq \emptyset$ if and only if $E$ does not have a zero
    row. 
  \end{lemma}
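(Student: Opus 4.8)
The plan is to reduce the claim to the nonemptiness of the coefficient cone $\Lambda(E)$ and then invoke Remark~\ref{rem:LAM_empty}, which states that $E$ has no zero row exactly when $\Lambda(E)\neq\emptyset$. Thus it suffices to show that $V^+\neq\emptyset$ if and only if $\Lambda(E)\neq\emptyset$. I will use freely the equivalences established just above the lemma: $(k,x)\in V^+$ if and only if $\nu(k,x)\in\ker(S)\cap\RR_{>0}^r$, and every element of $\ker(S)\cap\RR_{\geq 0}^r$ can be written as $E\lambda$ with $\lambda\in\RR_{\geq 0}^d$.

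For the forward direction, suppose $(k,x)\in V^+$. Then $\nu(k,x)\in\ker(S)\cap\RR_{>0}^r$, so writing $\nu(k,x)=E\lambda$ for some $\lambda\in\RR_{\geq 0}^d$ yields a vector $E\lambda=\nu(k,x)$ all of whose coordinates are strictly positive. Hence $\lambda\in\Lambda(E)$, and in particular $\Lambda(E)\neq\emptyset$.

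For the converse, the key step is to realize an arbitrary strictly positive element of $\ker(S)$ as a reaction-rate vector. Given $\lambda\in\Lambda(E)$, set $v=E\lambda\in\ker(S)\cap\RR_{>0}^r$. Since $\phi(x)=x^{\mathcal{Y}}$ is a vector of monomials, it is strictly positive on $\RR_{>0}^n$; choosing $x=\1$ gives $\phi(\1)=\1$. Defining $k_i=v_i/\phi_i(\1)=v_i>0$ then produces $\nu(k,\1)=k\hadam\phi(\1)=v$, so that $S\nu(k,\1)=Sv=0$ and therefore $(k,\1)\in V^+$.

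The argument has essentially no obstacle once the reformulation in terms of $\Lambda(E)$ is in place; the only point requiring care is the converse, which rests on the fact that for any fixed positive $x$ the map $k\mapsto\nu(k,x)=k\hadam\phi(x)$ is a positive diagonal scaling and hence surjects onto $\RR_{>0}^r$. This surjectivity is precisely what lets one convert the combinatorial condition on $E$ into an honest positive steady state, and it is also why the particularly convenient choice $x=\1$ works.
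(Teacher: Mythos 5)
Your proof is correct and follows essentially the same route as the paper's: both directions reduce to the nonemptiness of $\Lambda(E)$ via Remark~\ref{rem:LAM_empty}, and the converse realizes $E\lambda$ as a reaction-rate vector by solving for $k$ at a fixed positive $x$. Your choice $x=\1$ is just a convenient specialization of the paper's arbitrary $x\in\RR_{>0}^n$ with $k=\phi(x^{-1})\hadam E\lambda$.
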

  \begin{proof}
  Suppose $V^+\neq\emptyset$, i.e.\ there exist $(k,x)\in V^+$, i.e.\
  $\nu(k,x)=k\hadam \phi(x) \in \ker(S)\cap\RR_{>0}^r$, i.e.\
  $k\hadam\phi(x) = E\lambda$, for some $\lambda\in\Lambda(E)$, and
  thus $\Lambda(E)\neq\emptyset$. Then, by Remark~\ref{rem:LAM_empty},
  $E$ does not have a zero row. 
  Vice versa, suppose $E$ does not have a zero row.  Then
  $\Lambda(E)\neq\emptyset$, again by Remark~\ref{rem:LAM_empty}. Pick
  any $x\in\RR_{>0}^n$ and define $k=\phi(x^{-1})\hadam E \lambda$.
  Then $k\hadam\phi(x) = E\lambda$ and $(k,x)\in V^+$, i.e.\
  $V^+\neq\emptyset$.
  \end{proof}

  The cone generator matrix and the coefficient cone contain important
  information about $V^+$:

\begin{thm}
  \label{thm:k_exists}
  Let $x\in\RR^n_{>0}$ and $k\in\RR_{>0}^r$. If $E$
  does not have any zero row, then
  \begin{displaymath}
    (k,x)\in V^+ \Leftrightarrow \exists \lambda \in \Lambda(E) \text{
      such that } k=\phi\left(x^{-1}\right)\hadam E\lambda.
  \end{displaymath}
\end{thm}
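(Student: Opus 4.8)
The plan is to prove both implications directly from the characterization established just before the statement, namely that for $x\in\RR^n_{>0}$ and $k\in\RR^r_{>0}$ one has $(k,x)\in V^+$ if and only if $\nu(k,x)=k\hadam\phi(x)$ lies in $\ker(S)\cap\RR^r_{>0}$, together with the cone description $\nu(k,x)\in\ker(S)\cap\RR^r_{\ge0}$ if and only if $\nu(k,x)=E\lambda$ for some $\lambda\in\RR^d_{\ge0}$. The only algebraic identity I will need is that $\phi(x^{-1})=\phi(x)^{-1}$ coordinate-wise: since $\phi(x)_i=x^{\tilde{y}_i}=\prod_j x_j^{(\tilde{y}_i)_j}$, replacing $x$ by $x^{-1}$ negates every exponent and hence inverts each entry. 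Because $x\in\RR^n_{>0}$, every entry of $\phi(x)$ is a product of positive numbers and is therefore strictly positive, so $\phi(x)$ is invertible under the Hadamard product and $\phi(x^{-1})\hadam\phi(x)=\1$.

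For the forward implication I would start from $(k,x)\in V^+$, which gives $k\hadam\phi(x)\in\ker(S)\cap\RR^r_{>0}$. The cone description then yields $\lambda\in\RR^d_{\ge0}$ with $k\hadam\phi(x)=E\lambda$, and strict positivity of $k\hadam\phi(x)$ forces $E\lambda>0$, i.e.\ $\lambda\in\Lambda(E)$ by the definition \eqref{eq:def_LAM}. Multiplying the identity $k\hadam\phi(x)=E\lambda$ entry-wise by $\phi(x)^{-1}=\phi(x^{-1})$ and using $\phi(x^{-1})\hadam\phi(x)=\1$ together with commutativity of the Hadamard product then isolates $k=\phi(x^{-1})\hadam E\lambda$, as desired.

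For the converse I would assume such a $\lambda\in\Lambda(E)$ exists and substitute $k=\phi(x^{-1})\hadam E\lambda$ into the reaction rates. Associativity and commutativity of the Hadamard product give $\nu(k,x)=k\hadam\phi(x)=\bigl(\phi(x^{-1})\hadam\phi(x)\bigr)\hadam E\lambda=E\lambda$. Since each column $E_i$ lies in $\ker(S)\cap\RR^r_{\ge0}$, being one of its generators, we have $E\lambda=\sum_i\lambda_iE_i\in\ker(S)$; moreover $\lambda\in\Lambda(E)$ means $E\lambda>0$. Thus $\nu(k,x)\in\ker(S)\cap\RR^r_{>0}$, which by the characterization above is exactly $(k,x)\in V^+$.

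I do not expect a genuine obstacle here: the content is bookkeeping with Hadamard products, and the two substantive facts (that $\phi(x)$ is positive and invertible, and that $\phi(x^{-1})=\phi(x)^{-1}$) are immediate from the monomial definition of $\phi$. The hypothesis that $E$ has no zero row is what makes the equivalence interesting rather than vacuous: by Lemma~\ref{lem:Vplus_E} it is equivalent to $V^+\neq\emptyset$, and by Remark~\ref{rem:LAM_empty} it is equivalent to $\Lambda(E)\neq\emptyset$, so under this hypothesis both sides of the stated equivalence can actually be satisfied. The one place to stay careful is to confirm that the $k$ produced in the converse is indeed strictly positive, matching the standing assumption $k\in\RR^r_{>0}$; this holds because $\phi(x^{-1})>0$ and $E\lambda>0$.
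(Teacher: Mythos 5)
Your proposal is correct and follows essentially the same route as the paper's own proof: both directions go through the representation $k\hadam\phi(x)=E\lambda$ with $\lambda\in\Lambda(E)$, using that $\ker(S)\cap\RR^r_{>0}$ is exactly the set of vectors $E\lambda$ with $E\lambda>0$, and then invert $\phi(x)$ entrywise. You are merely more explicit than the paper about the Hadamard bookkeeping ($\phi(x^{-1})=\phi(x)^{-1}$, positivity of the resulting $k$), which is fine but adds nothing substantively new.
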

\begin{proof}
  By Definition~\ref{def:V+}, $(k,x) \in V^+$ if and only if
  $S (k \hadam \phi (x)) = 0$ and $k\in\RR^r_{>0}$, $x\in\RR_{>0}^n$.\\
  $\Rightarrow)$ Every element of $\ker(S) \cap \RR_{>0}^r$ is of the
  form $E \lambda$ for some $\lambda \in \Lambda(E)$. Then, if
  $S (k \hadam \phi(x)) = 0$, there is a $\lambda \in \Lambda(E)$ such
  that $k \hadam \phi \left( x \right) = E \lambda$. Hence
  $k = \phi \left( x^{-1} \right) \hadam E \lambda$.\\
  $\Leftarrow)$ If $k = \phi \left( x^{-1} \right) \hadam E \lambda$ for
  some $\lambda \in \Lambda(E)$, then
  $k \hadam \phi \left( x \right) = E \lambda$.  As
  $\forall \lambda \in \Lambda(E)$,
  $E \lambda \in \ker(S) \cap \RR_{>0}^r$, $S (k \hadam \phi (x)) =
  0$. Hence $(k,x) \in V^+$.
\end{proof}

\begin{cor}
  \label{cor:all_x_are_stst}
  If $E$ does not have any zero row then, for every
  $x \in \RR^n_{>0}$, there is a $k \in \RR^r_{>0}$ such that
  $(k,x)\in V^+$.
\end{cor}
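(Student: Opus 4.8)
The plan is to derive the corollary directly from Theorem~\ref{thm:k_exists} by exhibiting an explicit $k$, rather than proving anything new. Fix an arbitrary $x\in\RR^n_{>0}$. Since $E$ has no zero row, Remark~\ref{rem:LAM_empty} guarantees that the coefficient cone $\Lambda(E)$ is nonempty, so I can choose some $\lambda\in\Lambda(E)$. This $\lambda$ will be the data from which I build the desired rate constants.

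Next I would set $k\defas\phi(x^{-1})\hadam E\lambda$ and check that this candidate lies in $\RR^r_{>0}$. The factor $E\lambda$ is strictly positive by the defining inequality of $\Lambda(E)$ in \eqref{eq:def_LAM}, and $\phi(x^{-1})$ is a vector of monomials evaluated at a strictly positive point, hence also strictly positive. The Hadamard product of two strictly positive vectors is strictly positive, so $k\in\RR^r_{>0}$ as required.

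Finally, with this $k$ and the chosen $x$, the identity $k=\phi(x^{-1})\hadam E\lambda$ is exactly the hypothesis of the $\Leftarrow$ direction of Theorem~\ref{thm:k_exists}. Applying that theorem yields $(k,x)\in V^+$, which is the assertion. I do not expect any genuine obstacle here: essentially all the content is already packaged in Theorem~\ref{thm:k_exists}, and the only point to verify beyond invoking it is the strict positivity of the constructed $k$, which is immediate from the strict inequality built into the definition of $\Lambda(E)$. Indeed, the very same construction appears in the second half of the proof of Lemma~\ref{lem:Vplus_E}, so this corollary is best viewed as a restatement emphasizing that \emph{every} positive $x$ arises as a steady state for a suitable $k$.
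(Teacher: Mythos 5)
Your proposal is correct and follows essentially the same route as the paper: the paper's own proof also picks $\lambda\in\Lambda(E)$ (nonempty since $E$ has no zero row), defines $k=\phi(x^{-1})\hadam E\lambda$, and concludes $(k,x)\in V^+$ from $k\hadam\phi(x)=E\lambda\in\ker(S)\cap\RR^r_{>0}$. Your explicit check that $k$ is strictly positive is a small additional detail the paper leaves implicit.
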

\begin{proof}
    Suppose $E$ does not have a zero row. Then $V^+\neq\emptyset$ by
    Lemma~\ref{lem:Vplus_E}. Pick any $\lambda\in\Lambda(E)$ and
    define $k=\phi(x^{-1})\hadam E \lambda$. Then $k\hadam \phi(x) =
    E\lambda$ which is equivalent to
    $\nu(k,x)\in\ker(S)\cap\RR_{>0}^r$.  Hence $(k,x)\in V^+$.
\end{proof}

  \begin{remark}
    Theorem~\ref{thm:k_exists} shows in particular that under the (very
    mild) assumption that $E$ does not have any zero row, for every
    positive $x$, one can find positive $k$ such that $(k,x)\in V^+$.  If
    $E$ does have zero rows, then there are no positive steady states and
    $V^+$ is empty. Hence from here on we only consider reaction
    networks $\cN$ where the cone generator matrix $E$ does not have any
    zero row.
  \end{remark}

\section{Monomial parametrizations of positive steady states}
\label{subsec:nonbin}

In this section we consider a mass-action network $\cN$ on $n$ species
and $r$ reactions, with at least one conservation relation. 
  In Section~\ref{subsec:qechambers} we use the results of this
  section to deduce conditions for multistationarity in the space of
  total concentrations for a network describing the distributive
  phosphorylation of a protein.

Let $S$ and $Z$ denote the stoichiometric and a conservation matrix of
$\cN$ respectively.  We study the consequences of the existence of
monomial parametrizations for the positive steady state variety
of~$\cN$.  Following \cite{inj-006}, the positive steady state variety
admits a monomial parametrization if suitable Laurent monomials in the
concentrations can be expressed in terms of the reaction rates
(cf.\cite[Section~3.2]{inj-006}.  Such systems can be diagonalized
using monomial transformations.  The following definition captures
what was observed in Example~\ref{exa:mono-para-1-site}.

\begin{defn}
  \label{def:mono}
  The positive steady state variety $V^+$ \emph{admits a monomial
    parametrization} if there is a matrix $M\in\ZZ^{n\times d}$ of rank
  $p<n$ and a rational function $\gamma$ in the variables
  $k_1,\dots,k_r$ with values in $\RR^d$ such that, for all
  $(k,x) \in \RR^r_{>0} \times \RR^n_{>0}$,
  \begin{displaymath}
    (k,x) \in V^+ \Leftrightarrow \text{$\gamma(k)$ is defined and }
    x^{M} = \gamma(k).
  \end{displaymath}
\end{defn}

In Definition~\ref{def:mono}, the matrix $M$ is understood as part of
saying \emph{admits a monomial parametrization}.  In the following, if
$V^+$ admits a monomial parametrization and a matrix $M$ appears, then
it is the matrix in that definition.

The existence of a monomial parametrization implies that all positive
steady states can be recovered from monomial transformations of one
positive steady state.  In algebraic geometry, a variety which
equals the closure of an algebraic torus acting on the variety is
known as a toric variety.  Affine toric varieties are cut out by
binomial equations such as those in Definition~\ref{def:mono}.

Equation~\eqref{eq:binom-para-1-site} of
Example~\ref{exa:mono-para-1-site} shows that the
network~\ref{eq:fexample} admits a monomial para\-metri\-zation according
to Definition~\ref{def:mono}.  By introducing two matrices $M^+$ and
$M^-$ with nonnegative entries, of appropriate dimension, such that
\begin{equation}
\label{eq:defM_plus-minus}
M = M^+ - M^-,
\end{equation}
and extracting numerators and denominators of the rational function
$\gamma(k)$ as follows
\begin{equation}
\label{eq:def_g_plus_minus}
\gamma^\pm(k) = (\gamma^\pm_i(k))_i, \text{ where }
\gamma_i(k) = \frac{\gamma_i^-(k)}{\gamma_i^+(k)},
\end{equation}
we can write the system of Definition~\ref{def:mono} as a binomial
system:
\begin{equation}
\label{eq:def_bino_plus_minus}
\gamma^+(k) \hadam x^{M^+} - \gamma^-(k) \hadam x ^{M^-} = 0.
\end{equation}
\begin{remark}
\label{rem:ES96}
In \eqref{eq:def_g_plus_minus} the coefficients $\gamma^\pm(k)$
usually have many terms, hence \eqref{eq:def_g_plus_minus} is binomial
only in~$x$.  As a consequence of \cite[Theorem~2.1]{ES96}, the ideal
$\<x^M - \gamma(k)\> \subset \RR(k)[x^\pm]$ is a complete
intersection.  This means that there exists a generating set
of $\<x^M - \gamma(k)\>$ in which $M$ has full rank.  In the following
we assume that $M$ is of full rank.
\end{remark}
\begin{remark}
Our Definition~\ref{def:mono} is equivalent to asking that the ideal
$\II(V^+)$, considered in the ring $\RR(k)[x]$, is generated by the
binomials \eqref{eq:def_bino_plus_minus}.  As there may well be several
generating sets of binomials, neither the coefficients
$\gamma^{\pm}(k)$ nor the matrices $M^\pm$ need be unique.  In
\cite{binomialCore} a similar situation is considered: the ideal
defined by the polynomials $S \nu(k,x)=0$ is generated by binomials in
the ring $\RR(k)[x]$. Our definition is slightly more general, as it
might happen that even though $S \nu(k,x)=0$ is not generated by
binomials, the ideal $\II(V^+)$ is.
\end{remark}
Given the binomials~\eqref{eq:def_bino_plus_minus}, we can now define
the positive values of $k$ where the vector $\gamma(k)$ of
Definition~\ref{def:mono} is defined: the system
\eqref{eq:def_bino_plus_minus} can only be satisfied by positive $k$
and $x$ if the coefficients $\gamma^\pm(k)$ are nonzero and of the
same sign, that is if $k$ is contained in the semi-algebraic set
\begin{equation}
\label{eq:L_gamma}
\cK^+_\gamma := \left\{ k\in\RR_{>0}^r | \gamma_i^+(k)\; \cdotp \gamma_i^-(k) >
  0, i = 1, \ldots, p \right\}.
\end{equation}
In particular, if $k\notin \cK^+_\gamma$, then there does not exist a
vector $x\in\RR_{>0}^n$ such that $(k,x)\in V^+$.

The next few lemmata make the monomial parametrization explicit in our
setting. 
\begin{lemma}
\label{lem:fibre}
If $V^+\!$ admits a monomial parametrization and, for $q<n$,
$A\in\QQ^{q\times n}$ is any matrix of maximal rank $q$ such that
$A M = 0$, then:
\begin{enumerate}[label={(\roman*)}]
\item\label{it:l1a}
$(k,x)\in V^+ \Leftrightarrow (k,x\hadam \xi^A) \in V^+$,
$\forall \xi \in \RR_{>0}^q$,
\item\label{it:l1b} $(k,x)\in V ^+ \Leftrightarrow (k,x\hadam
(e^{\kappa})^A) \in V^+$,
$\forall \kappa \in \RR^q$.
\end{enumerate}
\end{lemma}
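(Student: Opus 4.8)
The plan is to reduce everything to the defining condition of a monomial parametrization, namely that for $(k,x)\in\RR_{>0}^r\times\RR_{>0}^n$ we have $(k,x)\in V^+$ if and only if $\gamma(k)$ is defined and $x^M=\gamma(k)$. Since the rate vector $k$ is untouched in both claimed equivalences, the entire content is a statement about how the monomial $x^M$ transforms when $x$ is multiplied coordinate-wise by $\xi^A$ (respectively $(e^\kappa)^A$). So the core computation is to show that multiplying $x$ by these factors leaves $x^M$ unchanged, which is exactly where the hypothesis $AM=0$ enters.

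First I would prove \ref{it:l1a}. Fix $\xi\in\RR_{>0}^q$ and set $y=x\hadam\xi^A$. Using that the $x^M$-operation turns coordinate-wise products into products of monomials, one has $y^M=(x\hadam\xi^A)^M=x^M\hadam(\xi^A)^M$. The key identity is $(\xi^A)^M=\xi^{AM}$, which follows by writing out the exponents: the $j$-th coordinate of $(\xi^A)^M$ has exponent vector equal to the $j$-th column of $AM$. Since $AM=0$, this gives $\xi^{AM}=\xi^{\0}=\1$, hence $y^M=x^M$. Because $\gamma(k)$ does not depend on $x$, the condition ``$\gamma(k)$ is defined and $x^M=\gamma(k)$'' holds for $(k,x)$ if and only if it holds for $(k,y)$; by Definition~\ref{def:mono} this is precisely $(k,x)\in V^+\iff(k,x\hadam\xi^A)\in V^+$. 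Note that $y$ remains strictly positive because $x\in\RR_{>0}^n$ and every entry of $\xi^A$ is a positive real, so we stay inside the domain where Definition~\ref{def:mono} applies.

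Part \ref{it:l1b} is then immediate by specialization: given $\kappa\in\RR^q$, the vector $\xi=e^\kappa$ lies in $\RR_{>0}^q$ (the exponential is applied coordinate-wise and is always positive), so \ref{it:l1b} is just \ref{it:l1a} applied to this particular $\xi$. One could equivalently phrase the computation additively, taking logarithms so that $\ln(x\hadam(e^\kappa)^A)^M=M^T\ln x+M^TA^T\kappa=M^T\ln x$ since $M^TA^T=(AM)^T=0$, but invoking \ref{it:l1a} is cleaner.

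I do not expect a genuine obstacle here; the lemma is essentially a bookkeeping statement about the exponent arithmetic of the $x^A$-notation introduced in Section~\ref{sec:notation}. The only point requiring care is the convention for $x^M$ versus $x^A$: one must apply the rules consistently so that $(\xi^A)^M=\xi^{AM}$ comes out correctly (this is the ``associativity'' of the monomial operation, reflecting that $\ln$ intertwines it with ordinary matrix multiplication), and one must check that all intermediate vectors stay strictly positive so that Definition~\ref{def:mono} is applicable throughout. Existence of a matrix $A$ of maximal rank $q$ with $AM=0$ is guaranteed since $M$ has rank $p<n$, so its left kernel has dimension $n-p\ge 1$; any such $A$ with $q\le n-p$ works.
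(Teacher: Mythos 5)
Your proposal is correct and follows essentially the same route as the paper: reduce both sides to the defining condition $x^M=\gamma(k)$ of the monomial parametrization, observe that $(x\hadam\xi^A)^M=x^M\hadam\xi^{AM}=x^M$ because $AM=0$, and obtain \ref{it:l1b} from \ref{it:l1a} by substituting $\xi=e^\kappa$. The extra bookkeeping you supply (the identity $(\xi^A)^M=\xi^{AM}$ and the positivity of $x\hadam\xi^A$) is exactly what the paper's shorter proof leaves implicit.
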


\begin{proof} 
As $V^+$ admits a monomial parametrization, the left hand side of
\ref{it:l1a} is equivalent to $x^{M} = \gamma(k)$ and the right hand
side is equivalent to $(x\hadam\xi^A)^{M}=\gamma(k)$.  As $AM=0$,
these are equivalent:
$\left(x\hadam \xi^A \right)^{M}= x^{M}$.  Item
\ref{it:l1b} follows from \ref{it:l1a} by replacing $\xi$
with~$e^\kappa$.
\end{proof}

By Lemma~\ref{lem:fibre},
given a pair $(k,x) \in V^+$, one obtains all $\tilde x$ with
$(k,\tilde x) \in V^+$ from $x$ with the help of the left kernel of
$M$. In the following lemma we show that by choosing a special basis
of the left kernel of $M$, one can make the connection between $x$ and
$k$ in the solution of $S \nu(k,x)=0$ explicit.

\begin{lemma}
\label{lem:psi-rep}
Assume $V^+$ admits a monomial parametrization. Then there are
\begin{itemize}
\item a matrix $A\in\QQ^{(n-p)\times n}$ of rank $n-p$ such that
$AM = 0$,
\item a function $\psi:\mathcal{K}_\gamma^+ \to \RR^n$, and
\item an exponent $\eta\in\ZZ_{>0}$,
\end{itemize}
such that $\psi^\eta$ is a rational function and
\begin{displaymath}
(k,x) \in V^+ \Leftrightarrow k\in \mathcal{K}^+_\gamma \text{ and } \; \exists\,
\xi \in\RR_{>0}^{n-p} \text{ such that } x = \psi(k)\hadam \xi^A.
\end{displaymath}
\end{lemma}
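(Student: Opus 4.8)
The plan is to construct the matrix $A$, the function $\psi$, and the exponent $\eta$ explicitly from the monomial parametrization $x^M = \gamma(k)$, then verify the claimed equivalence. The key idea is that the equation $x^M = \gamma(k)$ has a particular solution $x = \psi(k)$ (one representative point on the fibre), and Lemma~\ref{lem:fibre} tells us that the entire fibre is obtained by multiplying by $\xi^A$ where $A$ spans the left kernel of $M$. So I need (a) a basis $A$ of the left kernel of $M$, and (b) a single explicit solution $\psi(k)$ of the binomial system as a function of $k$.

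**First I would** choose $A$. Since $M \in \ZZ^{n \times d}$ has full rank $p < n$ (by Remark~\ref{rem:ES96}), its left kernel $\{y \in \QQ^n \mid yM = 0\}$ has dimension $n-p$. I take $A \in \QQ^{(n-p)\times n}$ to be any matrix whose rows form a basis of this left kernel, so $A$ has rank $n-p$ and $AM = 0$, as required. This is the $A$ from Lemma~\ref{lem:fibre}.

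**Next I would** build $\psi$. After reindexing coordinates, full rank $p$ of $M$ means some $p$ rows of $M$ are linearly independent; say $M$ splits into a $p \times d$ block $M_1$ of rank $p$ and the rest. Because $M_1$ has rank $p$, there is a left inverse over $\QQ$, i.e.\ a matrix $N \in \QQ^{d \times p}$ with $M_1 N = I_p$. I then attempt to solve $x^{M} = \gamma(k)$ by setting the $n-p$ ``free'' coordinates of $x$ to $1$ and solving for the remaining $p$ coordinates; concretely the candidate is
\begin{equation*}
\psi(k) := \bigl(\gamma(k)^{N},\, \1\bigr)
\end{equation*}
in the split coordinates, where $\gamma(k)^N$ denotes the vector-power notation of the paper. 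One checks $\psi(k)^{M_1} = (\gamma(k)^N)^{M_1} = \gamma(k)^{N M_1}$; since the identity $M_1 N = I_p$ gives $N M_1$ acting as the identity on the relevant index set (after using $AM=0$ to relate the dependent rows of $M$ to $M_1$), this yields $\psi(k)^M = \gamma(k)$, so $(k,\psi(k)) \in V^+$ whenever $k \in \cK^+_\gamma$. The rationality claim is handled by the exponent $\eta$: the entries of $N$ are rational numbers with a common denominator $\eta \in \ZZ_{>0}$, so although $\psi$ itself involves $\eta$-th roots, the power $\psi^\eta$ has integer exponents in the $\gamma_i(k)$ and is therefore a rational function of $k$, as claimed.

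**The main obstacle** will be the bookkeeping that makes $\psi^\eta$ rational while keeping $\psi(k)$ a genuine (real, positive) solution for every $k \in \cK^+_\gamma$. Taking $\eta$-th roots is only legitimate because on $\cK^+_\gamma$ the relevant combinations of $\gamma_i^+(k)$ and $\gamma_i^-(k)$ are strictly positive (by the definition~\eqref{eq:L_gamma}), so positive real roots exist; I would emphasize this positivity rather than just treating $\psi$ formally. Once $\psi$ is in hand, the equivalence is routine: given $(k,x)\in V^+$ we have $k \in \cK^+_\gamma$ (else no positive solution exists, as noted after~\eqref{eq:L_gamma}), and since $x$ and $\psi(k)$ both lie in the same fibre of $x \mapsto x^M$, Lemma~\ref{lem:fibre}\ref{it:l1a} guarantees $x = \psi(k) \hadam \xi^A$ for a suitable $\xi \in \RR_{>0}^{n-p}$; conversely any such $x$ satisfies $x^M = \psi(k)^M = \gamma(k)$ because $AM = 0$, hence $(k,x) \in V^+$.
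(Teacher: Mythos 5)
Your construction is correct and is essentially the paper's own argument: the paper also builds $\psi(k)=\gamma(k)^{[U\,|\,0]}$ with $U$ a (rational) inverse of the block of $p$ independent rows of $M$ (your $N$), sets the free coordinates to $1$, takes $\eta$ to clear the denominators of that inverse, and verifies the equivalence via the log-linear system; the only cosmetic difference is that the paper fixes the specific kernel basis $A=[W\,|\,I_{n-p}]$ so that $\xi$ coincides with actual coordinates of $x$, whereas you allow an arbitrary basis of the left kernel, which equally satisfies the statement.
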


\begin{proof}
  As in Remark~\ref{rem:ES96}, consider the ideal
  $\<x^M - \gamma(k)\> \subset \RR(k)[x^\pm]$.
  By\cite[Theorem~2.1]{ES96}, this ideal is a complete intersection and
  we can find a generating set in which $M$ has full rank and format
  $n\times p$ for a suitable~$\gamma$. 
    In the following we assume that $M$ is ordered such that the first
    $p$ rows are linearly independent. (Note that this can always be
    achieved by a suitable reordering of the variables $x$.)
  Then there is an invertible matrix
  $U\in\QQ^{p\times p}$ such that
  \begin{displaymath}
    MU =
    \begin{bmatrix}
      I_p \\
      -W 
    \end{bmatrix},
  \end{displaymath}
  where $W$ is of format $(n-p)\times p$.
    Let
    \begin{displaymath}
      A = [W | I_{n-p}] \text{ and } \psi(k) = \gamma(k)^{[U | 0_{p\times
          n-p}]}.
    \end{displaymath}
    We now argue that $A$ and $\psi(k)$ have the properties stated above:
    first, $AM = 0$ since $AMU = 0$ and $U$ is invertible. Second,
    only if $k\in\cK^+_\gamma$, there exists $x\in\RR_{>0}^n$ such that
    $(k,x)\in V^+$. Hence we only need to consider $k\in\cK^+_\gamma$.
    As $\gamma_i^\pm(k)\neq 0$ for $k\in\cK^+_\gamma$ by
    \eqref{eq:L_gamma} and, as powers of the entries $\psi_i(k)$ are products of
    $\gamma_i^\pm$, the function $\psi(k)$ is well defined on $\cK^+_\gamma$.
    And third, as $\gamma(k)$ is rational, the coordinate-wise power
    $\psi^{\eta}(k)$ is rational when $\eta$ is the least common
    multiple of the denominators in~$U$.

  Next we turn to the equivalence:
  according to Definition~\ref{def:mono},
  \begin{displaymath}
    (k,x) \in V^+ \Leftrightarrow x^{M} = \gamma(k) \text{ and } k\in\cK_\gamma^+.
  \end{displaymath}
  In the following calculations we take logarithms on both sides of the
  equation above.  This is well defined, if we require
    $k\in\cK^+_\gamma$ which implies $\gamma(k)>0$.
  Taking logarithms we get
  \begin{displaymath}
    M^T\cdot(\ln x) = \ln \gamma(k) \Leftrightarrow U^TM^T(\ln x) =
    U^T(\ln \gamma(k)).
  \end{displaymath}
  Decompose $x$ into $x' = (x_1,\dots,x_p)^T$ and
  $\xi = (x_{p+1},\dots,x_n)^T$. 
    As $U^TM = [I_p | -W^T]$ the above equivalence is
  \begin{align*}
    (k,x) \in V^+ 
    \Leftrightarrow
    \ln {x}' - W^T\cdot(\ln  \xi) = U^T\cdot(\ln \gamma(k))
    \Leftrightarrow     
    {x}' = \gamma(k)^{U} \hadam \xi^W .
  \end{align*}
    Using $x=(x',\; \xi)$ and the above matrix $A$ together with the
    vector $\psi(k)$ we obtain the final equivalence
    $(k,x) \in V^+ 
      \Leftrightarrow x = \psi(k) \hadam \xi^A.
      $
\end{proof}

  \begin{remark}
    \begin{enumerate}[label={(\roman*)}]
    \item For fixed $k\in\mathcal{K}^+$, the matrix $A$ in
      Lemma~\ref{lem:psi-rep} captures all information about the
      parametrization.  We call $A$ the \emph{exponent matrix} of the
      parametrization. 
    \item Choosing $\xi_i=1$ for all $i$, one obtains $(k,\psi(k))\in V^+$,
      i.e.\ the vector $\psi(k)$ is a (positive) solution of
      the equation $S\nu(k,x)=0$ for a given vector~$k$.
    \item In the proof of Lemma~\ref{lem:psi-rep}, the coordinates
      $\xi$ are elements of $x$, i.e.\ the $\xi_i$ correspond to
      variables of the system and thus have a biological
      meaning. Moreover, usually there are several orderings of
      variables one can choose from when constructing the matrix $A$
      in the proof of Lemma~\ref{lem:psi-rep}. One strategy would be
      to choose an ordering that yields $\xi_i$ that correspond to
      measurable species.
    \end{enumerate}
  \end{remark}

The following example is an illustration of
Lemma~\ref{lem:psi-rep} and the steps taken in its proof.

\begin{example}
    Going back to Example~\ref{exa:mono-para-1-site}, 
  equations~\eqref{eq:binom-para-1-site} can be expressed as
\[
x^M = \gamma(k),
\]
where
\[
M = \left[
\begin{array}{rrrrrr}
  0 & 0 & 1 & 0 & 0 &-1 \\
  1 & 1 &-1 & 0 & 0 & 0 \\
  0 & 0 & 0 & 1 & 1 &-1 
\end{array} \right]^T
\text{ and } \
\gamma(k) =
\left( 
\frac{k_6}{k_3},
\frac{k_2+k_3}{k_1},
\frac{k_5+k_6}{k_4}
\right)^T.
\]
As numerators and denominators of $\gamma(k)$ are sums of positive
monomials, one has $\cK^+_\gamma=\RR_{>0}^6$, that is, the monomial
parametrization is valid for all positive $k$.  For example, for the
matrix
\begin{displaymath}
  U = \left[
    \begin{array}{rrr}
      0 &-1 &-1 \\
      -1&-1 &-1 \\
      0 & 0 & 1
    \end{array} \right],
\text{ one obtains }
  M U = \left[
    \begin{array}{rrr}
      -1&-1&-1 \\
      -1&-1&-1 \\
      1&0&0 \\
      0&0&1 \\
      0&0&1 \\
      0&1&0 
    \end{array}
  \right],
\end{displaymath}
which in the ordering $(x_3,x_6,x_4,x_1,x_2,x_5)^T$ is equivalent to
\begin{displaymath}
  \left[
    \begin{array}{r}
      I_3 \\
      -W
    \end{array}
  \right] \text{ with }
  W = \left[
    \begin{array}{rrr}
      1 & 1 & 1 \\
      1 & 1 & 1 \\
      0 & 0 &-1
   \end{array} \right].
\end{displaymath}
  As in the proof of Lemma~\ref{lem:psi-rep}, we obtain
  \begin{displaymath}
    A = [W | I_3] = \left[
      \begin{array}{rrrrrr}
      1 & 1 & 1 & 1 & 0 & 0 \\
      1 & 1 & 1 & 0 & 1 & 0 \\
      0 & 0 &-1 & 0 & 0 & 1
      \end{array}
    \right].
  \end{displaymath}
  For $\psi(k)$ we obtain 
  \begin{displaymath}
    \psi(k) = \gamma(k)^{[U\, | \, 0_{3\times 3}]} = 
    \left(
      \frac{k_1}{k_2+k_3},
      \frac{k_3}{k_6} \frac{k_1}{k_2+k_3},
      \frac{k_3}{k_6} \frac{k_1}{k_2+k_3}  \frac{k_5+k_6}{k_4}, 
      1,1,1
    \right)^T.
  \end{displaymath}
  In this case $\psi(k)$ is already a rational function, as the the matrix
  $U$ contains only integer entries and the least common multiple of
  the denominators in $U$ therefore is $\eta=1$.  If $U$ was
  not an integer matrix but contained rational entries, then $\psi(k)$
  would not be a rational function as it would contain entries with
  rational exponents. In this case only $\psi(k)^\eta$ would be
  rational as taking entries of $\psi$ with rational exponents to
  the power $\eta$ yields integer exponents.
Then, with $\xi=(\xi_1,\xi_2,\xi_3)^T$,
\begin{multline*}
  (x_3,x_6,x_4,x_1,x_2,x_5)^T 
  = \psi(k) \hadam \xi^A \\
  = 
  \left(
    \frac{k_1}{k_2+k_3}\xi_1\xi_2,\;
    \frac{k_1k_3}{(k_2+k_3)k_6}\xi_1\xi_2,\;
    \frac{k_1k_3(k_5+k_6)}{(k_2+k_3)k_4k_6}\frac{\xi_1\xi_2}{\xi_3},\;
    \xi_1,\;
    \xi_2,\;
    \xi_3
  \right)^T.
\end{multline*}
\end{example}

  As explained above, if $V^+$ admits a monomial parametrization and
  if $(k,x)\in V^+$, then there are infinitely many $\tilde x$ with
  $(k,\tilde x)\in V^+$. The following result describes the connection
  between an arbitrary pair $a$ and $b$ of these.

\begin{lemma}
\label{lem:mono-connect}
If $V^+$ admits a monomial parametrization with exponent matrix
$A\in\QQ^{(n-p)\times n}$ and $k\in\cK_\gamma^+$ and
$a \neq b \in\RR_{>0}^{n}$ are such that $(k,a)\in V^+$ and
$(k,b) \in V^+$, then
\begin{enumerate}[label={(\roman*)}]
\item\label{it:l2a} $\exists\, \xi\neq \1 \in\RR_{>0}^{n-p}$ such that
  $b=a\hadam \xi^A$, 
\item\label{it:l2b} $\exists\, 0\neq \mu\in\rs(A)$ such that
$b=a\hadam e^\mu$.
\end{enumerate}
\end{lemma}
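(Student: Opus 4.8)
The plan is to read off both claims directly from the explicit parametrization in Lemma~\ref{lem:psi-rep}, deducing part~\ref{it:l2b} from part~\ref{it:l2a} by passing to logarithms. The structural fact I would use throughout is that, for fixed $A$, the map $\xi\mapsto\xi^A$ is a homomorphism from $(\RR_{>0}^{n-p},\hadam)$ to $(\RR_{>0}^{n},\hadam)$: comparing the coordinate formula gives $(\zeta_1\hadam\zeta_2)^A=\zeta_1^A\hadam\zeta_2^A$ and $(\zeta^{-1})^A=(\zeta^A)^{-1}$.

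For part~\ref{it:l2a} I would apply Lemma~\ref{lem:psi-rep} twice. Since $k\in\cK_\gamma^+$ and both $(k,a),(k,b)\in V^+$, that lemma produces $\xi_a,\xi_b\in\RR_{>0}^{n-p}$ with $a=\psi(k)\hadam\xi_a^A$ and $b=\psi(k)\hadam\xi_b^A$. As $\psi(k)$ and $a$ have strictly positive coordinates, I can divide coordinate-wise and cancel the common factor $\psi(k)$ to get $b=a\hadam\bigl(\xi_b^A\hadam(\xi_a^A)^{-1}\bigr)$. By the homomorphism property this equals $a\hadam\xi^A$ for $\xi\defas\xi_b\hadam\xi_a^{-1}$. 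It then remains to see $\xi\neq\1$: if $\xi=\1$ then $\xi^A=\1$ and hence $b=a$, contradicting $a\neq b$. This settles part~\ref{it:l2a}.

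For part~\ref{it:l2b} I would set $\mu\defas A^T\cdot(\ln\xi)$ with $\xi$ as in part~\ref{it:l2a} and $\ln\xi$ the column vector of coordinate-wise logarithms. By the identity $\ln\xi^A=A^T\cdot(\ln\xi)$ recorded in the notation section, $e^\mu=\xi^A$, so $b=a\hadam\xi^A=a\hadam e^\mu$. Moreover $\mu=A^T(\ln\xi)\in\im(A^T)=\rs(A)$. Finally $\mu\neq0$: because $A$ has full row rank $n-p$, the transpose $A^T$ is injective, so $\mu=0$ would force $\ln\xi=0$, i.e.\ $\xi=\1$, contradicting part~\ref{it:l2a}.

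I do not expect a genuine obstacle here; the content is entirely bookkeeping. The one place to be careful is the nondegeneracy chain --- translating $a\neq b$ into $\xi\neq\1$ and then into $\mu\neq0$ --- which rests on the injectivity of $\xi\mapsto\xi^A$, equivalently on the exponent matrix $A$ having full row rank, a property guaranteed by Lemma~\ref{lem:psi-rep}. Keeping the row/column conventions for $\xi^A$ and $\ln\xi^A$ consistent, so that $\rs(A)=\im(A^T)$ is the correct ambient space for $\mu$, is the only other point needing attention.
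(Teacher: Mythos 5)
Your proof is correct and follows essentially the same route as the paper's: both apply Lemma~\ref{lem:psi-rep} twice, cancel the common factor $\psi(k)$ to obtain $b=a\hadam\xi^A$ with $\xi=\xi_b\hadam\xi_a^{-1}$, and pass to logarithms for part~\ref{it:l2b}. The only difference is that you spell out the nondegeneracy chain ($a\neq b\Rightarrow\xi\neq\1\Rightarrow\mu\neq0$, via full row rank of $A$) explicitly, which the paper's proof leaves implicit.
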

\begin{proof} \ref{it:l2a} It follows from Lemma~\ref{lem:psi-rep}
that there are $\xi_1, \xi_2\in\RR_{>0}^{n-p}$ such that
$a=\psi(k)\hadam \xi_1^A$ and $b=\psi(k)\hadam \xi_2^A$. Then,
$\psi(k) = a\hadam \xi_1^{-A}$ and
$b=a\hadam \xi_1^{-A}\hadam \xi_2^A = a \hadam \xi^A$ with
$\xi = \frac{\xi_2}{\xi_1}$. Item \ref{it:l2b} follows from
\ref{it:l2a} by replacing $\xi^A$ with $(e^{\ln(\xi)})^A$.
\end{proof}

This final corollary summarizes the development so far.
\begin{cor}\label{c:findk}
If $V^+$ admits a monomial parametrization with exponent matrix
$A\in\QQ^{(n-p)\times n}$, then for every positive $x\in\RR^n_{>0}$
there exists a vector 
  $k\in\mathcal{K}^+_\gamma$
such that the following equivalent conditions hold:
\begin{enumerate}[label={(\roman*)}]
\item\label{it:findka} $(k,x) \in V^+$,
\item\label{it:findkb} $x^{M} = \gamma(k)$, 
\item\label{it:findkc} $\exists \xi\in\RR^{n-p}_{>0}$ such that
${x}=\psi(k)\hadam \xi^A$. 
\end{enumerate}
\end{cor}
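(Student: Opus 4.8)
The plan is to derive the existence of $k$ from Corollary~\ref{cor:all_x_are_stst} and then obtain the three equivalences by invoking Definition~\ref{def:mono} and Lemma~\ref{lem:psi-rep} directly; the corollary is essentially a summary of the preceding lemmata, so no genuinely new argument is needed. First I would observe that admitting a monomial parametrization in particular makes $V^+$ nonempty, so under the standing assumption the cone generator matrix $E$ has no zero row. Corollary~\ref{cor:all_x_are_stst} then applies and yields, for the given $x\in\RR^n_{>0}$, some $k\in\RR^r_{>0}$ with $(k,x)\in V^+$.

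The only point needing a short argument is upgrading $k\in\RR^r_{>0}$ to $k\in\cK^+_\gamma$. Since $(k,x)\in V^+$, Definition~\ref{def:mono} gives that $\gamma(k)$ is defined and $x^M=\gamma(k)$. As $x$ is strictly positive, each coordinate of $x^M$ is positive, so $\gamma_i(k)>0$ for every $i$. Writing $\gamma_i(k)=\gamma_i^-(k)/\gamma_i^+(k)$ as in~\eqref{eq:def_g_plus_minus}, positivity forces numerator and denominator to share a sign, i.e.\ $\gamma_i^+(k)\,\gamma_i^-(k)>0$; comparing with the definition~\eqref{eq:L_gamma} of $\cK^+_\gamma$ shows $k\in\cK^+_\gamma$.

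With $k\in\cK^+_\gamma$ in hand, the three conditions are restatements of earlier results. The equivalence \ref{it:findka}$\Leftrightarrow$\ref{it:findkb} is immediate from Definition~\ref{def:mono}: for $k\in\cK^+_\gamma$ the vector $\gamma(k)$ is defined, so the condition $(k,x)\in V^+$ reduces exactly to $x^M=\gamma(k)$. The equivalence \ref{it:findka}$\Leftrightarrow$\ref{it:findkc} is precisely Lemma~\ref{lem:psi-rep}, whose statement is that $(k,x)\in V^+$ holds if and only if $k\in\cK^+_\gamma$ and there is $\xi\in\RR^{n-p}_{>0}$ with $x=\psi(k)\hadam\xi^A$; the first clause is already in force. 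I do not anticipate any real obstacle, so the main thing to watch is not to conflate membership in $\RR^r_{>0}$ with membership in $\cK^+_\gamma$ — exactly the gap closed by the positivity computation in the second step.
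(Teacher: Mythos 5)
Your proposal is correct and follows essentially the same route as the paper, whose proof is simply a citation of Lemma~\ref{lem:psi-rep}, Theorem~\ref{thm:k_exists}, and Corollary~\ref{cor:all_x_are_stst}. The extra step you supply---deducing $k\in\cK^+_\gamma$ from $\gamma(k)=x^M>0$ via~\eqref{eq:def_g_plus_minus} and~\eqref{eq:L_gamma}---is a detail the paper leaves implicit, and you handle it correctly.
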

  \begin{proof}
    This is Lemma~\ref{lem:psi-rep} together with
    Theorem~\ref{thm:k_exists} and Corollary~\ref{cor:all_x_are_stst}.
  \end{proof}

\subsection{Multistationarity}
This section collects results concerning multistationarity under the
assumption that $V^+$ admits a monomial parametrization. Some
conditions involve sign patterns similar to
\cite{conradi2012multistationarity} and~\cite{inj-006}.  For a scalar
$u$ we use $\sign(u)$ to denote its sign, for a vector $v\in\RR^n$ we
use $\sign(v)=(\sign(v_1), \ldots, \sign(v_n))$ to denote its sign
pattern.  Theorem~\ref{thm:signs_multi} appeared in a different
formulation in \cite{inj-006}.
  In this subsection we frequently refer to~$Z$, the conservation
  matrix of a reaction network and the set $\im_+(Z)$. Our first
  result exploits a monomial parametrization of $V^+$ to formulate
  conditions for multistationarity that are independent of the rate
  constants. 

\begin{lemma}
  \label{lem:rep_multi}
  If $V^+$ admits a monomial parametrization with exponent matrix $A
  \in \QQ^{(n-p)\times n}$, then the following are equivalent:
\begin{enumerate}[label={(\roman*)}]
\item\label{it:l3a} $\cN$ admits multistationarity,
\item\label{it:l3b} $\exists$ $x\in\RR_{>0}^n$ and
$\xi\in\RR_{>0}^{n-p}\setminus\{\1\}$, such that
$Z (x- x\hadam \xi^A) = 0$,
\item\label{it:l3c}$\exists$ $x\in\RR_{>0}^n$ and
$\kappa \in \RR^{n-p} \setminus \{0\}$, such that
$Z (x- x\hadam (e^{\kappa})^A) = 0$.
\end{enumerate}
\end{lemma}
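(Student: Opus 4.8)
The plan is to unwind the definition of multistationarity (Definition~\ref{def:multi}) and translate it, via the lemmata on monomial parametrizations just established, into the rate-constant-free conditions \ref{it:l3b} and \ref{it:l3c}. The key observation is that multistationarity asks for a single $k$ together with two distinct positive steady states $a\neq b$ lying in the same stoichiometric compatibility class, i.e.\ with $Za=Zb$. I would begin with the implication \ref{it:l3a}$\Rightarrow$\ref{it:l3b}. Assume $\cN$ admits multistationarity, so there is $k\in\RR_{>0}^r$ and $a\neq b\in\RR_{>0}^n$ with $(k,a),(k,b)\in V^+$ and $Za=Zb$. Since $(k,a)\in V^+$ forces $k\in\cK_\gamma^+$ (by the discussion around \eqref{eq:L_gamma}), Lemma~\ref{lem:mono-connect}\ref{it:l2a} applies and yields $\xi\neq\1\in\RR_{>0}^{n-p}$ with $b=a\hadam\xi^A$. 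Setting $x=a$, the relation $Za=Zb$ becomes $Z(x-x\hadam\xi^A)=0$, which is exactly \ref{it:l3b}.

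For the reverse implication \ref{it:l3b}$\Rightarrow$\ref{it:l3a}, I would run this argument in the opposite direction, and here is where the main subtlety lies: condition \ref{it:l3b} produces a positive $x$ and a nontrivial $\xi$, but it does not by itself supply the rate constants~$k$ that make both $x$ and $x\hadam\xi^A$ into steady states \emph{simultaneously}. The remedy is Corollary~\ref{c:findk}: given $x\in\RR_{>0}^n$, there exists $k\in\cK_\gamma^+$ with $(k,x)\in V^+$. The crucial point is then that this same $k$ works for $b\defas x\hadam\xi^A$ as well. This follows from Lemma~\ref{lem:fibre}\ref{it:l1a}: since $AM=0$, we have $(k,x)\in V^+\iff(k,x\hadam\xi^A)\in V^+$ for every $\xi\in\RR_{>0}^{n-p}$. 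Hence $(k,b)\in V^+$ too. Because $\xi\neq\1$ and $A$ has full rank $n-p$, the vectors $a=x$ and $b$ are distinct (one must check $\xi^A\neq\1$, which holds precisely because $A$ has rank $n-p$, so $\xi\mapsto\xi^A$ is injective on the relevant coordinates). Finally \ref{it:l3b} gives $Za=Zb$, so $a,b\in\cP_c$ for $c=Za$, and all the requirements of Definition~\ref{def:multi} are met.

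The equivalence \ref{it:l3b}$\iff$\ref{it:l3c} is the routine substitution already used twice in the preceding lemmata: writing $\xi=e^{\kappa}$ (coordinate-wise) sets up a bijection between $\RR_{>0}^{n-p}\setminus\{\1\}$ and $\RR^{n-p}\setminus\{0\}$, under which $\xi^A=(e^{\kappa})^A$ and the condition $\xi\neq\1$ corresponds to $\kappa\neq 0$. I expect the hard part of the argument to be the reverse direction \ref{it:l3b}$\Rightarrow$\ref{it:l3a}, specifically the verification that one choice of $k$ serves both steady states at once; this is exactly what the fibre structure of Lemma~\ref{lem:fibre} guarantees, and it is the conceptual heart of why a monomial parametrization makes multistationarity a rate-constant-free question. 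The distinctness of $a$ and $b$ (ensuring $a\neq b$ rather than merely $a\ne b$ being assumed) deserves an explicit remark, relying on the full rank of the exponent matrix~$A$.
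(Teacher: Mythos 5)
Your proposal is correct and follows essentially the same route as the paper's proof: Lemma~\ref{lem:mono-connect} for \ref{it:l3a}$\Rightarrow$\ref{it:l3b}, Lemma~\ref{lem:psi-rep} (via Corollary~\ref{c:findk}) together with Lemma~\ref{lem:fibre} for the converse, and the substitution $\xi=e^\kappa$ for \ref{it:l3b}$\iff$\ref{it:l3c}. Your explicit check that $\xi\neq\1$ forces $\xi^A\neq\1$ (hence $a\neq b$) via the full row rank of $A$ is a point the paper's proof leaves implicit, and it is a worthwhile addition.
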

\begin{proof}
Items \ref{it:l3b} and \ref{it:l3c} are equivalent as for any
$\xi \in \RR^{n-p}_{>0}$ there is a $\kappa \in \RR^{n-p} $ such that
$\xi = e^\kappa$. Now assume \ref{it:l3b} holds for some $x$ and
$\xi$.  We prove that \ref{it:l3a} holds. By Lemma~\ref{lem:psi-rep},
there exists a $k \in \cK^+_\gamma$ such that $(k,x) \in V^+$ and
   by Lemma~\ref{lem:fibre} $(k,x\hadam\xi^A)\in V^+$ as well. Since
   $Zx=Z(\xi^A\hadam x)=c$ by assumption one has $x$, $\xi^A\hadam x
   \in \cP_c$, that is, 
$\mathcal{N}$ admits
multistationarity.  When \ref{it:l3a} holds, we have $x\neq x'$ and
$k$ such that $Z(x-x') = 0$, and $(k,x) \in V^+$ and $(k,x')\in V^+$.
Now Lemma~\ref{lem:mono-connect} implies $x' = x\hadam \xi^A$ and thus
\ref{it:l3b}.
\end{proof}

  As discussed in \cite{millan2012chemical} and \cite{inj-006} for
  systems with toric steady states multistationarity can be
  established by analysis of sign
  patterns. Theorems~\ref{thm:compute_abk} -- \ref{thm:signs_multi}
  below translate this to our setting. Theorems~\ref{thm:compute_abk}
  and \ref{thm:patterns_exist} show that the existence of a pair of
  nontrivial vectors $\mu\in\rs(A)$ and $z\in\im(S)$ with
  $\sign(\mu)=\sign(z)$ is both necessary and  sufficient for
  multistationarity. Theorem~\ref{thm:compute_abk} is constructive in
  the sense that given such a pair $\mu$, $z$ one can construct rate
  constants $k$ and a corresponding pair of steady states $a$ and $b$.

\begin{thm}
\label{thm:compute_abk}
If $V^+$ admits a monomial parametrization with exponent
matrix~$A\in\QQ^{(n-p)\times n}$ and there are $\mu \in\rs(A)$ and
$z \in\im(S)$ such that $\sign(\mu) = \sign(z)$, then $\cN$ admits
multistationarity.  Specifically, for arbitrary
$\bar a_i \in \RR_{>0}$, $i \in [n]$, let $a\in\RR_{>0}^n$ denote the
vector with entries
\begin{subequations}
\begin{align}
  \label{eq:def_a}
  a_i & =
	\begin{cases}
        \frac{z_i}{e^{\mu_i}-1} & \text{ if }z_i \neq 0, \\
        \bar a_i & \text{ else,}
	\end{cases}
  \intertext{and let}
  \label{eq:def_b}
     b & = a \hadam e^\mu.
   \intertext{Then, for any $\lambda\in\Lambda(E)$, setting}
   \label{eq:def_k}
	 k &= \phi(a^{-1})\hadam E\lambda,
\end{align}
\end{subequations}
$\cN$ admits multistationarity as
\begin{displaymath}
(k,a) \in V^+, \ (k,b) \in V^+, \text{ and } \ (b-a)\in\im(S).
\end{displaymath}
\end{thm}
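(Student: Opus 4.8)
The plan is to verify directly the three asserted conditions, from which multistationarity follows via Definition~\ref{def:multi}. Before anything else I would check that the vector $a$ is well defined and lies in $\RR_{>0}^n$. For indices with $z_i = 0$ this is immediate since $a_i = \bar a_i > 0$. For indices with $z_i \neq 0$ the hypothesis $\sign(\mu) = \sign(z)$ is exactly what is needed: if $z_i > 0$ then $\mu_i > 0$, so $e^{\mu_i} - 1 > 0$ and $a_i = z_i/(e^{\mu_i}-1) > 0$; symmetrically, if $z_i < 0$ then $\mu_i < 0$ makes numerator and denominator both negative, again giving $a_i > 0$. Since $a > 0$ and $e^\mu > 0$ entrywise, $b = a \hadam e^\mu \in \RR_{>0}^n$; and because $\mu \neq 0$ (the pair $\mu, z$ being nontrivial) some $\mu_i \neq 0$, whence $b_i = a_i e^{\mu_i} \neq a_i$ and $b \neq a$.

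The central step is the identity $b - a = z$, proved entrywise. Writing $(b-a)_i = a_i(e^{\mu_i}-1)$, on the one hand, if $z_i \neq 0$ the definition of $a_i$ makes this telescope to $z_i$; on the other hand, if $z_i = 0$ then $\sign(\mu) = \sign(z)$ forces $\mu_i = 0$, so $e^{\mu_i} - 1 = 0$ and $(b-a)_i = 0 = z_i$. Hence $b - a = z \in \im(S)$, which is the third assertion. This immediately gives $Z(b-a) = Zz = 0$, since the rows of $Z$ span the left kernel of $S$ and so annihilate $\im(S)$; thus $a$ and $b$ lie in the same invariant polyhedron $\cP_c$ with $c = Za = Zb$.

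It remains to produce a single $k$ with $(k,a) \in V^+$ and $(k,b) \in V^+$. For the first, since $k = \phi(a^{-1}) \hadam E\lambda$ with $\lambda \in \Lambda(E)$ (which is nonempty, as under the standing assumption $E$ has no zero row), Theorem~\ref{thm:k_exists} gives $(k,a) \in V^+$ directly. For the second I would transport along the fibre of the parametrization: as $\mu \in \rs(A)$ there is $\kappa \in \RR^{n-p}$ with $\mu = \kappa A$, and then $(e^\kappa)^A = e^\mu$ by the logarithm rule $\ln(x^A) = (\ln x)A$ for row vectors, so that $b = a \hadam e^\mu = a \hadam (e^\kappa)^A$. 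Applying Lemma~\ref{lem:fibre}~\ref{it:l1b} to the already-established $(k,a) \in V^+$ then yields $(k,b) \in V^+$. Combining the three facts with $a \neq b$ and $a, b \in \cP_c$ gives multistationarity by Definition~\ref{def:multi}. I expect no serious obstacle here; the only point requiring genuine care is the sign bookkeeping of the first paragraph, since it is precisely the compatibility $\sign(\mu) = \sign(z)$ that simultaneously forces $a$ to be positive and makes the telescoping $b - a = z$ valid.
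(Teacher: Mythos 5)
Your proposal is correct and follows essentially the same route as the paper's proof: positivity of $a$ from the sign compatibility, the entrywise identity $b-a=z$, Theorem~\ref{thm:k_exists} for $(k,a)\in V^+$, and Lemma~\ref{lem:fibre} (via $\mu=\kappa A$) for $(k,b)\in V^+$. You merely supply more detail than the paper does, in particular the $z_i=0\Rightarrow\mu_i=0$ case of the telescoping and the observation that $a\neq b$, both of which the paper leaves implicit.
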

\begin{proof}
The vector $b$ is positive whenever $a$ is positive, and the vector
$a$ is positive, whenever $\sign(\mu)=\sign(z)$.  By definition,
$(b-a) = z \in \im(S)$.  Then Theorem~\ref{thm:k_exists} shows
$(k,a)\in V^+$ and Lemmata \ref{lem:fibre} and~\ref{lem:psi-rep} also
show $(k,b)\in V^+$.
\end{proof}

\begin{thm}
\label{thm:patterns_exist}
Assume $V^+$ admits a monomial parametrization with exponent matrix
$A\in\QQ^{(n-p)\times n}$ and let $k\in\cK^+_\gamma$ and
$a,b\in\RR_{>0}^n$, $a\neq b$, be such that $(k,a) \in V^+$,
$(k,b)\in V^+$, and $(b-a)\in\im(S)$.  Let $z=b-a$ and
$\mu=\ln b - \ln a$. Then
\begin{enumerate}[label={(\roman*)}]
\item \label{it:paternsa} $z\in\im(S)$, $\mu\in\rs(A)$,
$\sign(z)=\sign(\mu)$,
\item \label{it:paternsb} $k$, $a$, and $b$ together with $z$ and $\mu$
satisfy~\eqref{eq:def_a} -- \eqref{eq:def_k}.
\end{enumerate}
\end{thm}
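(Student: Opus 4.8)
The plan is to read off all three assertions in part \ref{it:paternsa} directly from the hypotheses and the connection lemmata already proved, and then to establish part \ref{it:paternsb} by a coordinatewise algebraic check. No new construction is needed; the content is entirely in matching the given data to the earlier statements.

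For part \ref{it:paternsa}, the membership $z\in\im(S)$ is immediate, since $z=b-a$ and $(b-a)\in\im(S)$ is assumed. To obtain $\mu\in\rs(A)$ I would invoke Lemma~\ref{lem:mono-connect}\ref{it:l2b}: as $k\in\cK^+_\gamma$ and $(k,a),(k,b)\in V^+$ with $a\neq b$, there is a nonzero element of $\rs(A)$, say $\mu'$, with $b=a\hadam e^{\mu'}$. Taking coordinatewise logarithms yields $\mu'_i=\ln b_i-\ln a_i$, so $\mu'=\mu$ and hence $\mu\in\rs(A)$. The sign identity then follows from strict monotonicity of the logarithm applied coordinate by coordinate: for each $i$, since $a_i,b_i>0$, one has $b_i-a_i>0\iff \ln b_i>\ln a_i$ (and similarly for the $<$ and $=$ cases), so $\sign(z_i)=\sign(b_i-a_i)=\sign(\ln b_i-\ln a_i)=\sign(\mu_i)$, giving $\sign(z)=\sign(\mu)$.

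For part \ref{it:paternsb} I would verify the three displays in turn. Equation~\eqref{eq:def_b} is just the definition of $\mu$ re-exponentiated: from $\mu=\ln b-\ln a$ one gets $e^\mu=b\hadam a^{-1}$ coordinatewise, whence $a\hadam e^\mu=b$. For~\eqref{eq:def_a} I would split into the two coordinate cases. When $z_i\neq 0$ we have $b_i\neq a_i$, so $\mu_i\neq 0$ and $e^{\mu_i}-1=b_i/a_i-1=(b_i-a_i)/a_i=z_i/a_i\neq 0$, hence $z_i/(e^{\mu_i}-1)=a_i$, matching the first branch. When $z_i=0$ we have $a_i=b_i$ and the second branch leaves $a_i$ free up to the arbitrary positive constant $\bar a_i$, so we take $\bar a_i=a_i$. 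Finally~\eqref{eq:def_k} follows from Theorem~\ref{thm:k_exists}: since $(k,a)\in V^+$ and $E$ has no zero row, there is a $\lambda\in\Lambda(E)$ with $k=\phi(a^{-1})\hadam E\lambda$.

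The argument is essentially bookkeeping, and there is no deep obstacle once Lemma~\ref{lem:mono-connect} and Theorem~\ref{thm:k_exists} are available. The only points deserving care are the identification of the lemma's exponent with $\ln b-\ln a$ (using the convention $\ln x^A=A^T(\ln x)$ for a column vector $x$), and the degenerate coordinates where $z_i=0$, where one must appeal to the freedom in the constants $\bar a_i$ of~\eqref{eq:def_a} rather than to the explicit quotient formula.
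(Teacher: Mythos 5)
Your proposal is correct and follows essentially the same route as the paper: the paper derives $\mu\in\rs(A)$ by writing $a$ and $b$ via Lemma~\ref{lem:psi-rep} as $\psi(k)\hadam(e^{\kappa_j})^A$ (which is exactly the computation packaged in Lemma~\ref{lem:mono-connect}\ref{it:l2b} that you cite), and it obtains the sign identity from $z=(e^\mu-\1)\hadam a$ with $a>0$, which is the same elementary fact as your coordinatewise log-monotonicity argument. Your explicit case split for~\eqref{eq:def_a} and the appeal to Theorem~\ref{thm:k_exists} for~\eqref{eq:def_k} likewise match the paper's proof.
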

\begin{proof} For item~\ref{it:paternsa}, $z \in \im(S)$ by
assumption. As $V^+$ admits a monomial parametrization, by
Lemma~\ref{lem:psi-rep}, there are $\kappa_1$ and
$\kappa_2\in\RR^{n-p}$ such that $a=\psi(k)\hadam (e^{\kappa_1})^A$
and $b=\psi(k)\hadam (e^{\kappa_2})^A$.  Hence
$\mu = (\kappa_2-\kappa_1)A$ and, consequently, $\mu\in\rs(A)$. By
construction $b=e^\mu\hadam a$, and thus $z=(e^\mu-\1)\hadam a$.
As $a$ is positive, $\sign(e^\mu-\1)=\sign(z)$ must hold. As
$\sign(e^\mu-\1)=\sign(\mu)$, $\sign(\mu)=\sign(z)$. For
item~\ref{it:paternsb}, \eqref{eq:def_b} holds by construction and
\eqref{eq:def_a} follows from the equation $z=(e^\mu-\1)\hadam a$.
Now, $(k,a)\in V^+$ implies that $k\hadam \phi(a) = E\lambda$ for some
$\lambda\in\Lambda(E)$ by Theorem~\ref{thm:k_exists}; hence
\eqref{eq:def_k} also holds. 
\end{proof}

  The following Theorem~\ref{thm:signs_multi} rests on the set of all
  sign patterns associated to a linear subspace: let
  $\mathcal{U}\subseteq\RR^n$ be a linear subspace, then
  $\sign(\mathcal{U})$ is the set of all sign patterns of all its
  elements:
  \begin{equation}
    \label{eq:def_sign_subspace}
    \sign(\mathcal{U}) = \left\{ \delta\in\{-1,0,1\}^n | \exists
      u\in\mathcal{U} \text{ with } \sign(u) = \delta \right\}
  \end{equation}
  \begin{example}
    Let $\mathcal{U} = \im\left(
      \begin{bmatrix}
        -5\\ \phantom{-}2
      \end{bmatrix}
    \right)$ and observe that for every vector $u\in\mathcal{U}$ one
    has either
    \begin{displaymath}
      \sign(u) = 
      \begin{pmatrix}
        -1 \\ \phantom{-}1
      \end{pmatrix} \text{ or } \sign(u) =
      \begin{pmatrix}
        0 \\ 0
      \end{pmatrix} \text{ or } \sign(u) =
      \begin{pmatrix}
        \phantom{-}1 \\ -1
      \end{pmatrix}.
    \end{displaymath}
    Consequently
    \begin{displaymath}
      \sign(\mathcal{U}) =   
      \left\{    
        \begin{pmatrix}
          -1 \\ \phantom{-}1
        \end{pmatrix},
        \begin{pmatrix}
          0 \\ 0
        \end{pmatrix},
        \begin{pmatrix}
          \phantom{-}1 \\ -1
        \end{pmatrix}\right\}.
    \end{displaymath}
  \end{example}

Theorem~\ref{thm:signs_multi} below is similar to \cite[Proposition~3.9 and
Corollary~3.11]{inj-006}. All of them employ analysis of the
aforementioned sign patterns to decide the existence of two positive
real solutions $a$ and $b$ to the parametrized family of
polynomials (\ref{eq:ss_eq}) such that both are elements of the affine
space $\{x| Zx = Za = Zb\}$. For a detailed discussion on how
to verify sign conditions, see \cite[Section~4]{inj-006}.

\begin{thm}
\label{thm:signs_multi}
If $V^+$ admits a monomial parametrization with exponent matrix~$A$,
then there are $k\in\cK^+_\gamma$ and $a\neq b \in \RR_{>0}^n$ such
that $(k,a) \in V^+$, $(k,b)\in V^+$, and $Z (b-a)=0$ if and only if
\begin{equation}
\label{eq:sign_condi}
\sign(\rs(A)) \cap \sign(\im(S)) \neq \{ 0 \}. 
\end{equation}
\end{thm}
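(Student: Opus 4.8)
The plan is to derive this as an essentially immediate consequence of Theorems~\ref{thm:compute_abk} and~\ref{thm:patterns_exist}, whose content is exactly the equivalence between the sign condition and the existence of a suitable pair $a\neq b$. First, though, I would reconcile the constraint $Z(b-a)=0$ appearing in the statement with the constraint $(b-a)\in\im(S)$ used in those two theorems. The rows of the conservation matrix $Z$ form a basis of $\cL_{\text{cons}}$, which by definition is the left kernel of $S$; hence $\rs(Z)=\cL_{\text{cons}}=(\im S)^\perp$, and taking orthogonal complements gives $\ker(Z)=\im(S)$. Consequently $Z(b-a)=0$ if and only if $(b-a)\in\im(S)$, so the existential statement in the theorem is literally the multistationarity statement controlled by the earlier results.

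Next I would unwind the sign condition~\eqref{eq:sign_condi}. Because the zero vector lies in every subspace, $0\in\sign(\rs(A))\cap\sign(\im(S))$ always holds, so the hypothesis that this intersection differs from $\{0\}$ says precisely that there is a common \emph{nonzero} sign pattern $\delta$. By~\eqref{eq:def_sign_subspace} this is equivalent to the existence of $\mu\in\rs(A)$ and $z\in\im(S)$ with $\sign(\mu)=\sign(z)=\delta\neq 0$; and since $\sign(\mu)=0$ iff $\mu=0$, demanding $\delta\neq 0$ is the same as demanding $\mu\neq 0$ (equivalently $z\neq 0$).

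For the backward implication I would feed such a pair $\mu,z$ into Theorem~\ref{thm:compute_abk}. That theorem constructs $a,b\in\RR^n_{>0}$ and, for any $\lambda\in\Lambda(E)$, a vector $k$ via~\eqref{eq:def_k} with $(k,a),(k,b)\in V^+$ and $(b-a)\in\im(S)$; here $(k,a)\in V^+$ forces $k\in\cK^+_\gamma$, and $b=a\hadam e^\mu$ with $\mu\neq 0$ gives $e^\mu\neq\1$, hence $a\neq b$. The identification above turns $(b-a)\in\im(S)$ into $Z(b-a)=0$, which is exactly the desired conclusion. For the forward implication I would start from the given $k,a,b$, set $z=b-a$ and $\mu=\ln b-\ln a$, observe that $Z(b-a)=0$ supplies the hypothesis $(b-a)\in\im(S)$ of Theorem~\ref{thm:patterns_exist}, and read off from its part~\ref{it:paternsa} that $\mu\in\rs(A)$, $z\in\im(S)$, and $\sign(\mu)=\sign(z)$. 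Since $a\neq b$ we have $z\neq 0$, so this shared sign pattern is nonzero and~\eqref{eq:sign_condi} holds.

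I do not anticipate a real obstacle: the genuine work is already done in the two preceding theorems, and what remains is packaging. The only points needing care are the linear-algebra identity $\ker(Z)=\im(S)$, which lets the two ways of writing the stoichiometric constraint be used interchangeably, and the bookkeeping ensuring the witnessing sign pattern is nonzero---that is, making sure that $a\neq b$ is matched on the sign side by $\mu\neq 0$ rather than by the trivially available pattern $0$.
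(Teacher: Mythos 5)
Your proposal is correct and follows exactly the paper's route: the paper proves this theorem by citing it as the combination of Theorems~\ref{thm:compute_abk} and~\ref{thm:patterns_exist}, and your argument is precisely that combination, spelled out with the (correct) bookkeeping identifying $Z(b-a)=0$ with $(b-a)\in\im(S)$ and matching $a\neq b$ with a nonzero sign pattern.
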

\begin{proof}
This is the combination of Theorems~\ref{thm:compute_abk}
and~\ref{thm:patterns_exist}.
\end{proof}

\subsection{Multistationarity in the space of total concentrations}
\label{sec:totConc}

In this section we study multistationarity in the space of total
concentrations under the assumption that $V^+$ admits a monomial
parametrization. The first result translates some of the results of
the previous section into the space of total concentrations.

\begin{thm}
  \label{thm:multi_c_space}
  Assume $V^+$ admits a monomial parametrization with exponent matrix
  $A\in\QQ^{(n-p)\times n}$. 
    Let $c$ be an element of $\im_+(Z)$,
  then the following are equivalent:
\begin{enumerate}[label={(\roman*)}]
\item\label{it:t1a} 
  $ \exists k\in\cK_\gamma^+$ and $a \ne b\in \RR_{>0}^{n}$ such that
  $(k,a), (k,b)\in V^+$, 
    and $c=Z a = Z b$, that is $a$, $b \in \cP_c$,
\item\label{it:t1b} 
  $\exists k\in\cK_\gamma^+$ 
  such that $Z (\psi(k)\hadam \xi^A) = c$ has at least two solutions $\xi_1 \ne
  \xi_2\in\RR^{n-p}_{>0}$, 
\item\label{it:t1c} 
  $\exists a\in\RR_{>0}^n$ and $\xi\neq \1 \in\RR_{>0}^{n-p}$, such that
  $Z (a\hadam \xi^A -a) = 0$
    and $c=Z a = Z b$.
\end{enumerate}
\end{thm}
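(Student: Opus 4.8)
The plan is to read all three items as descriptions of the \emph{same} geometric object—two distinct positive points of $V^+$ lying over a common total-concentration vector $c$—written in three different coordinate systems, and then to translate between them using the lemmas already proved in this section. Concretely, \ref{it:t1a} uses the ambient coordinates $a,b$, item \ref{it:t1b} pushes everything onto the parametrizing torus via $\psi(k)$ and $\xi^A$, and \ref{it:t1c} records only the displacement $\xi^A$ relative to a single base point. I would prove $\ref{it:t1a}\Leftrightarrow\ref{it:t1b}$ and $\ref{it:t1a}\Leftrightarrow\ref{it:t1c}$ separately, in each case carrying along the equation $c=Za$ so that the specific value of the total concentration is preserved.

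First I would treat $\ref{it:t1a}\Leftrightarrow\ref{it:t1b}$. Given \ref{it:t1a}, Lemma~\ref{lem:psi-rep} applied to $(k,a)$ and $(k,b)$ produces $\xi_1,\xi_2\in\RR_{>0}^{n-p}$ with $a=\psi(k)\hadam\xi_1^A$ and $b=\psi(k)\hadam\xi_2^A$; then $Z(\psi(k)\hadam\xi_i^A)=Za=Zb=c$, so both $\xi_1,\xi_2$ solve the equation in \ref{it:t1b}, and $\xi_1\ne\xi_2$ because $a\ne b$. Conversely, from two solutions $\xi_1\ne\xi_2$ of $Z(\psi(k)\hadam\xi^A)=c$ I would set $a=\psi(k)\hadam\xi_1^A$ and $b=\psi(k)\hadam\xi_2^A$; these lie in $V^+$ together with $k$ by Lemma~\ref{lem:psi-rep}, satisfy $Za=Zb=c$, and are positive since $\psi(k)>0$.

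Next, for $\ref{it:t1a}\Leftrightarrow\ref{it:t1c}$, the forward direction is immediate from Lemma~\ref{lem:mono-connect}: it gives $\xi\ne\1$ with $b=a\hadam\xi^A$, whence $Z(a\hadam\xi^A-a)=Z(b-a)=0$ while $c=Za=Zb$, which is exactly \ref{it:t1c}. For $\ref{it:t1c}\Rightarrow\ref{it:t1a}$ I would write $b:=a\hadam\xi^A$; the hypothesis $Z(a\hadam\xi^A-a)=0$ yields $Zb=Za=c$, Corollary~\ref{c:findk} supplies a $k\in\cK_\gamma^+$ with $(k,a)\in V^+$, and Lemma~\ref{lem:fibre} then gives $(k,a\hadam\xi^A)=(k,b)\in V^+$, establishing \ref{it:t1a}.

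The step I expect to be the one recurring subtlety—and the only place where something must actually be \emph{checked} rather than merely quoted—is the distinctness $a\ne b$ in every direction. This reduces to showing that $\xi\ne\1$ (equivalently $\xi_1\ne\xi_2$) forces $\xi^A\ne\1$ (equivalently $\xi_1^A\ne\xi_2^A$). Since $A\in\QQ^{(n-p)\times n}$ has full row rank $n-p$, the transpose $A^T$ is injective, so the monomial map $\xi\mapsto\xi^A$, which is $\exp\circ A^T\circ\ln$ applied coordinatewise, is injective on $\RR_{>0}^{n-p}$; hence $\xi\ne\1$ indeed implies $a\hadam\xi^A\ne a$. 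Positivity of all constructed vectors is automatic, as $\psi(k)>0$ for $k\in\cK_\gamma^+$ (take $\xi=\1$ in Lemma~\ref{lem:psi-rep}) and $\xi^A>0$ for positive $\xi$, so the two produced steady states always lie in $\RR_{>0}^n\cap\cP_c$.
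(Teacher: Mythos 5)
Your proposal is correct and follows essentially the same route as the paper's proof: both rely on Lemma~\ref{lem:psi-rep} to pass between $(a,b)$ and $(\xi_1,\xi_2)$, on Lemma~\ref{lem:mono-connect} (equivalently the quotient $\xi=\xi_2/\xi_1$) to get the displacement form, and on Theorem~\ref{thm:k_exists}/Corollary~\ref{c:findk} plus Lemma~\ref{lem:fibre} to recover a suitable $k$ in the direction \ref{it:t1c}$\Rightarrow$\ref{it:t1a}; the only organizational difference is that you prove two biconditionals where the paper argues cyclically. Your explicit verification that $\xi\mapsto\xi^A$ is injective on $\RR_{>0}^{n-p}$ (so that $\xi\neq\1$ really yields $a\neq b$) is a point the paper leaves implicit, and it is a welcome addition rather than a deviation.
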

\begin{proof}
  \ref{it:t1a}$\Rightarrow$\ref{it:t1b}: 
    Let $k\in\cK_\gamma^+$ and $a\neq b \in \RR_{>0}^{n}$ as in
    \ref{it:t1a}. By Lemma~\ref{lem:psi-rep},
    there are $\xi_1, \xi_2\in\RR_{>0}^{n-p}$ such that
    $a=\psi(k)\hadam \xi_1^A$ and $b=\psi(k)\hadam \xi_2^A$ and $a\neq
    b$ implies $\xi_1\neq\xi_2$. Since $c = Z a = Z b$ the equation $Z
    (\psi(k)\hadam \xi^A)= c$ has at least the two positive solutions
    $\xi_1$ and $\xi_2$.
  
  \ref{it:t1b}$\Rightarrow$\ref{it:t1c}: 
    Let $k\in\cK_\gamma^+$ and $\xi_1\neq\xi_2\in\RR_{>0}^{n-p}$ as in
    \ref{it:t1b}. For $a=\psi(k)\hadam \xi_1^A$
    and $b=\psi(k)\hadam \xi_2^A=a\hadam\left(\frac{\xi_2}{\xi_1}\right)^A$
    one has $Z a = Z b =c$ by assumption. Hence $Z (a\hadam \xi^A -a)=
    0$ has the positive solution $a=\psi(k)\hadam\xi_1^A$ and
    $\xi=\frac{\xi_2}{\xi_1}$. Further $\xi_1\neq\xi_2$ implies
    $\xi\neq 1$.

  \ref{it:t1c}$\Rightarrow$\ref{it:t1a}: 
    Let $a\in\RR_{>0}^n$ and $\xi\neq 1 \in \RR_{>0}^{n-p}$ be as in
    \ref{it:t1c}. Choose $\lambda\in\Lambda(E)$ and
    let $k=\phi(a^{-1}) \hadam E \lambda$. By Theorem~\ref{thm:k_exists}
    $(k,a)\in V^+$. Let $b=a\hadam \xi^A$. By Lemma~\ref{lem:fibre} also
    $(k,b)\in V^+$. As $Z a = Z b =c$, we have $a, b \in \cP_c$.
\end{proof}

  Theorem~\ref{thm:multi_c_space} item~\ref{it:t1c} is independent of
  the rate constants. That is, given any $c\in\im_+(Z)$ one can decide
  about multistationarity within the corresponding $\cP_c$, even if
  all or some of the rate constants are unknown. In fact, as the
following two corollaries show, 
arbitrary semi-algebraic constraints in the total concentrations $c$
can be added to the description of the multistationarity locus and a
variant of Theorem~\ref{thm:multi_c_space} still holds.  Already the
case of linear inequalities is interesting (see
Section~\ref{subsec:qechambers}).

\begin{cor}
  \label{cor:semi1}
  Assume $V^+$ admits a monomial parametrization with exponent matrix
  $A\in\QQ^{(n-p)\times n}$. Let $g_1,\dots,g_l \in \RR[c]$,
$\square \in\{>,\ge\}^l$, and $\mathcal F(g(c)\ \square \ 0)$ be any
logical combination of the inequalities $g(c)\ \square \ 0$. Then
there are $k\in\cK_\gamma^+$ and $c\in\im_+(Z)$ such that
\begin{displaymath}
Z (\psi(k)\hadam \xi^A) = c, \ \mathcal F (g(c) \ \square \ 0)
\end{displaymath}
has at least two positive solutions $\xi_1 \neq \xi_2$, if and only if
the system
\begin{equation}
  \label{eq:tot_condi}
Z (a \hadam \xi^A-a) = 0,\ \mathcal F(g(Za) \ \square \ 0)
\end{equation}
has a solution $a\in\RR_{>0}^n$ and $\xi\in\RR_{>0}^{(n-p)}$ with
$\xi\neq \1$.
\end{cor}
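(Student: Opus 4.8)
The plan is to show this corollary is essentially an immediate enrichment of the equivalence \ref{it:t1b}$\Leftrightarrow$\ref{it:t1c} in Theorem~\ref{thm:multi_c_space}, with the semi-algebraic predicate $\mathcal F(g(c)\ \square\ 0)$ carried along unchanged on both sides. The key observation is that the variable $c$ serves only as an abbreviation for $Za$ (equivalently $Z(\psi(k)\hadam\xi^A)$); since the constraint $Z(\psi(k)\hadam\xi^A)=c$ forces $c$ to equal the total concentration of the parametrized steady state, substituting $c=Za$ into $\mathcal F$ is harmless and reversible. So the logical combination $\mathcal F$ can be treated as an inert extra condition that neither side's proof needs to manipulate.

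Concretely, I would prove both implications by following the two arrows of Theorem~\ref{thm:multi_c_space} while tracking the auxiliary predicate. For the forward direction, suppose $k\in\cK_\gamma^+$ and $c\in\im_+(Z)$ are such that $Z(\psi(k)\hadam\xi^A)=c$ together with $\mathcal F(g(c)\ \square\ 0)$ admits two distinct positive solutions $\xi_1\neq\xi_2$. Setting $a=\psi(k)\hadam\xi_1^A$ and $\xi=\xi_2/\xi_1$, the computation in \ref{it:t1b}$\Rightarrow$\ref{it:t1c} gives $Z(a\hadam\xi^A-a)=0$ with $\xi\neq\1$; moreover $Za=Z(\psi(k)\hadam\xi_1^A)=c$, so $\mathcal F(g(Za)\ \square\ 0)$ holds because $\mathcal F(g(c)\ \square\ 0)$ did. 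Thus $a$ and $\xi$ solve~\eqref{eq:tot_condi}.

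For the converse, suppose $a\in\RR_{>0}^n$ and $\xi\neq\1\in\RR_{>0}^{n-p}$ solve~\eqref{eq:tot_condi}. Choose $\lambda\in\Lambda(E)$ and set $k=\phi(a^{-1})\hadam E\lambda$, so that $(k,a)\in V^+$ by Theorem~\ref{thm:k_exists} and hence $k\in\cK_\gamma^+$; by Lemma~\ref{lem:psi-rep} there is $\xi_1$ with $a=\psi(k)\hadam\xi_1^A$, and I put $\xi_2=\xi_1\hadam\xi$ so that $\psi(k)\hadam\xi_2^A=a\hadam\xi^A$. Both $\xi_1$ and $\xi_2$ are positive solutions of $Z(\psi(k)\hadam\xi^A)=c$ for $c=Za$, they are distinct because $\xi\neq\1$, and $\mathcal F(g(c)\ \square\ 0)$ holds since $c=Za$ satisfies $\mathcal F(g(Za)\ \square\ 0)$. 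Finally $c\in\im_+(Z)$ because $c=Za$ with $a\in\RR_{>0}^n$.

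The only point requiring care—and the step I expect to be the mild obstacle—is verifying that the logical combination $\mathcal F$, which may freely mix the inequalities via conjunction, disjunction, and negation, transfers verbatim between the two formulations. This is immediate once one notes that in both directions the relevant $c$ literally equals $Za$, so $g(c)$ and $g(Za)$ are the same tuple of real numbers and every atomic predicate $g_j(c)\ \square_j\ 0$ has the identical truth value to $g_j(Za)\ \square_j\ 0$; any Boolean combination of atoms with equal truth values has equal truth value. No structural assumption on $\mathcal F$ beyond its being a fixed logical formula is needed, so the argument is genuinely independent of the shape of the constraints.
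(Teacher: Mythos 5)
Your proposal is correct and matches the paper's proof, which simply invokes the equivalence of items \ref{it:t1b} and \ref{it:t1c} of Theorem~\ref{thm:multi_c_space} together with the identification $c=Za$; you have merely unwound that citation into the explicit two-directional argument (the converse passing through $k=\phi(a^{-1})\hadam E\lambda$ and Lemma~\ref{lem:psi-rep} exactly as the paper's chain \ref{it:t1c}$\Rightarrow$\ref{it:t1a}$\Rightarrow$\ref{it:t1b} does). The observation that $\mathcal F$ transfers verbatim because $c$ literally equals $Za$ on both sides is precisely the content of the paper's phrase ``together with $c=Za$.''
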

\begin{proof}
This is Theorem~\ref{thm:multi_c_space} \ref{it:t1b} and \ref{it:t1c}
together with $c=Za$.
\end{proof}

As a consequence of Theorem~\ref{thm:signs_multi}, multistationarity
is possible if and only if the sign condition \eqref{eq:sign_condi}
holds. Frequently one first asks whether multistationarity is possible
at all (by checking condition~(\ref{eq:sign_condi})) before asking
whether it is possible under some conditions on the total
concentrations. Hence, one often computes the
intersection~(\ref{eq:sign_condi}) before employing
Corollary~\ref{cor:semi1}. In this case one can add the information
contained in the sign patterns satisfying~(\ref{eq:sign_condi}) to the
system (\ref{eq:tot_condi}) of Corollary~\ref{cor:semi1}. To this end,
let $\Delta$ be the set of sign patterns satisfying
condition~(\ref{eq:sign_condi}) and recall that, by
  Theorem~\ref{thm:patterns_exist}~(a),
there are $a$, $b\in\RR^n_{>0}$ with
$a\neq b$ and $k\in\cK_\gamma^+$ with $(k,a)\in V^+$, $(k,b)\in V^+$,
and $c=Z a=Z a$, i.e.\ $a,b \in \cP_C$. Then, $\sign(b-a)\in\Delta$ by
Theorem~\ref{thm:signs_multi}.  Further recall that, 
  under the standing assumption 
that $V^+$ admits a monomial parametrization, we have
$b=a\hadam\xi^A$ by Lemma~\ref{lem:mono-connect}. Let
$\delta\in\Delta$, then  
\begin{equation}\label{eq:sign_patterns_a_b}
\sign(b-a) = \delta \Leftrightarrow \sign(\ln b - \ln a) = \delta 
\Leftrightarrow \sign(\xi^A-\1) = \delta.
\end{equation}
Now we can ask whether multistationarity is possible for a given sign
pattern $\delta$ and some semi-algebraic constraint on the total
concentrations:

\begin{cor}
  \label{cor:semi}
  Assume $V^+$ admits a monomial parametrization with exponent matrix
  $A\in\QQ^{(n-p)\times n}$, 
    and that $E$ does not contain a zero row. Let $\Delta$ be the set
    of sign patterns from (\ref{eq:sign_condi}) with
    $\Delta\neq\emptyset$. Pick $\delta\in\Delta$ and
  let $g_1,\dots,g_l \in \RR[c]$,
  $\square \in\{>,\ge\}^l$, and $\mathcal F(g(c) \ \square \ 0)$ be any
  logical combination of the inequalities $g(c)\ \square \ 0$. Then
  there are $k\in\cK_\gamma^+$, $c\in\im_+(Z)$ and $a, b\in
  \RR_{>0}^{n}$ with $a\neq b$ such that
  \begin{displaymath}
    (k,a)\in V^+, (k,b)\in V^+,\quad \sign(b-a) = \delta \text{ and }
    \mathcal F(g(c) \ \square \ 0) 
  \end{displaymath}
  if and only if 
    there are $a\in\RR_{>0}^n$ and $\xi\in\RR_{>0}^{n-p}$ such that
  \begin{equation*}
    Z((\xi^A -\1)\hadam a)=0, \quad
    \sign(\xi^A -\1) = \delta, \quad \mathcal F(g(Za) \ \square \ 0).
  \end{equation*} 
\end{cor}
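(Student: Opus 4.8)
The plan is to prove the biconditional by the same mechanism used for Corollary~\ref{cor:semi1}, augmented with bookkeeping of the prescribed sign pattern $\delta$ via the equivalence~\eqref{eq:sign_patterns_a_b}. The central identity throughout is that whenever $b = a\hadam\xi^A$ one has $b - a = a\hadam(\xi^A-\1)$, so that positivity of $a$ forces $\sign(b-a)=\sign(\xi^A-\1)$ coordinatewise. This is exactly~\eqref{eq:sign_patterns_a_b}, and it is what lets the sign condition pass between the $(a,b)$ formulation and the $(a,\xi)$ formulation.

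For the forward direction I would start from $k\in\cK_\gamma^+$ and $a\neq b\in\RR_{>0}^n$ with $(k,a),(k,b)\in V^+$, $\sign(b-a)=\delta$, and $c = Za = Zb$ satisfying $\mathcal F(g(c)\ \square\ 0)$. Since $V^+$ admits a monomial parametrization, Lemma~\ref{lem:mono-connect} produces $\xi\neq\1\in\RR_{>0}^{n-p}$ with $b = a\hadam\xi^A$. Then $Z(b-a)=0$ reads $Z((\xi^A-\1)\hadam a)=0$, the sign equivalence~\eqref{eq:sign_patterns_a_b} gives $\sign(\xi^A-\1)=\delta$, and $g(Za)=g(c)$ makes $\mathcal F(g(Za)\ \square\ 0)$ hold. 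This yields the required solution $(a,\xi)$ of the right-hand system.

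For the converse I would take $a\in\RR_{>0}^n$ and $\xi\in\RR_{>0}^{n-p}$ solving $Z((\xi^A-\1)\hadam a)=0$, $\sign(\xi^A-\1)=\delta$, $\mathcal F(g(Za)\ \square\ 0)$, and construct the pair of steady states. Because $E$ has no zero row, I can invoke Corollary~\ref{cor:all_x_are_stst} (equivalently Theorem~\ref{thm:k_exists}): choosing $\lambda\in\Lambda(E)$ and setting $k=\phi(a^{-1})\hadam E\lambda$ gives $(k,a)\in V^+$, and since $(k,a)\in V^+$ forces $k\in\cK_\gamma^+$ by the discussion following~\eqref{eq:L_gamma}, this $k$ is admissible. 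Putting $b=a\hadam\xi^A$, Lemma~\ref{lem:fibre} gives $(k,b)\in V^+$; the relation $Z((\xi^A-\1)\hadam a)=0$ gives $Zb=Za=:c\in\im_+(Z)$, so $a,b\in\cP_c$; and $\sign(b-a)=\sign(\xi^A-\1)=\delta$, with $\mathcal F(g(c)\ \square\ 0)$ inherited from $\mathcal F(g(Za)\ \square\ 0)$.

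The routine steps are exactly those already established for Corollary~\ref{cor:semi1}; the only genuinely new point — and the one place to be careful — is that the prescribed sign pattern is transported faithfully between the two formulations. This hinges on the positivity of $a$ in $b-a=a\hadam(\xi^A-\1)$, and on the fact that $\delta\in\Delta$ is a nonzero pattern, so that $\sign(\xi^A-\1)=\delta$ already encodes $\xi^A\neq\1$ and hence $a\neq b$. No quantifier elimination or new existence argument is needed beyond Corollary~\ref{cor:all_x_are_stst} and Lemma~\ref{lem:fibre}.
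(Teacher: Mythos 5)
Your proposal is correct and follows essentially the same route as the paper: the paper proves this corollary in one line by citing Corollary~\ref{cor:semi1} together with the identity $b-a=(\xi^A-\1)\hadam a$ and~\eqref{eq:sign_patterns_a_b}, and your argument simply unpacks that reduction into the same underlying ingredients (Lemma~\ref{lem:mono-connect}/Lemma~\ref{lem:psi-rep} for the forward direction, Theorem~\ref{thm:k_exists} and Lemma~\ref{lem:fibre} for the converse). Your explicit observation that positivity of $a$ is what transports the sign pattern, and that a nonzero $\delta$ already forces $a\neq b$, is exactly the content the paper compresses into~\eqref{eq:sign_patterns_a_b}.
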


\begin{proof}
  This is Corollary~\ref{cor:semi1} with $b-a = (\xi^A-\1)\hadam a$ and
  (\ref{eq:sign_patterns_a_b}).
\end{proof}

The next theorem shows that if there is multistationarity for some
value of the total concentrations $c$, then there is also
multistationarity for any rescaled $\alpha c$ ($\alpha>0$), albeit $k$
needs to be adjusted in a nonlinear way.
\begin{thm}
  \label{thm:scaling_c}
  Assume $V^+$ admits a monomial parametrization with exponent matrix
  $A\in\QQ^{(n-p)\times n}$.
    Let $c\in\im_+(Z)$. Then the following are equivalent:
    \begin{enumerate}[label={(\roman*)}]
    \item\label{item:ray1} There exist $a\neq b \in \RR_{>0}^n$ and
      $k\in\RR_{>0}^r$ such that $(k,a)\in V^+$, $(k,b) \in V^+$  and
      $a,b\in\cP_c$.
    \item\label{item:ray2} For every $\alpha>0$ there exists $a(\alpha)\neq
      b(\alpha)\in\RR_{>0}^n$ and $k(\alpha)\in\RR_{>0}^r$ such that 
      $(k(\alpha),a(\alpha))\in V^+$, $(k(\alpha),b(\alpha)) \in V^+$
      and $a(\alpha), b(\alpha) \in \cP_{\alpha c}$. 
    \end{enumerate}
\end{thm}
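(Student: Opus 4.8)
The plan is to reduce the whole statement to the rate-constant-free reformulation of multistationarity provided by Theorem~\ref{thm:multi_c_space}, and then to observe that this reformulation is homogeneous of degree one in the concentration vector. The implication \ref{item:ray2}$\Rightarrow$\ref{item:ray1} is immediate, since one simply specializes the universally quantified $\alpha$ to $\alpha=1$. All the content therefore sits in \ref{item:ray1}$\Rightarrow$\ref{item:ray2}, and for that I would work entirely through the equivalent condition \ref{it:t1c} of Theorem~\ref{thm:multi_c_space}, which involves no rate constants at all.

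First I would note that the hypothesis $k\in\RR_{>0}^r$ in \ref{item:ray1}, together with $(k,a),(k,b)\in V^+$ for positive $a,b$, already forces $k\in\cK_\gamma^+$: a positive steady state requires $\gamma(k)$ to be defined, which is exactly the defining condition of $\cK_\gamma^+$. Hence \ref{item:ray1} is literally statement \ref{it:t1a} of Theorem~\ref{thm:multi_c_space} for the value $c$. Applying that theorem, \ref{item:ray1} is equivalent to its item \ref{it:t1c}: there exist $a\in\RR_{>0}^n$ and $\xi\neq\1\in\RR_{>0}^{n-p}$ with
\[
Z\bigl(a\hadam\xi^A - a\bigr)=0 \qquad\text{and}\qquad Za=c.
\]

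The key step is then a scaling argument. Fixing $\alpha>0$, I would set $a_\alpha := \alpha a$ while keeping the \emph{same} $\xi$. Because $Z$ is linear and $\xi^A$ is a fixed multiplicative vector that is unaffected when $a$ is rescaled, this yields $Z a_\alpha = \alpha c$ and
\[
Z\bigl(a_\alpha\hadam\xi^A - a_\alpha\bigr) = \alpha\, Z\bigl(a\hadam\xi^A - a\bigr) = 0 .
\]
Moreover $\alpha c\in\im_+(Z)$, witnessed by $a_\alpha=\alpha a\in\RR_{>0}^n$. Thus the pair $(a_\alpha,\xi)$ satisfies item \ref{it:t1c} of Theorem~\ref{thm:multi_c_space} for the total-concentration value $\alpha c$, and reading that theorem in the direction \ref{it:t1c}$\Rightarrow$\ref{it:t1a} produces a rate vector $k(\alpha)\in\cK_\gamma^+\subseteq\RR_{>0}^r$ together with distinct positive steady states $a(\alpha)\neq b(\alpha)$ satisfying $(k(\alpha),a(\alpha)),(k(\alpha),b(\alpha))\in V^+$ and $a(\alpha),b(\alpha)\in\cP_{\alpha c}$, which is exactly \ref{item:ray2}.

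I do not expect a genuine obstacle here: the entire point is that the system $Z(a\hadam\xi^A - a)=0,\ Za=c$ is homogeneous of degree one in $a$, so transporting a witness from $c$ to $\alpha c$ only requires scaling $a$ and leaving $\xi$ fixed. The one thing worth flagging explicitly is that the rate constants are \emph{not} scaled linearly: via Theorem~\ref{thm:k_exists} the new vector is reconstructed as $k(\alpha)=\phi\bigl(a(\alpha)^{-1}\bigr)\hadam E\lambda$ for some $\lambda\in\Lambda(E)$, which depends on $\alpha$ through $\phi\bigl((\alpha a)^{-1}\bigr)$ and is therefore nonlinear in $\alpha$, matching the remark preceding the statement. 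The only mild care needed is to split the witness correctly into the multiplicative factor $\xi^A$ and the scalable part $a$ so that $Z$ can be pulled out as $\alpha$.
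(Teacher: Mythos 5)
Your proposal is correct and follows essentially the same route as the paper: both reduce statement \ref{item:ray1} to the rate-constant-free condition \ref{it:t1c} of Theorem~\ref{thm:multi_c_space}, exploit that $Z(a\hadam\xi^A-a)=0$, $Za=c$ is linear in $a$ by scaling $a\mapsto\alpha a$ while keeping $\xi$ fixed, and recover $k(\alpha)=\phi(a(\alpha)^{-1})\hadam E\lambda$ via Theorem~\ref{thm:k_exists}. The only cosmetic difference is that the paper writes out $b(\alpha)=a(\alpha)\hadam\xi^A$ and verifies membership in $V^+$ directly through Lemma~\ref{lem:fibre}, whereas you delegate that verification to the \ref{it:t1c}$\Rightarrow$\ref{it:t1a} direction of Theorem~\ref{thm:multi_c_space}, which amounts to the same computation.
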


\begin{proof} 
    Let $a\neq b\in\RR_{>0}^n$ and $k\in\RR_{>0}^r$ as in
    \ref{item:ray1}. By Theorem~\ref{thm:multi_c_space}, item
    \ref{it:t1a}$\Rightarrow$~\ref{it:l3c} there exists
    $a\in\RR_{>0}^n$ and $\xi\neq 1\in\RR_{>0}^{n-p}$ such that
    $Z(a\hadam\xi^A)=Z a= c$. Pick $\alpha>0$ and
    $\lambda\in\Lambda(E)$ and define
    \begin{displaymath}
      a(\alpha) = \alpha a, b(\alpha) = 
      a(\alpha) \hadam \xi^A
      \text{ and } k(\alpha) = \Phi(a(\alpha)^{-1})\hadam E \lambda.
    \end{displaymath}
    Then $(k(\alpha),a(\alpha))\in V^+$ by construction of
    $k(\alpha)$. By Lemma~\ref{lem:fibre} this implies
    $(k(\alpha),b(\alpha))\in V^+$ as well. By construction $Z
    a(\alpha) = Z b(\alpha) = \alpha c$, that is
    $a(\alpha),b(\alpha)\in\cP_{\alpha c}$.

    Vice versa, assume $a(\alpha)\neq b(\alpha)\in\RR_{>0}^n$ and
    $k\in\RR_{>0}^r$ are as in \ref{item:ray2}. Then $\alpha=1$ yields
    the desired result.
\end{proof}

  Theorem~\ref{thm:scaling_c} states that a network $\cN$ for which
  $V^+$ admits a monomial parametrization (and $E$ does not contain a
  zero row) admits multistationarity for some value 
  $c\in\im_+(Z)$, if and only if it admits multistationarity for all
  $\alpha c$ with $\alpha>0$.

The next result can be used to preclude multistationarity on entire
rays in the space of total concentrations.
\begin{cor}
\label{thm:scale}
Assume $V^+$ admits a monomial parametrization with exponent matrix
$A\in\QQ^{(n-p)\times n}$ and let $c\in \im_+(Z)$. If the system
  \begin{equation}
    \label{eq:a_c}
    Z (a \hadam \xi^A) = c
  \end{equation}
  does not have a solution $a\in\RR_{>0}^n$, $\xi\neq 1\in\RR_{>0}^{(n-p)}$,
  then there do not exist $k\in\cK_\gamma^+$ and $\alpha \in \RR_{>0}$
  such that the system
  \begin{equation}
    \label{eq:psi_xi}
    Z (\psi(k)\hadam \xi^A)= \alpha c
  \end{equation}
  has at least two solutions $\xi_1 \ne \xi_2\in\RR_{>0}^p$.
\end{cor}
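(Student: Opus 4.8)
The plan is to prove the contrapositive, exploiting the scale invariance established in Theorem~\ref{thm:scaling_c} together with the rate-constant-free reformulation in Theorem~\ref{thm:multi_c_space}. The statement asserts that non-solvability of \eqref{eq:a_c} for the fixed $c$ precludes two-solution behavior of \eqref{eq:psi_xi} for \emph{every} $\alpha>0$. So I would assume the negation of the conclusion, namely that there \emph{do} exist $k\in\cK_\gamma^+$ and $\alpha>0$ such that \eqref{eq:psi_xi} has two distinct solutions $\xi_1\ne\xi_2$, and then derive that \eqref{eq:a_c} must have a solution — contradicting the hypothesis.

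Concretely, suppose $k\in\cK_\gamma^+$ and $\alpha>0$ are such that $Z(\psi(k)\hadam\xi^A)=\alpha c$ has two positive solutions $\xi_1\ne\xi_2$. This is exactly condition~\ref{it:t1b} of Theorem~\ref{thm:multi_c_space}, applied with the total-concentration vector $\alpha c$ in place of $c$ (note $\alpha c\in\im_+(Z)$ since $\im_+(Z)$ is a cone and $c\in\im_+(Z)$). By the equivalence \ref{it:t1b}$\Rightarrow$\ref{it:t1c} of that theorem, there exist $\bar a\in\RR_{>0}^n$ and $\bar\xi\ne\1\in\RR_{>0}^{n-p}$ with $Z(\bar a\hadam\bar\xi^A-\bar a)=0$ and $Z\bar a=\alpha c$. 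Equivalently, $Z(\bar a\hadam\bar\xi^A)=Z\bar a=\alpha c$. Now I would rescale: set $a\defas\frac1\alpha\bar a$, which is again in $\RR_{>0}^n$, and keep $\xi\defas\bar\xi\ne\1$. Since $Z$ is linear, $Z(a\hadam\xi^A)=\frac1\alpha Z(\bar a\hadam\bar\xi^A)=\frac1\alpha\cdot\alpha c=c$. Thus $(a,\xi)$ solves \eqref{eq:a_c} with $a\in\RR_{>0}^n$ and $\xi\ne\1$, contradicting the hypothesis that \eqref{eq:a_c} has no such solution.

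An alternative, cleaner route that I would actually prefer is to route everything through Theorem~\ref{thm:scaling_c} and item~\ref{it:t1c} directly, avoiding the intermediate $\psi(k)$ entirely. The two-solution property of \eqref{eq:psi_xi} at level $\alpha c$ is multistationarity in $\cP_{\alpha c}$ (Theorem~\ref{thm:multi_c_space}\ref{it:t1a}$\Leftrightarrow$\ref{it:t1b}); by the scale invariance of Theorem~\ref{thm:scaling_c} this is equivalent to multistationarity in $\cP_c$, which by Theorem~\ref{thm:multi_c_space}\ref{it:t1c} is exactly the solvability of $Z(a\hadam\xi^A-a)=0$ with $Za=c$, $\xi\ne\1$ — i.e.\ the solvability of \eqref{eq:a_c}. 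Contrapositively, unsolvability of \eqref{eq:a_c} kills the two-solution property at \emph{every} scale, which is the claim.

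\textbf{The main obstacle} is bookkeeping the two slightly different normalizations of the equation rather than any conceptual difficulty: \eqref{eq:a_c} is stated in the ``$a\hadam\xi^A$'' form of item~\ref{it:t1c} (with total concentration exactly $c$ and an implicit $Za=c$), whereas \eqref{eq:psi_xi} is in the ``$\psi(k)\hadam\xi^A$'' form of item~\ref{it:t1b} (with total concentration $\alpha c$). I must take care that item~\ref{it:t1c} of Theorem~\ref{thm:multi_c_space} requires $c\in\im_+(Z)$, and verify that the rescaled witness $a=\frac1\alpha\bar a$ stays strictly positive (immediate, as $\alpha>0$) and that $\xi\ne\1$ is preserved under the rescaling of $a$ (immediate, since $\xi$ is untouched). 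Everything else is a direct invocation of the two cited results.
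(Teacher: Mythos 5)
Your proof is correct and follows essentially the same route as the paper, whose entire proof is the one-line citation of Theorem~\ref{thm:scaling_c} together with Theorem~\ref{thm:multi_c_space}; you simply make the rescaling $a=\frac1\alpha\bar a$ and the passage through items~\ref{it:t1b} and~\ref{it:t1c} explicit. Your observation that \eqref{eq:a_c} must be read with the implicit constraint $Za=c$ (as in item~\ref{it:t1c}) is the right interpretation, since otherwise the hypothesis would be vacuous for generic $c$.
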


\begin{proof}
    This is Theorem~\ref{thm:scaling_c} together with
    \ref{thm:multi_c_space}.
\end{proof}

  \begin{remark}
    The scaling invariance in the previous results can be reformulated in
    terms of cones.  For this let $s= \dim ( \im( S ))$ 
    and denote by $\mathbb S^{n-s-1} \subset \RR^{n-s}$ the unit
    sphere. Define the set of all total concentrations $c \in \im_+(Z)$
    for which the network admits multistationarity (for some value of
    the rate constants $k$):
    \begin{displaymath}
      \cC = \{c\in \im_+(Z)\,|\, \exists k\in \cK_\gamma^+ \text{ and }
      a\ne b\in \RR^n_{>0} \text{ s.t.} (k,a),(k,b) \in V^+, \text{ and } a,b\in\cP_c\}.
    \end{displaymath} 
    By the Tarski--Seidenberg Theorem
    \cite[Theorem~2.3]{coste2002introduction}, $\cC$ is a semi-algebraic
    set.  We have shown that (except the missing origin) it is a cone:
    if $c\in\cC$, then by Theorem~\ref{thm:scaling_c} $\alpha
    c\in\cC $, $\forall \alpha > 0$, i.e.
    \[
      \cC=\left(\cC \cap \mathbb S^{n-s-1}\right)\times\RR_{>0}.
    \]
  \end{remark}

  \begin{remark}
  As a consequence of Theorem~\ref{thm:scaling_c}, for systems that
  admit a monomial parametrization, if there exist rate constants such
  that multistationarity occurs for some value
  $c$ of the total concentrations, then there exist rate constants,
  such that it occurs for arbitrarily small values $\alpha c$, $\alpha
  \ll
  1$. That is, multistationarity persists for arbitrarily small total
  concentrations, as long as the ratios
  $\frac{c_i}{c_j}$ remain constant.
  \end{remark}

\section{Multistationarity conditions on the total
concentrations for sequential and distributive phosphorylation}
\label{subsec:qechambers}

In this section we apply the results of Section~\ref{subsec:nonbin} to
networks describing the sequential and distributive phosphorylation of
a protein. Our results complement recent results
of~\cite{bihan2018lower}.  Their Theorem~4.1 states that, for any
$n\ge 2$, if the total concentration of substrates is greater than the
sum of the concentrations of phosphatase and intermediate products
with phosphatase, then there is a choice of rate constants for which
multistationarity is attained. Our results are also on the
total concentrations and motivate the inequalities using the chamber
decomposition as a natural, intrinsic subdivision of the cone of
values of the total concentrations.

\subsection{Sequential distributive phosphorylation of a protein}
\label{sec:sequ-distr-phosph}

Phosphorylation processes are frequently encountered in the modeling
of biochemical processes; see, for example, \cite{ptm-028} and the
references therein. The following network models the phosphorylation
of a protein $A$ at $n$ binding sites in a sequential and distributive
way:
\begin{equation*}
\label{eq:Npn}
\begin{tikzcd}[row sep=small, column sep=small, every arrow/.append
style={shift left=.75}]
A+K \arrow{rr}{k_{1}} && AK \arrow{rr}{k_{3}} \arrow{ll}{k_{2}} &&
A_{p}+K \\
A_p+P \arrow{rr}{l_{1}} && A_pP \arrow{rr}{l_{3}}
\arrow{ll}{l_{2}} &&
A+P \\
\vdots\\
A_{p^{(n-1)}}+K \arrow{rr}{k_{3n-2}} && A_{p^{(n-1)}}K
\arrow{rr}{k_{3n}} \arrow{ll}{k_{3n-1}} &&
A_{p^{(n)}}+K \\
A_{p^{(n)}}+P \arrow{rr}{l_{3n-2}} && A_{p^{(n)}}P \arrow{rr}{l_{3n}}
\arrow{ll}{l_{3n-1}} && A_{p^{(n-1)}}+P
\end{tikzcd}\tag{$\cN_n$}
\end{equation*}
  The first two connected components of network~\ref{eq:Npn} form
  network~(\ref{eq:fexample}) from Example~\ref{ex:1-site} (after 
  the change of variables $X_1=K$, $X_2=A$, $X_3=A K$, $X_4=A_p$,
  $X_5=P$, $X_6=A_p P$). In this sense \ref{eq:Npn} extends
  \ref{eq:fexample} to $n$ phosphorylation steps.
Due to their biochemical importance such networks have been
extensively studied in mathematical biology.  For example, it is known
that \ref{eq:Npn} is multistationary if and only if $n\ge2$
\cite{holstein2013multistationarity}.  For $n=2$ there are known
sufficient conditions on the rate constants for the presence or
absence of multistationarity and it is known that the number of
positive steady states is $1$, $2$,
or~$3$~\cite{conradi2014catalytic}.  For $n>2$ there are bounds on the
maximum number of positive steady states that can be
attained~\cite{multi-001,BMBN-site}.

The aim of this section is to describe the multistationarity locus in
the space of total concentrations.  The strongest results are
available for the $n=2$ case which we consider first:
  \begin{equation*}
  \label{eq:netN2}
    \begin{tikzcd}[row sep=small, column sep=small, every arrow/.append
      style={shift left=.75}]
      A+K \arrow{rr}{k_{1}} && A K \arrow{rr}{k_{3}} \arrow{ll}{k_{2}} &&
      A_p+K \arrow{rr}{k_4} && A_p K \arrow{rr}{k_6}
      \arrow{ll}{k_5} && A_{pp}+K \\
      A_{pp}+P \arrow{rr}{k_7} && A_{pp} P \arrow{rr}{k_9} \arrow{ll}{k_8} &&
      A_p+P \arrow{rr}{k_{10}} && A_p P \arrow{rr}{k_{12}}
      \arrow{ll}{k_{11}} && A+P
    \end{tikzcd}\tag{$\cN_2$}
  \end{equation*}
  
If all reactions of (\ref{eq:netN2}) are of mass-action form, we
obtain the following set of ODE.
  Here $x_{1}$ denotes the concentration of $A$, $x_{2}$ of $K$,
  $x_{3}$ of $A K$, $x_{4}$ of $A_{p}$, $x_{5}$ of $A_{p} K$, $x_{6}$ of
  $A_{pp}$, $x_{7}$ of $P$, $x_{8}$ of $A_{pp} P$ and $x_{9}$ of
  $A_{p}P$.
\begin{align*}
  \dot{x}_1 &	=f_1(x_1,\ldots,x_9) = -k_1x_1x_2+k_2x_3+k_{12}x_9		      \\
  \dot{x}_2 &	=f_2(x_1,\ldots,x_9) = -k_1x_1x_2+(k_2+k_3)x_3-k_4x_2x_4+(k_5+k_6)x_5  \\
  \dot{x}_3 &	=f_3(x_1,\ldots,x_9) = k_1x_1x_2-(k_2+k_3)x_3			      \\
  \dot{x}_4 & =f_4(x_1,\ldots,x_9) =
              k_3x_3-k_4x_2x_4+k_5x_5+k_9x_8-k_{10}x_4x_7+k_{11}x_9			      \\
  \dot{x}_5 &	=f_5(x_1,\ldots,x_9) = k_4x_2x_4-(k_5+k_6)x_5			      \\
  \dot{x}_6 &	=f_6(x_1,\ldots,x_9) = k_6x_5-k_7x_6x_7+k_8x_8			      \\
  \dot{x}_7 & =f_7(x_1,\ldots,x_9) =
              -k_7x_6x_7+(k_8+k_9)x_8-k_{10}x_4x_7+(k_{11}+k_{12})x_9			      \\
  \dot{x}_8 &	=f_8(x_1,\ldots,x_9) = k_7x_6x_7-(k_8+k_9)x_8			      \\
  \dot{x}_9 & =f_9(x_1,\ldots,x_9) = k_{10}x_4x_7-(k_{11}+k_{12})x_9.  
\end{align*}
There are three independent linear relations among the polynomials
$f_1,\ldots,f_9$ and thus three linearly independent conserved
quantities under the dynamics of the network:
\begin{align}
  \label{eq:consdph}
  \begin{split}
  x_2+x_3+x_5=c_1,\\
  x_7+x_8+x_9=c_2,\\
  x_1+x_3+x_4+x_5+x_6+x_8+x_9=c_3.
  \end{split}
\end{align}
  In (\ref{eq:consdph}) above, $c_1$ represents the total amount of
  kinase $K$, $c_2$ the total amount of phosphatase $P$ and $c_3$ the total
  amount of protein~$A$, the substrate.
Relations \eqref{eq:consdph} are the rays of the \emph{cone of
conservation relations}. According to \eqref{eq:consdph}, we can
choose the conservation matrix as
\begin{equation}
\label{eq:Z}
Z=\left[
  \begin{array}{ccccccccc}
    0 & 1 & 1 & 0 & 1 & 0 & 0 & 0 & 0 \\
    0 & 0 & 0 & 0 & 0 & 0 & 1 & 1 & 1 \\
    1 & 0 & 1 & 1 & 1 & 1 & 0 & 1 & 1				    
  \end{array}
\right].
\end{equation}
In \cite{millan2012chemical} it has been shown that the positive
steady state variety $V^+$ of (\ref{eq:netN2}) admits a monomial
parametrization of the form
\[
  x = \psi(k) \hadam \xi^A \text{ with } k\in\RR_{>0}^{12} \text{ and
  } \xi\in\RR_{>0}^3 \text{ free,}
\]
where
\[
\psi(k) = \left(
\frac{(k_2+k_3)k_4k_6(k_{11}+k_{12})k_{12}}{k_1k_3(k_5+k_6)k_9k_{10}},
\frac{(k_5+k_6)k_9k_{10}}{k_4k_6(k_{11}+k_{12})},
\frac{k_{12}}{k_3},
\frac{k_{11}+k_{12}}{k_{10}},
\frac{k_9}{k_6},
\frac{k_8+k_9}{k_7},
1, 1, 1 \right)^T
\]
and
\begin{equation}\label{eq:A_dd}
A=\left[
  \begin{array}{rrrrrrrrr}
     2 &-1 & 1 & 1 & 0 & 0 & 0 & 0 & 1\\
    -1 & 1 & 0 & 0 & 1 & 1 & 0 & 1 & 0\\
    -1 & 1 & 0 &-1 & 0 &-1 & 1 & 0 & 0
  \end{array}
\right].
\end{equation}

\subsection{A numerical study of multistationarity in the space of
total concentrations}
\label{sec:numerics-dd}

\begin{figure}
\begin{center}
\begin{subfigure}[a]{0.45\textwidth}
\begin{center}
  \includegraphics[height=5.5cm]{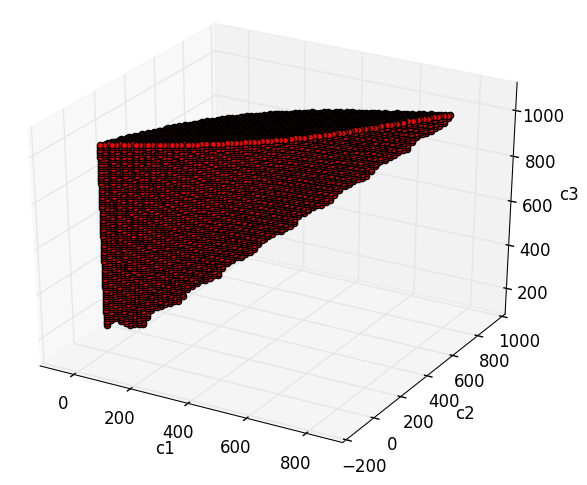}
\end{center}
\caption{Numerical computation with Paramotopy. We consider a grid of
$10^6$ points in the space of total concentrations and represent every
point which leads to multistationarity. The boundary of the
corresponding multistationarity region is represented in red and the
interior in black.  This cone shaped region is semi-algebraic and its
boundary is part of the discriminant in
Fig.~\ref{fig:multistationarityB}.}
\label{fig:multistationarityA}
\end{subfigure} \quad
\begin{subfigure}[a]{0.45\textwidth}
\begin{center}
  \includegraphics[height=6cm]{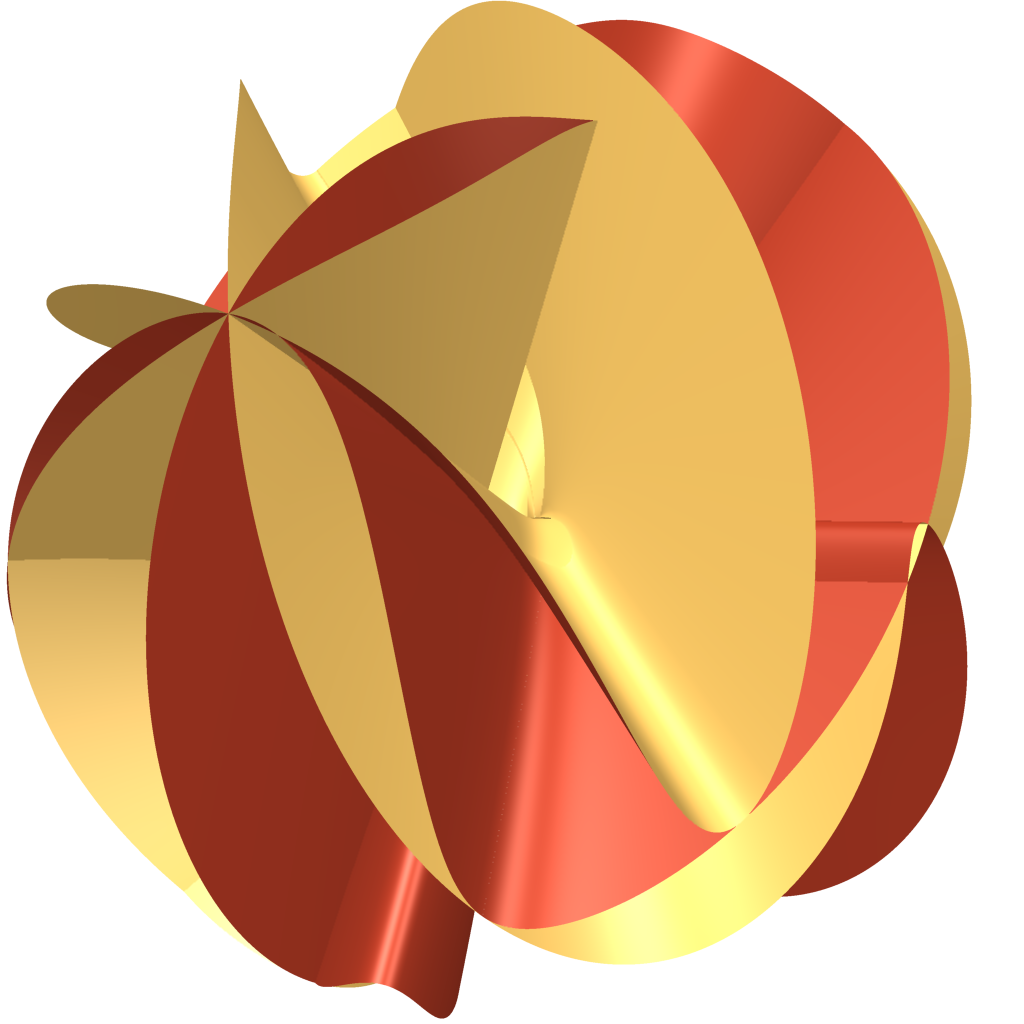}
\end{center}
\caption{The discriminant has seven $\QQ$-ir\-re\-ducible components which
can be found with \Maple.  Three of them are coordinate hyperplanes
and two others are sums of squares.  We show only those two components
which intersect the interior of the positive orthant.  The boundary of
the numerical approximation of the multistationarity region from
Fig.~\ref{fig:multistationarityA} is a subset of this discriminant
surface.}
\label{fig:multistationarityB}
\end{subfigure}
\caption{
\label{fig:multistationarity} 
Representation of regions of multistationarity in the space of
total concentrations for \ref{eq:netN2}. For both figures all rate constants
have been fixed to the values given in
\cite[Fig.~3]{conradi2014catalytic}.
}
\end{center}
\end{figure}

We did a numerical study of multistationarity in the space of total
concentrations which is depicted in
Fig.~\ref{fig:multistationarityA}. For this computation the rate
constants have been numerically fixed to the values in
\cite[Fig.~3]{conradi2014catalytic}.  The computation was done using
Paramotopy~\cite{paramotopy} which builds on Bertini~\cite{bertini}
and allows one to efficiently analyze the solutions 
  of a polynomial systems with unknown coefficients (a parametric
  polynomial system).
 We computed the isolated solutions for each point
in the grid $[0,1000]^3\cap (10\ZZ)^3$ and plotted those which yield
multistationarity.  An alternative approach is through the
\emph{discriminant} which in this case can be found with
\Maple~\cite{maple}.  A discriminant of a parametric semi-algebraic
system is a polynomial which vanishes in those points of the
parameter space where the solution behavior can change.
For an extensive discussion of discriminants with a special emphasis
on computation we refer to~\cite{LAZARD2007636}. Discriminants for
multistationarity have also appeared in \cite[Section~4]{Gross2016}.
Two relevant 
irreducible components of the discriminant of the parametric system
are visualized in Fig.~\ref{fig:multistationarityB}.  The algebraic
boundary of the region from Fig.~\ref{fig:multistationarityA} is a
subvariety of the discriminant from Fig.~\ref{fig:multistationarityB}.
Specifically, the cone shaped region in
Fig.~\ref{fig:multistationarityA} is also visible in the top center of
Fig.~\ref{fig:multistationarityB}. Both figures indicate that, for the
values of the rate constants chosen
in~\cite[Fig.~3]{conradi2014catalytic}, multistationarity does not
occur for all values of the total concentrations. In the next section
we employ the results of Section~\ref{sec:totConc} 
to elucidate
conditions on the total concentrations for the presence or absence of
multistationarity.

\subsection{The chamber decomposition of \texorpdfstring {$\im_+(Z)$}{im+(Z)} for \ref{eq:netN2}
  and \ref{eq:Npn}}
\label{sec:chamber-dd}

  Using the sets $\im_+(Z)$ and $\cP_c$ from (\ref{eq:def_imp_Z}) and
  (\ref{eq:parametricPoly}), we now introduce the chamber
  decomposition of $\im_+(Z)$ induced by the columns of the
  conservation matrix~$Z$.  We assume that $Z$ is of full row rank
  $n-s$ and call a subset of $n-s$ linear independent columns a basis
  of $\im_+(Z)$. 
  Each basis $B$ defines a \emph{basic cone} $\cone (B)$ consisting of
  nonnegative linear combinations of the columns in~$B$.
  \begin{defn}\label{d:chamberComplex}
    The \emph{chamber complex} of a matrix $Z$ is the common refinement of
    the basic cones of all its bases.  More precisely, $c_1$ and $c_2$ are
    in the same chamber of the chamber complex if and only if
    \[
      c_1 \in \cone (B) \iff c_2\in \cone(B) \qquad \text{for all bases $B$
        of $Z$}.
    \]
  \end{defn}
  \begin{remark}
    The chamber complex is important in linear programming as it
    classifies the different combinatorial types that the
    polyhedron $\cP_c$
    can take for any $c$: within one chamber, all
    polyhedra $\cP_c$ are combinatorially equal, that is, their face
    lattices are the same.  See \cite[Section~2.1]{de2009graphs} for an
    interpretation of the vertices of $\cP_c$ in this context.  Chamber
    complexes can be computed with polyhedral geometry software such as
    \textsc{Polymake}~\cite{GawrilowJoswigPolymake} or
    \textsc{TOPCOM}~\cite{Rambau:TOPCOM-ICMS:2002}.  Chambers are also
    related to siphons of chemical reactions~\cite{alg-030}.  A
    chamber complex of a slightly different type appears in
    \cite{craciun2009algebraic} where every basic cone encodes a
    possible reaction network among a given finite set of
    experimentally indistinguishable networks.  We believe that the
    chamber complex is an interesting structure to study for
    different chemical reaction networks.
  \end{remark}

The polyhedron $\cP_c$ is defined by the matrix $Z$ from
(\ref{eq:Z}). The cone generated by the columns of $Z$ is $\RR^3_{\ge
0}$. There are eight basic cones generated by the following sets of
columns of $Z$:
\[
\{1,2,7\}, \{1,2,8\}, \{1,3,7\}, \{1,3,8\}, \{2,3,7\}, \{2,3,8\},
\{2,7,8\}, \{3,7,8\}.
\]
Any of the basic cones is the intersection of three linear half-spaces
of~$\RR^3$ and 
  each of these half-spaces is
spanned by exactly two of the
three columns (see~\cite[Section~1.1]{ziegler2012lectures} for more
details on polyhedra). For example, the cone generated by the columns
of $\{1,2,7\}$ of $Z$ is $\RR^3_{\ge0}$ and equals the intersection of
the half-spaces $c_1\ge0$, $c_2\ge0$, and $c_3\ge0$. There are six
distinct planes occurring among the defining hyperplanes of all cones:
$c_1=0$, $c_2=0$, $c_3=0$, $c_1=c_3$, $c_2=c_3$, and $c_1+c_2=c_3$.
These planes divide $\RR^3_{\ge 0}$ into five full-dimensional
cones. The interiors of these cones are the full-dimensional chambers
of $\RR^3_{\ge 0}$.  See Fig.~\ref{fig:chambers} for a two-dimensional
representation of this chamber decomposition.  There are 
also smaller dimensional chambers: the interiors of the faces of the
full-dimensional chambers. 

  For $n\geq 2$ the chamber complex does not change:

\begin{thm} 
The cone of conservation relations of \ref{eq:Npn} is $\RR^3_{\ge0}$
and it has five full-dimensional chambers: \label{thm:chambers}
  \begin{equation*}
    \begin{array}{lll}
      \Omega(1):
      \begin{cases}
        c_3>0\\
        c_2>c_3\\
        c_1>c_3,
      \end{cases}
   &
     \Omega(2):
     \begin{cases}
       c_1>0\\
       c_2>c_3\\
       c_1<c_3,
     \end{cases}
   &
     \Omega(3):
     \begin{cases}
       c_2>0\\
       c_2<c_3\\
       c_1>c_3,
     \end{cases}\\
      \Omega(4):
      \begin{cases}
        c_1<c_3\\
        c_2<c_3\\
        c_1+c_2>c_3,
      \end{cases}
   &
     \Omega(5):
     \begin{cases}
       c_1>0\\
       c_2>0\\
       c_1+c_2<c_3.
     \end{cases}
    \end{array}
  \end{equation*}
\end{thm}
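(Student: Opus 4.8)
The plan is to reduce the statement for arbitrary $n$ to the finite polyhedral computation already carried out for \ref{eq:netN2} in the paragraph preceding the theorem, by showing that the chamber complex depends only on the \emph{set} of distinct columns of the conservation matrix, and that this set is the same for every $n$. First I would write down the conservation matrix $Z\in\{0,1\}^{3\times(3n+3)}$ of \ref{eq:Npn} coming from the three conserved quantities total kinase $c_1$, total phosphatase $c_2$ and total substrate $c_3$, exactly as in \eqref{eq:consdph}. Reading off which of these three relations each species enters, every column of $Z$ is one of the five vectors
\[
e_1=(1,0,0)^T,\ e_2=(0,1,0)^T,\ e_3=(0,0,1)^T,\ e_1+e_3=(1,0,1)^T,\ e_2+e_3=(0,1,1)^T.
\]
Indeed free kinase $K$ gives $e_1$, free phosphatase $P$ gives $e_2$, each free phosphoform $A_{p^{(i)}}$ gives $e_3$, each kinase-bound intermediate $A_{p^{(i)}}K$ gives $e_1+e_3$, and each phosphatase-bound intermediate $A_{p^{(i)}}P$ gives $e_2+e_3$. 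The crucial structural point is that raising the number of sites only adds further copies of $e_3$, $e_1+e_3$ and $e_2+e_3$ and never introduces a new column type, so this set of five vectors is independent of $n$. Since $\im_+(Z)$ is the cone generated by the columns of $Z$, all of which lie in $\RR^3_{\ge0}$ while $e_1,e_2,e_3$ are among them, we get $\im_+(Z)=\RR^3_{\ge0}$, which is the first assertion.

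Next I would argue that the chamber complex of Definition~\ref{d:chamberComplex} is determined by this set alone. A basis $B$ consists of three \emph{linearly independent} columns, so it can never contain two equal columns; hence each basic cone is $\cone(T)$ for a linearly independent triple $T$ of distinct column vectors, and the full list of basic cones — and therefore their common refinement — is a function of the set $\{e_1,e_2,e_3,e_1+e_3,e_2+e_3\}$ only. As this set does not depend on $n$, the chamber complex is the same for all $n\ge2$ and coincides with the one for \ref{eq:netN2}. Concretely, among the $\binom{5}{3}=10$ triples exactly the two coplanar ones $\{e_1,e_3,e_1+e_3\}$ and $\{e_2,e_3,e_2+e_3\}$ are linearly dependent, leaving the eight bases listed before the theorem; the facet hyperplanes of the corresponding basic cones are exactly the six planes $c_1=0$, $c_2=0$, $c_3=0$, $c_1=c_3$, $c_2=c_3$ and $c_1+c_2=c_3$.

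It then remains to identify the full-dimensional chambers that these six planes cut out of $\RR^3_{\ge0}$ with the regions $\Omega(1),\dots,\Omega(5)$. Since the chamber walls lie among the six planes, at a generic point the chamber is pinned down by the three signs $\sign(c_1-c_3)$, $\sign(c_2-c_3)$ and $\sign(c_1+c_2-c_3)$, so a priori there are eight possibilities. The decisive observation is that $c_1>c_3$ forces $c_1+c_2>c_3$ and likewise $c_2>c_3$ forces $c_1+c_2>c_3$; this rules out three sign patterns and leaves exactly the five realized by $\Omega(1)$ (both enzymes above the substrate), $\Omega(2)$ and $\Omega(3)$ (exactly one enzyme above it), $\Omega(4)$ (neither above it but their sum is) and $\Omega(5)$ (their sum below the substrate), so these five regions exhaust the interior of $\RR^3_{\ge0}$ up to the bounding planes.

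I expect the only delicate point to be the bookkeeping in the first step: verifying for arbitrary $n$ that the uniform layered structure of \ref{eq:Npn} produces no sixth column type, and that the three relations \eqref{eq:consdph} are linearly independent and span the whole conservation space, so that $Z$ is genuinely of full row rank $3$. Once this is checked, everything downstream is the $n$-independent, purely finite computation of the common refinement already available for \ref{eq:netN2}.
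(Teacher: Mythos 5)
Your proof is correct and takes essentially the same route as the paper: both reduce the statement for general $n$ to the observation that the set of distinct columns of $Z^{(n)}$ (namely $e_1$, $e_2$, $e_3$, $e_1+e_3$, $e_2+e_3$) does not depend on $n$, so the chamber complex agrees with the one already computed for the two-site network. You merely make explicit what the paper delegates to the preceding discussion and to the cited reference -- the enumeration of the eight bases, the six walls, and the sign-pattern identification of the five chambers -- which is a welcome but not substantively different elaboration.
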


\begin{proof}
As described in \cite[Section~3]{holstein2013multistationarity}, the
conservation matrix of~\ref{eq:Npn}, for the ordering of the
concentrations defined in
\cite[Table~1]{holstein2013multistationarity}, has the form
$Z^{(n)}=(Z_0|Z_1|\ldots|Z_1)\in\RR^{3\times(3n+3)}$, where $Z_1$ is
repeated $n$ times and
\begin{equation*}
\begin{array}{cc}
  Z_0=\left[
  \begin{array}{ccc}
    1&0&0\\
    0&0&1\\
    0&1&0\\
  \end{array}\right], \quad
  Z_1=\left[
  \begin{array}{ccc}
    1&0&0\\
    0&0&1\\
    1&1&1\\
  \end{array}\right].
\end{array}
\end{equation*}
As $Z^{(n)}$ has the same set of columns for every $n \ge 1$, it
follows that all chamber decomposition of all~\ref{eq:Npn} are equal.
\end{proof}

\begin{remark} Although the ordering of variables defined in
\cite[Table~1]{holstein2013multistationarity} is different from the
one we use with \ref{eq:netN2}, a reordering of the variables
corresponds to a reordering of the columns of $Z^{(n)}$ and thus, it
leaves the chamber decomposition invariant.
\end{remark}

\begin{remark}
Although \ref{eq:Npn} has the same chamber complex for each $n$, the
constants $c$ express nonnegative linear combinations of the
concentrations specific to each network.
\end{remark}

\begin{figure}
\begin{tikzpicture}
\clip(0.37,-5.68) rectangle (10.31,2.6); \draw (5.24,1.44)--
(2.1,-4.08); \draw (5.24,1.44)-- (8.96,-4.06); \draw (2.1,-4.08)--
(8.96,-4.06); \draw (3.65,-1.36)-- (7.1,-1.31); \draw (2.1,-4.08)--
(7.1,-1.31); \draw (3.65,-1.36)-- (8.96,-4.06); \draw (1.5,-4)
node[anchor=north west] {$c_1$}; \draw (9,-4) node[anchor=north west]
{$c_2$}; \draw (5,2) node[anchor=north west] {$c_3$}; \draw (2,-1)
node[anchor=north west] {$c_1+c_3$}; \draw (7.3,-1) node[anchor=north
west] {$c_2+c_3$}; \draw (4.85,-2.7) node[anchor=north west] {
$\Omega(1)$}; \draw (6.5,-2) node[anchor=north west] {$\Omega(2)$};
\draw (3.5,-2) node[anchor=north west] {$\Omega(3)$}; \draw
(4.8,-1.35) node[anchor=north west] {$\Omega(4)$}; \draw (4.8,0)
node[anchor=north west] {$\Omega(5)$};
\end{tikzpicture}
\caption{The intersection of the full-dimensional chambers associated
to \ref{eq:Npn} with the plane $c_1+c_2+c_3=1$. Labeled
vertices correspond to different columns in $Z^{(n)}$.}
\label{fig:chambers}
\end{figure}

\subsection{Multistationarity conditions in the space of total
  concentrations}
\label{sec:new-multi}

Now we 
  turn to $n=2$ and \ref{eq:netN2} and
employ Corollary~\ref{cor:semi} to decide whether
multistationarity is possible for total concentrations in the chambers
$\Omega(i)$.  The linear inequality conditions $c\in\Omega(i)$ become
the conditions $\mathcal F( \bullet )$ in Corollary~\ref{cor:semi}.
We also integrate the information in the sign patterns in the
intersection~\eqref{eq:sign_condi}.  These have been computed in
\cite{conradi2008multistationarity} and are encoded as rows of the
following matrix $\Delta$ (or their negatives):
\begin{equation}
  \label{eq:signs}
  \Delta=\left[
    \begin{array}{rrrrrrrrr}
      -1&-1&-1& 1& 1& 1&-1& 1&-1\\
      1& 0& 1&-1&-1&-1& 1&-1& 1\\
      1&-1& 1& 1&-1&-1& 1&-1& 1\\
      1&-1& 1& 1&-1&-1& 0&-1& 1\\
      1&-1& 1& 1&-1&-1&-1&-1& 1\\
      1&-1& 1& 0&-1&-1& 1&-1& 1\\
      1&-1& 1&-1&-1&-1& 1&-1& 1\\
    \end{array}
  \right].
\end{equation}
The rows $\delta_i$ of $\Delta$ define the conditions
$\sign(\xi^A-\1)=\delta_i$, $i=1,\dots, 7$ of
Corollary~\ref{cor:semi}. Using the matrix $A$ from \eqref{eq:A_dd},
this condition reads

\begin{equation}
\label{eq:signxi} 
\text{sign} \left(
  \frac{\xi^2_1}{\xi_2\xi_3}-1, \frac{\xi_2\xi_3}{\xi_1} -1, \xi_1 -1,
  \frac{\xi_1}{\xi_3} -1, \xi_2 -1, \frac{\xi_2}{\xi_3} -1, \xi_3 -1,
  \xi_2 -1, \xi_1-1 \right) = \delta_i.
\end{equation}

To check multistationarity for $c$ in all of the chambers $\Omega(i)$
we use \Mathematica~\cite{Mathematica}. For each chamber and each row
$\delta_i$ we set up the conditions of Corollary~\ref{cor:semi} and
use the command \texttt{Reduce} to decide the existence of
solutions. In the following example we show how to set up the
\Mathematica code to check multistationarity.

\begin{example}\label{ex:HowToChamber}
    We check multistationarity via Corollary~\ref{cor:semi} for
    $\Omega(1)$ and $\delta=($ $1$, $1$, $1$, $-1$, $-1$, $-1$, $1$,
    $-1$, $1$ $)$ (the negative of the first row of $\Delta$ from
    (\ref{eq:signs})). First we formulate the three conditions
    $Z ((\xi^A-1)\hadam a)=0$, $\sign(\xi^A-1)=\delta$ and
    $\mathcal{F}(g(Za)) \ \square \ 0$ 
    of the corollary:
    \begin{itemize}
    \item $\sign(\xi^A-1)=\delta$: after adding the constraints
      $\xi_1>0$, $\xi_2>0$, $\xi_3>0$ and removing redundant
      inequalities, \eqref{eq:signxi} reduces to
      \begin{equation}
        \label{eq:condi1}
        1<\xi_1<\xi_2\xi_3<\xi^2_1 \text{ and } 0<\xi_2<1 .
      \end{equation}
    \item $\mathcal{F}(g(Za) \ \square \ 0)$: we want to encode
      $c\in\Omega(1)$. By Theorem~\ref{thm:chambers} this is equivalent
      to $c_3>0$, $c_1>c_3$ and $c_2>c_3$. By (\ref{eq:consdph}),
      $c_1=a_2+a_3+a_5$, $c_2= a_7+a_8+a_9$ and
      $c_3=a_1+a_3+a_4+a_5+a_6+a_8+a_9$. The condition $a_i>0$ then
      implies $c_3>0$ and we only need to account for the remaining
      inequalities:
      \begin{equation}
        \begin{split}
          \label{eq:condi2}
          a_1 - a_2 +a_4 +a_6+ a_8+a_9 <0 &\text{ (for $c_1>c_3$) and } \\
          a_1+a_3+a_4+a_5+a_6-a_7 < 0 &\text{ (for $c_2>c_3$).}
        \end{split}
      \end{equation}
    \item In the condition $Z((\xi^A-\1)\hadam a)=0$ we use the matrix  
      \begin{equation}
        \label{ex:Z'}
        Z'=\left[
          \begin{array}{rrrrrrrrr}
            0 &  1 & 1 & 0 & 1 & 0 &  0 & 0 & 0\\
            1 &  0 & 1 & 1 & 1 & 1 & -1 & 0 & 0\\
            1 & -1 & 0 & 1 & 0 & 1 &  0 & 1 & 1
          \end{array}\right],
      \end{equation} 
      obtained from~\eqref{eq:Z} by elementary row operations
      (compared to $Z$ from (\ref{eq:Z}), we found that computation
      times are significantly shorter when $Z'$ is used,
      cf.~Remark~\ref{rem:QE_I}). To obtain polynomial conditions 
      ($\xi^A$ is rational) we write this condition as 
      $Z' (\xi^A \hadam a)= Z' a$ and clear denominators:
      \begin{equation}
        \begin{split}
          \label{eq:condi3}
          \xi_2 \xi_3 a_2 + \xi_1^2 a_3 + \xi_1 \xi_2 a_5 &=
          \xi_1 (a_2+a_3+a_5) \\ 
          \xi_1^2 a_1 + \xi_1 \xi_2 \xi_3 a_3 + \xi_1 \xi_2 a_4 +
          \xi_2^2 \xi_3 a_5  + \xi_2^2 a_6 - \xi_2 \xi_3^2 a_7 &=
          \xi_2 \xi_3 (a_1+a_3+a_4+a_5+a_6-a_7) \\ 
          \xi_1^3 a_1 - \xi_2^2 \xi_3^2 a_2 + \xi_1^2 \xi_2 a_4 +
          \xi_1 \xi_2^2 a_6 + \xi_1 \xi_2^2 \xi_3 a_8 +
          \xi_1^2 \xi_2 \xi_3 a_9 &=
          \xi_1 \xi_2 \xi_3 (a_1-a_2+a_4+a_6+a_8+a_9) 
        \end{split}
      \end{equation}
    \end{itemize}
    The following \Mathematica code can be used to decide the
    existence of $\xi_i$ and $a_i$ satisfying the conditions
    (\ref{eq:condi1}), (\ref{eq:condi2}) and (\ref{eq:condi3})
    together with the condition $a_i>0$ (x1, x2, x3 are
    shorthand for the variables $\xi_1,\xi_2, \xi_3$):
\begin{Verbatim}
Reduce[Exists[{a1,a2,a3,a4,a5,a6,a7,a8,a9},
a1>0 && a2>0 && a3>0 && a4>0 && a5>0 && a6>0 && a7>0 && a8>0 && a9>0 &&
x2*x3*a2 + x1^2*a3 + x1*x2*a5 == x1*(a2+a3+a5) && 
x1^2*a1 + x1*x2*x3*a3 + x1*x2*a4 + x2^2*x3*a5 
   + x2^2*a6 - x2*x3^2*a7 == x2*x3*(a1+a3+a4+a5+a6-a7) && 
x1^3*a1 - x2^2*x3^2*a2 + x1^2*x2*a4 + x1*x2^2*a6 
   + x1*x2^2*x3*a8 + x1^2*x2*x3*a9 == x1*x2*x3*(a1-a2+a4+a6+a8+a9) &&
a1+a3+a4+a5+a6-a7<0 && 
a1-a2+a4+a6+a8+a9<0 &&
x2*x3<x1^2 && x1<x2*x3 && 
1<x1 && 1>x2 && x2>0 ]]
\end{Verbatim}
    The computation takes a few hours, but then the result is
  \lq False\rq, that is, there do not exist $a_1,\ldots,a_9$
  satisfying the constraints, no matter what the values of
  $\xi_1,\xi_2,\xi_3$ are. Consequently, in the chamber $\Omega(1)$
  there is no multistationarity coming from the first row of $\Delta$.
  Theorem~\ref{thm:table} below shows that there is no
  multistationarity in $\Omega(1)$ at all.
\end{example}

Theorem~\ref{thm:table} spells out for which chambers and which sign
patterns there is multistationarity.  For a pair $(\Omega(i),\delta_j)$, we
write $+$ if there is multistationarity in $\Omega(i)$ for all values
of $\xi$ compatible with~\eqref{eq:signxi}. We write $++$ if there is
multistationarity in $\Omega(i)$ with extra conditions for $\xi$
stronger than~\eqref{eq:signxi}.  We write $-$ if there is no
multistationarity.  If the computation does not finish in
  reasonable time, we leave the cell empty.

\begin{table}[htpb]
  \begin{tabular}{ | c || c | c | c | c | c | c | c | }
    \hline
    &$\delta_1$&$\delta_2$&$\delta_3$&$\delta_4$&
    $\delta_5$&$\delta_6$&$\delta_7$\\\hline\hline
    $\Omega(1)$&--&--&--&--&--&--&--\\\hline
    $\Omega(2)$&--&--&+&+&++&+&++\\ \hline
    $\Omega(3)$&++&+&++&--&--&+&+\\ \hline
    $\Omega(4)$&++&+& &+&++& & \\ \hline
    $\Omega(5)$&++&+&+&+&+&+&+\\ \hline
  \end{tabular}
  \caption{
    \label{fig:chambers-signs}
    The chamber-signs incidence table of (\ref{eq:netN2}).  In particular,
    multistationarity is not possible in $\Omega(1)$.  
  }
\end{table}

\begin{thm}
\label{thm:table}
Up to the three empty cells, the chambers-signs incidence table of
(\ref{eq:netN2}) is Table~\ref{fig:chambers-signs}. For the $++$ entries the
following additional constraints are derived:
\begin{equation*}
\begin{array}{ll}
  (\Omega(2),\delta_7):&0<\xi_3<\xi_1<1 \ \land \ \xi_2>\frac{\xi_1^2}{\xi_3^2},\\
  (\Omega(2),\delta_5):&\xi_3>1 \ \land \ 0<\xi_1<1 \ \land \ \xi_2>\xi_3^2,\\
  (\Omega(4),\delta_5):&\xi_3>1 \ \land \ 0<\xi_1<1 \ \land \
			 \xi_2>\xi^2_3,\\
  (\Omega(3),\delta_3):&\xi_3^2<\xi_1<\xi_3<1 \ \land \xi_2>1,\\
  (\Omega(3),\delta_1):&\xi_3>1 \ \land \\ & \left(\left(1<\xi_1<\xi_3^{2/3} \
					   \land \ \frac{\xi_1}{\xi_3}<\xi_2<{\frac{\xi_1^{3/2}}{\xi_3}}\right)
					      \lor\left(\xi_3^{2/3}<\xi_1<\xi_3 \ \land \ \frac{\xi_1}{\xi_3}<\xi_2<1
										     \right)\right),\\
  (\Omega(4),\delta_1):&\xi_3>1 \ \land \\ & \left(\left(1<\xi_1<\xi_3^{2/3} \
					   \land \ \frac{\xi_1}{\xi_3}<\xi_2<{\frac{\xi_1^{3/2}}{\xi_3}}\right)
					      \lor\left(\xi_3^{2/3}<\xi_1<\xi_3 \ \land \ \frac{\xi_1}{\xi_3}<\xi_2<1
										     \right)\right),\\
  (\Omega(5),\delta_1):&\xi_3>1 \ \land \\ &  \left(\left(1<\xi_1<\xi_3^{1/2}
					   \ \land \ \frac{\xi_1}{\xi_3}<\xi_2<\frac{\xi_1^2}{\xi_3}\right)\lor
						\left(\xi_3^{1/2}<\xi_1<\xi_3 \ \land \ \frac{\xi_1}{\xi_3}<\xi_2<1 \right)\right).\\
\end{array}
\end{equation*}
\end{thm}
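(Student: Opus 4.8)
The plan is to treat Theorem~\ref{thm:table} as a finite verification: there are $5\times 7 = 35$ pairs $(\Omega(i),\delta_j)$, and for each one the assertion of the table is an existence (or nonexistence) statement that Corollary~\ref{cor:semi} turns into a purely semi-algebraic decision problem in the variables $a_1,\dots,a_9$ and $\xi_1,\xi_2,\xi_3$. Example~\ref{ex:HowToChamber} is exactly the template I would follow for each cell. Before starting I would note that only the seven rows of $\Delta$ need be checked, not their negatives: replacing $\delta_j$ by $-\delta_j$ swaps the roles of $a$ and $b$ in Corollary~\ref{cor:semi}, which changes neither the chamber $\Omega(i)$ (since $Za=Zb$) nor whether multistationarity occurs, so the two cells always receive the same entry.

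For a fixed pair $(\Omega(i),\delta_j)$ I would assemble three blocks of conditions exactly as in the example. First, the equation $Z'((\xi^A-\1)\hadam a)=0$, written as $Z'(\xi^A\hadam a)=Z'a$ and cleared of denominators, yields three polynomial equations in $a$ and $\xi$ of the form \eqref{eq:condi3}; here I would keep the row-reduced matrix $Z'$ from \eqref{ex:Z'} rather than $Z$, since this is an equivalent system that \texttt{Reduce} handles much faster. Second, the sign condition $\sign(\xi^A-\1)=\delta_j$ is the specialization of \eqref{eq:signxi} to the chosen row, which I would hand-simplify into a minimal set of inequalities in $\xi$ (as \eqref{eq:condi1} was obtained). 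Third, the chamber membership $c\in\Omega(i)$ from Theorem~\ref{thm:chambers}, pushed through $c=Za$ from \eqref{eq:consdph}, becomes linear inequalities in the $a_i$ (as in \eqref{eq:condi2}). Together with the positivity constraints $a_i>0$ and $\xi_i>0$, I would then invoke \texttt{Reduce[Exists[\{a_1,\dots,a_9\},\dots]]} to eliminate the nine concentration variables and return the exact set of $\xi$ for which a compatible positive steady-state pair exists.

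Reading off the table entry from the \texttt{Reduce} output is then mechanical. If the output is \texttt{False}, the $\xi$-region is empty and there is no multistationarity, giving entry $-$; this is what Example~\ref{ex:HowToChamber} already proves for $(\Omega(1),\delta_1)$, and the analogous runs dispose of the entire $\Omega(1)$ row. If the output is the full sign-compatible region of \eqref{eq:signxi}, then every admissible $\xi$ admits a matching $a$, giving entry $+$. If the output is a proper nonempty subset of that region, the entry is $++$, and the displayed side conditions (for instance $0<\xi_3<\xi_1<1 \ \land \ \xi_2>\tfrac{\xi_1^2}{\xi_3^2}$ for $(\Omega(2),\delta_7)$) are precisely the returned constraints on $\xi_1,\xi_2,\xi_3$. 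I would record these verbatim for the seven $++$ cells.

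The main obstacle is computational rather than conceptual: eliminating nine existentially quantified variables from equations that are cubic in the $\xi_i$ is a real quantifier-elimination problem whose cost is doubly exponential in the worst case, so \texttt{Reduce} does not terminate in reasonable time for three of the cells (all in $\Omega(4)$, namely $\delta_3$, $\delta_6$, $\delta_7$), which is exactly why those three entries are left blank in Table~\ref{fig:chambers-signs}. The only genuine care beyond running the solver is bookkeeping: I must ensure the equations are fed in denominator-cleared (polynomial) form so that \texttt{Reduce} operates over an honest semi-algebraic system, and I must double-check that each hand-simplification of \eqref{eq:signxi} into a reduced inequality set neither drops a constraint nor introduces a spurious one, since an error there would silently corrupt a single cell. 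With these checks in place, the table entries follow cell by cell, establishing the theorem up to the three cells whose decision could not be completed.
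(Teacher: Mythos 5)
Your proposal is correct and is essentially the paper's own proof: the authors set up each cell as a quantifier-elimination problem following the template of Example~\ref{ex:HowToChamber} and solve it with \Mathematica's \texttt{Reduce}, leaving blank the three cells in $\Omega(4)$ where the computation does not terminate. The only detail you omit is that for a few further cells (e.g.\ $(\Omega(2),\delta_1)$) the direct computation also failed to finish and the authors instead ran the elimination over a union of adjacent chambers (Remark~\ref{r:indirect}); this is a computational workaround, not a change of method.
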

\begin{proof}[Computational Proof]
The quantifier elimination problems were set up similarly to
Example~\ref{ex:HowToChamber} and solved using \Mathematica.
\end{proof}

\begin{remark}\label{r:indirect}
To obtain Table~\ref{fig:chambers-signs}, some of the computations
were made indirectly. For example, we checked that for $\delta_1$
multistationarity doesn't take place in $\Omega(1)$ but we couldn't
check directly that it doesn't take place in $\Omega(2)$, as
the computations did not finish within one to five days.
We therefore checked that it does not take place in
$\Omega(1) \cup \Omega(2) \cup \Omega(1,2)$, where $\Omega(1,2)$
denotes the boundary between $\Omega(1)$ and $\Omega(2)$.  This
computation was feasible. It is an interesting computational
challenge to classify all boundaries between chambers.
\end{remark}

\begin{remark}\label{rem:QE_I}
The quantifier elimination problems arising from the analysis of
multistationarity have additional structure that should be exploited.
In particular, the run times of our computations seem to be sensitive
to the formulation of the input.  We experimented with different
equivalent semi-algebraic systems in \Mathematica.  One knob to turn is
the system $Z((\xi^{A}-\1)\hadam a)=0$ in Corollary~\ref{cor:semi}.
Different bases for the row space of $Z$ lead to different run times.
Consider the pair $(\Omega(4),\delta_4)$ and let $R_1$, $R_2$, and
$R_3$ be the rows of~$Z$ from eq.~(\ref{eq:Z}). Let
$Z_1=\left[ (R_1+R_2+R_3)^T| (R_2+R_3)^T| R_3^T \right]^T$,
$Z_2=\left[ R_1^T| R_2^T| (R_3-R_1-R_2)^T \right]^T$, and
$Z_3=\left[R_1^T|(R_3-R_2)^T|(R_3-R_1)^T\right]$.  Using in
Corollary~\ref{cor:semi} the matrix $Z$ from (\ref{eq:Z}), the
computation takes about seven seconds while with either of $Z_1$,
$Z_2$, and $Z_3$ the computation did not finish within 24 hours.  It
is tempting to think that the computations with the matrix $Z$ are
faster because it is in row echelon form; however this is not the
case: for the pair $(\Omega(1),\delta_1)$ the computation with $Z$ did
not finish in several days while the computation with $Z_3$ finished
within a few hours.
\end{remark}

\begin{remark}
\label{rem:QE_II} Since in Corollary~\ref{cor:semi} we are only
interested in the positive solutions of the system
$Z((\xi^A-\1)\hadam x)=0$, clearing denominators does not add any new
solutions. Let $\varsigma(Z,\xi^A,\delta,x)$ denote the system
obtained from $Z((\xi^A-\1)\hadam x)=0$ and $\delta$, by clearing
denominators. If $Z'$ and ${A}'$ are matrices obtained by performing
elementary row operations on $Z$ and $A$ respectively, then
$\varsigma(Z,\xi^A,\delta,x)$ and $\varsigma(Z',\xi^{A'},\delta,x)$
have the same set of positive solutions (they are \emph{equivalent
systems}), yet they are not linearly equivalent systems.
\end{remark}

\begin{remark}
\label{rem:QE_III}
Throughout we found \Mathematica to have the fastest implementation of
quantifier elimination.  It would be nice to implement heuristics for
pre-simplification, e.g.\ along the lines of~\cite{brown2010black}, in
open source systems such as \texttt{qepcadB}~\cite{brown2003qepcad} or
\texttt{REDLOG}~\cite{dolzmann1997redlog}.  The performance of
quantifier elimination on systems from biology has been explored
in~\cite{BDEEGGHKRSW2017}.
\end{remark}

The first row of Table~\ref{fig:chambers-signs} shows that
multistationarity is only possible if $c \notin \Omega(1)$:

\begin{cor}
\label{theo:no_multi_dd}
For (\ref{eq:netN2}), if $c \in \Omega(1)$, then there is no
$k\in\RR^{12}_{>0}$ such that the equations
\begin{displaymath}
S v(k,x) =0,\,\, Z x = c
\end{displaymath}
have at least two positive solutions. 
\end{cor}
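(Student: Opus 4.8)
The plan is to deduce the corollary directly from the first row of Theorem~\ref{thm:table}, routed through the sign-pattern characterization of multistationarity. First I would observe that the assertion ``there is no $k\in\RR^{12}_{>0}$ with at least two positive solutions of $Sv(k,x)=0$, $Zx=c$'' is exactly the failure of multistationarity inside the polyhedron $\cP_c$ from~\eqref{eq:parametricPoly}. Since $V^+$ for \ref{eq:netN2} admits a monomial parametrization with exponent matrix $A$ from~\eqref{eq:A_dd}, this failure can be analyzed entirely through the rate-constant-free semi-algebraic conditions of Corollary~\ref{cor:semi}.

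Next I would record which sign patterns can possibly support multistationarity. By Theorem~\ref{thm:signs_multi}, a pair of steady states $a\neq b$ with $Z(b-a)=0$ can exist only if $\sign(b-a)\in\sign(\rs(A))\cap\sign(\im(S))$, and for \ref{eq:netN2} this intersection consists precisely of the rows $\pm\delta_1,\dots,\pm\delta_7$ of the matrix $\Delta$ from~\eqref{eq:signs}. Interchanging the roles of $a$ and $b$ leaves both inside $\cP_c$ and sends $\delta$ to $-\delta$, so it suffices to exclude multistationarity for the seven patterns $\delta_1,\dots,\delta_7$.

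The core step is to invoke Corollary~\ref{cor:semi} with the logical formula $\mathcal{F}$ encoding membership in $\Omega(1)$, namely $c_3>0$, $c_1>c_3$, $c_2>c_3$ as listed in Theorem~\ref{thm:chambers}. For each $j\in\{1,\dots,7\}$, Corollary~\ref{cor:semi} equates the existence of multistationarity in $\Omega(1)$ with sign pattern $\delta_j$ to the solvability of the system
\[
  Z\bigl((\xi^A-\1)\hadam a\bigr)=0,\quad \sign(\xi^A-\1)=\delta_j,\quad Za\in\Omega(1),
\]
in the variables $a\in\RR_{>0}^9$ and $\xi\in\RR_{>0}^3$. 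The first row of Table~\ref{fig:chambers-signs} asserts that every one of these seven systems is infeasible. Combining the seven infeasibilities (and the $a\leftrightarrow b$ symmetry that handles the negatives $-\delta_j$) shows that no $c\in\Omega(1)$ admits multistationarity, which is exactly the claim for the fixed $c$ in the hypothesis.

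The substantive difficulty has already been absorbed into Theorem~\ref{thm:table}, whose justification is the quantifier-elimination computation illustrated in Example~\ref{ex:HowToChamber}. At the level of the corollary the only genuine obstacle is logical completeness: one must be sure that $\{\pm\delta_1,\dots,\pm\delta_7\}$ is the \emph{full} list of admissible sign patterns, so that an all-``\textup{--}'' first row leaves no escape route, and that $\mathcal{F}$ faithfully translates the open-chamber constraint $c\in\Omega(1)$ into the displayed system via $c=Za$. Granting these two points, the corollary follows immediately from Theorem~\ref{thm:table}.
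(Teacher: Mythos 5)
Your argument is correct and is essentially the paper's own: the corollary is stated there as an immediate consequence of the all-``--'' first row of Table~\ref{fig:chambers-signs}, with the reduction to the seven sign patterns $\delta_1,\dots,\delta_7$ (and their negatives, via the $a\leftrightarrow b$ symmetry) justified by Theorem~\ref{thm:patterns_exist}/Theorem~\ref{thm:signs_multi}, the completeness of the list $\pm\delta_1,\dots,\pm\delta_7$ coming from the computation cited from \cite{conradi2008multistationarity}, and the chamber constraint entering through $\mathcal F$ in Corollary~\ref{cor:semi} exactly as you describe. You merely make explicit the bookkeeping that the paper leaves implicit.
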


Together with results of Bihan, Dickenstein, and Giaroli we
almost obtain a characterization of multistationarity for 2-site
phosphorylation in the total concentration coordinates.  The regions
of multistationarity are polyhedral and the only unresolved cases are
when $c_2=c_3$, $c_1=c_3$, or both.
\begin{cor}\label{cor:MM-setting}
In the 2-site phosphorylation network, multistationarity is impossible
if $c_3<c_2$ and $c_3<c_1$, and possible if $c_3>c_2$ or $c_3>c_1$.
If $S_{\text{tot}}$ is the total concentration of substrate,
$F_{\text{tot}}$ that of phosphatase, and $E_{\text{tot}}$ that of
kinase, then multistationarity is impossible if
$S_{\text{tot}} < E_{\text{tot}}$ and
$S_{\text{tot}} < F_{\text{tot}}$ and possible if
$ S_{\text{tot}} > E_{\text{tot}}$ or
$S_{\text{tot}} > F_{\text{tot}}$.
\end{cor}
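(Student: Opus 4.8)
The plan is to assemble the statement from three facts already established: the explicit chamber inequalities of Theorem~\ref{thm:chambers}, the impossibility result Corollary~\ref{theo:no_multi_dd} (whose content is the first row of Table~\ref{fig:chambers-signs}), and the sufficiency result of \cite{bihan2018lower}. The first step is to read off from Theorem~\ref{thm:chambers} that the open chamber $\Omega(1)$ is precisely $\{c : c_3>0,\ c_1>c_3,\ c_2>c_3\}$, so that its defining inequalities match the hypothesis $c_3<c_1$ and $c_3<c_2$ exactly, up to the degenerate slice $c_3=0$.

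For the impossibility direction I would argue as follows. Let $c\in\im_+(Z)$ with $c_3<c_1$ and $c_3<c_2$. If $c_3>0$, then $c\in\Omega(1)$ and Corollary~\ref{theo:no_multi_dd} states directly that no $k\in\RR^{12}_{>0}$ produces two positive solutions of $S\nu(k,x)=0$ inside $\cP_c$. If instead $c_3=0$, then the third conservation relation in \eqref{eq:consdph}, namely $x_1+x_3+x_4+x_5+x_6+x_8+x_9=c_3=0$ with all coordinates nonnegative, forces $x_1=0$; hence $\cP_c$ contains no strictly positive point and multistationarity is vacuously excluded. In both cases there is no multistationarity on the open region $c_3<c_1,\ c_3<c_2$.

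For the possibility direction I would invoke \cite{bihan2018lower}: for every $c$ with $c_3>c_1$ or $c_3>c_2$ --- i.e.\ the substrate concentration strictly exceeding the kinase or the phosphatase concentration --- there is a choice of rate constants exhibiting multistationarity. This is consistent with, and indeed refined by, our own computations: in the chamber language this open region is the union of $\Omega(2),\dots,\Omega(5)$ together with their internal walls, each of which carries a $+$ or $++$ entry in some column of Table~\ref{fig:chambers-signs}.

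It then remains to translate into biological coordinates using the reading recorded after \eqref{eq:consdph}, namely $c_1=E_{\text{tot}}$ (kinase), $c_2=F_{\text{tot}}$ (phosphatase), and $c_3=S_{\text{tot}}$ (substrate); substituting yields the second formulation. I do not expect a genuine obstacle, since the hard work is carried by the quantifier-elimination computations behind Theorem~\ref{thm:table} and by the external sufficiency theorem. The only points demanding care are matching the strict inequalities of $\Omega(1)$ to the hypothesis, disposing of the slice $c_3=0$, and being explicit that the walls $c_1=c_3$ and $c_2=c_3$ remain unresolved --- exactly the lower-dimensional exceptions the corollary already flags.
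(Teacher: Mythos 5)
Your proposal takes essentially the same route as the paper: impossibility on the region $c_3<c_1$, $c_3<c_2$ comes from the quantifier-elimination result for $\Omega(1)$ (Corollary~\ref{theo:no_multi_dd}, i.e.\ the first row of Table~\ref{fig:chambers-signs}), possibility comes from \cite{bihan2018lower}, and the biological reformulation is just the dictionary $c_1=E_{\text{tot}}$, $c_2=F_{\text{tot}}$, $c_3=S_{\text{tot}}$ from~\eqref{eq:consdph}. The one step you elide: Theorem~4.1 of \cite{bihan2018lower} directly yields multistationarity only when the substrate exceeds the \emph{phosphatase}, i.e.\ $c_3>c_2$; to cover $c_3>c_1$ the paper invokes the symmetry of (\ref{eq:netN2}) exchanging $x_2,x_3,x_5$ with $x_7,x_8,x_9$, under which $c_1$ and $c_2$ swap roles. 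You should either state that symmetry explicitly or justify otherwise why the cited theorem covers both disjuncts; relying on the table entries for $\Omega(2),\dots,\Omega(5)$ alone would miss the internal walls of that region. Your separate treatment of the degenerate slice $c_3=0$ (where $\cP_c$ has no positive points) is correct and slightly more careful than the paper, which does not address it.
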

\begin{proof}
Theorem~4.1 in \cite{bihan2018lower} says (in our notation) that if
$c_3>c_2$, then there is a choice of rate constants $k$ for which there
is multistationarity.  There is an inherent symmetry of the system,
exchanging $x_2, x_3, x_5$ with, respectively, $x_7, x_8, x_9$.  Under
this symmetry, the mathematical properties are unchanged, but $c_1$
and $c_2$ change roles.  Therefore there is multistationarity also if
just $c_3>c_1$.  Theorem~\ref{thm:table} shows that if both $c_1<c_3$
and $c_2 < c_3$ then multistationarity is impossible.
\end{proof}

To characterize multistationarity for \eqref{eq:netN2}, the
following boundary cases remain
\begin{itemize}
\item $c_3=c_2$ and $c_1<c_3$,
\item $c_3<c_2$ and $c_1=c_3$,
\item $c_3=c_2$ and $c_1=c_3$.
\end{itemize}
We attempted this classification using computations as in
Example~\ref{ex:HowToChamber}.  For several combinations of signs and
items above, we could rule out multistationarity, but no conclusions
were possible.  One region we could not rule out was sign $\delta_7$
combined with $c_1 = c_2 = c_ 3$.  For this instance we employed the
numerical solver \scip~\cite{GleixnerEtal2018OO}.  It could not find a
solution to the corresponding inequality system and the run on the
computation indicates that multistationarity is impossible in this
region too.  From these computational experiences we conjecture that
multistationarity is impossible in the boundary regions.

\section{On the existence of monomial parametrizations for  
  \texorpdfstring {$V^+$}{V+}}
\label{sec:existence}

Definition~\ref{def:mono} uses the strictly positive steady states.
This differs from the definition of \emph{toric steady states}, which
uses the steady state ideal and thus poses restrictions on all complex
solutions of the steady state equations.
Example~\ref{ex:monomialparam} demonstrates the, maybe unsurprising,
fact that the positive real part can have a monomial parametrization
while the whole steady state variety does not.  We discuss these
phenomena in the context of decompositions of binomial
ideals~\cite{ES96,kahle11mesoprimary}.

It follows from \cite[Corollary~1.2]{ES96} that a binomial Gröbner
basis of the steady state ideal is sufficient for toric steady states
and thus a monomial parametrization of the positive steady states (by
Proposition~\ref{prop:binomcomp}).  A binomial steady state ideal,
however, is not necessary for this.  The steady state ideal may
possess primary components that are irrelevant to the positive real
part.  We first illustrate this fact with an example.

\begin{example}\label{ex:monomialparam}
Let $\cT$ be the following triangular network
\cite[Example~2.3]{millan2012chemical}:
\begin{equation*}
\begin{tikzcd}[row sep=huge, every arrow/.append style={shift left=.75}]
& 2X_1 \arrow{dr}{1} \arrow{dl}{1} \\
2X_2 \arrow{ur}{1} \arrow{rr}{1} && X_1+X_2 \arrow{ul}{1}
\arrow{ll}{1}
\end{tikzcd}\tag{$\cT$}
\end{equation*}
Let $x_i$ denote the concentration of $X_i$. The steady state ideal of
network $\cT$ is
$I_1 = \langle x_1^2-x_2^2 \rangle = \langle x_1-x_2 \rangle \cap
\langle x_1+x_2 \rangle$.  The Zariski closure of the positive steady
state variety $\overline{V^+}_{1}=\VV(x_1-x_2)$ has exactly one
irreducible component defined by one binomial and is thus a toric
variety.  It has a monomial parametrization $x_1=x_2=s$, for
$s\in\RR$.  Restricting this monomial parametrization to the interior
of the positive orthant yields a parametrization for $V_1^+$ (see
Fig.~\ref{fig:binomvarietyA}).  Let
\begin{equation}
\label{eq:example_ideal}
\begin{array}{ll}
  I_2& = I_1 \cap \langle x_1+x_2+1 \rangle = \langle x_1 - x_2
       \rangle \cap \langle x_1 + x_2 \rangle \cap \langle x_1 + x_2 + 1
       \rangle \\
     & = \langle -x_1^3 -x_1^2x_2+x_1x_2^2+x_2^3-x_1^2+x_2^2 \rangle.
\end{array}
\end{equation} 
Clearly, $I_2$ is not binomial; $I_2$ is the intersection of two prime
binomial ideals and a prime trinomial ideal.  Geometrically, the
intersection of ideals corresponds to taking the union of the
corresponding varieties as in Fig.~\ref{fig:binomvarietyB}.  Only the
component $\VV(x_1-x_2)$ of $\VV(I_2)$ intersects the interior of the
positive orthant.  Still, $I_2$ can be the steady state ideal of some
mass-action network.  According to
\cite[Section~4.7.1.1]{erdi1989mathematical}, a mass-action network is
described by a system of ODEs of the form $\dot x =f$, where
$f \in \RR[x]^n$, if and only if every negative term in $f_i$ is
divisible by the variable $x_i$. This condition is fulfilled by the
following system of ODEs:
\begin{equation*}
\dot x_1 = -\dot x_2 = -x_1^3-x_1^2x_2+x_1x_2^2+x_2^3-x_1^2+x_2^2.
\end{equation*}
One network whose steady state ideal is equal to $I_2$ is $\cS$:
\begin{equation*}
\begin{tikzcd}[row sep=huge, every arrow/.append style={shift left=.75}]
3X_1 \arrow{rr}{2/3} \arrow{d}{1/9} && 2X_1+X_2 \arrow{ll}{1} \arrow{d}{2} \\
3X_2 \arrow{u}{1/9} \arrow{rr}{2/3} && X_1+2X_2 \arrow{ll}{1}
\arrow{u}{2}
\end{tikzcd} \qquad
\begin{tikzcd}[row sep=huge, every arrow/.append style={shift left=.75}]
 2X_1 \arrow{rr}{1/2}  &&2X_2 \arrow{ll}{1/2}
\end{tikzcd} \tag{$\cS$}
\end{equation*}
Summarizing, the steady state variety $\VV(I_2)$ has three irreducible
components, but only $\VV( x_1 - x_2)$ intersects the interior of the
positive orthant.  Since $V^+_{1}=V^+_{2}$, the positive steady
state varieties of $\cT$ and $\cS$ share the parametrization
$x_1=x_2=s$, for $s\in\RR_{>0}$.

\begin{figure}
\begin{subfigure}[a]{0.4\textwidth}
\begin{tikzpicture}
\draw[->] (-2,0) -- (2,0) node[right] {$x_1$};
\draw[->] (0,-2) -- (0,2) node[above] {$x_2$};
\draw[scale=0.5,domain=-3:3,smooth,variable=\x,black]
plot ({\x},{\x});
\draw[scale=0.5,domain=-3:3,smooth,variable=\x,black]
plot ({\x},{-\x});
\node at (1.5,1.8) {$\VV(x_1-x_2)$};
\node at (-1.5,1.8) {$\VV(x_1+x_2)$};
\node at (1.5,-2) {\phantom{$\VV(x_1+x_2+1)$}};
\end{tikzpicture}
\caption{The variety $\VV(x_1^2-x_2^2)=\VV(x_1-x_2)\cup \VV(x_1+x_2)$
of $\cT$ from Example~\ref{ex:monomialparam}. $\cS$ has toric
steady states as its steady state ideal is binomial and
$\overline{V^+_1}$ is nonempty and irreducible (see
\cite[Definition~2.2]{millan2012chemical}). $V^+_1$ is
parametrized by $s \mapsto (s,s)$, for $s\in\RR_{>0}$.}
\label{fig:binomvarietyA}
\end{subfigure} \quad
\begin{subfigure}[a]{0.4\textwidth}
\begin{tikzpicture}
\draw[->] (-2,0) -- (2,0) node[right] {$x_1$}; \draw[->] (0,-2) --
(0,2) node[above] {$x_2$};
\draw[scale=0.5,domain=-3:3,smooth,variable=\x,black] plot
({\x},{\x}); \draw[scale=0.5,domain=-3:3,smooth,variable=\x,black]
plot ({\x},{-\x});
\draw[scale=0.5,domain=-3.5:2.5,smooth,variable=\x,black] plot
({\x},{-1-\x});
\node at (1.5,1.8) {$\VV(x_1-x_2)$};
\node at (-1.5,1.8) {$\VV(x_1+x_2)$};
\node at (1.5,-2) {$\VV(x_1+x_2+1)$};
\end{tikzpicture}
\caption{The variety
$\VV((x_1^2-x_2^2)(x_1+x_2+1))=\VV(x_1-x_2)\cup
\VV(x_1+x_2)\cup\VV(x_1+x_2+1)$ of $\cS$ from
Example~\ref{ex:monomialparam}. $\cS$ does not have toric steady
states according to \cite[Definition~2.2]{millan2012chemical} because
$I_2$ is not binomial. Still, $\VV(I_2)\cap \RR^n_{>0}$ is toric and
parametrized by $s \mapsto (s,s)$, for $s\in\RR_{>0}$. }
\label{fig:binomvarietyB}
\end{subfigure}
\caption{
  The positive steady state varieties of $\cT$ and $\cS$
  are equal. $\cT$ has a binomial steady state ideal while $\cS$ has 
  does not.  In both cases the equations that describe only the
  positive steady states are binomial.
}
\label{fig:binomvariety}
\end{figure}
\end{example}

The following proposition uses \cite[Section~2]{ES96} to show why the
name \emph{toric steady states} is justified.  We include it, as it
seems to have never appeared explicitly in the literature.
  Proposition~\ref{prop:binomcomp} below shows that if the steady
  state ideal is binomial, then the positive real part of its variety
  always admits a monomial parametrization, even if the corresponding
  variety does not.
\begin{prop}
\label{prop:binomcomp}
If $I\subseteq \RR[x]$ is a binomial ideal, then at most one of the
irreducible components of its variety intersects~$\RR^n_{>0}$.
\end{prop}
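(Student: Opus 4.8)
The plan is to pass to the complex torus $(\CC^*)^n$, where a binomial ideal cuts out a coset of an algebraic subgroup, and then to observe that the positive orthant is trapped inside the identity component of that subgroup. Throughout I read the irreducible components of $\VV(I)$ as the minimal primes of $I\CC[x]$, in line with \cite{ES96}. First I would reduce to the torus. Since $I$ is binomial, fix binomial generators. A generator that is a monomial, or of the form $x^{b}-c\,x^{d}$ with $c\le 0$, cannot vanish anywhere on $\RR^n_{>0}$, so if any such generator occurs then $\VV(I)\cap\RR^n_{>0}=\nothing$ and the statement holds vacuously. Otherwise, after scaling, write $I=\langle x^{b_1}-c_1x^{d_1},\dots,x^{b_s}-c_sx^{d_s}\rangle$ with all $c_i>0$ and set $a_i=b_i-d_i\in\ZZ^n$, so that on $(\CC^*)^n$ the $i$-th equation reads $x^{a_i}=c_i$. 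If there is no positive solution we are again done; otherwise fix one $p\in\RR^n_{>0}$. The map $x\mapsto x\hadam p^{-1}$ then identifies the torus points of $\VV(I)$ with the algebraic subgroup $T=\{t\in(\CC^*)^n : t^{a_i}=1,\ i=1,\dots,s\}$; that is, $\VV(I)\cap(\CC^*)^n=p\hadam T$.

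Next I would use the structure of $T$. As the kernel of the homomorphism $t\mapsto(t^{a_i})_i$, the group $T$ is diagonalizable, its identity component $T^0$ is an irreducible subtorus, and $T$ is the finite disjoint union of the cosets $gT^0$, $g\in T/T^0$. Hence $\VV(I)\cap(\CC^*)^n$ is the finite disjoint union of the irreducible, relatively closed cosets $p\hadam gT^0$, and these are precisely its irreducible components. Passing to Zariski closures in $\CC^n$, every irreducible component of $\VV(I)$ that meets the torus equals the closure of exactly one coset $p\hadam gT^0$ (its trace $C\cap(\CC^*)^n$ is irreducible and lands in a single piece of the disjoint union), while the remaining components of $\VV(I)$ lie in the coordinate hyperplanes $\{x_i=0\}$ and therefore miss $\RR^n_{>0}$.

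It remains to locate the positive points, which is the crux. The positive solutions form $p\hadam T_{>0}$ with $T_{>0}=\{t\in\RR^n_{>0}:t^{a_i}=1\}$; applying the coordinate-wise logarithm turns $T_{>0}$ into the linear subspace $\{y\in\RR^n : a_i^{\mathsf T} y=0,\ i=1,\dots,s\}$, which is connected and contains $\1$. Thus $T_{>0}\subseteq T^0$, so all of $\VV(I)\cap\RR^n_{>0}$ lies in the single coset $p\hadam T^0$. Since distinct cosets of $T^0$ are disjoint and every positive real point lies in the torus, only the component $\overline{p\hadam T^0}$ can meet $\RR^n_{>0}$, which proves the proposition.

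The one place that genuinely needs care --- and where \cite[Section~2]{ES96} is convenient --- is the identification of the torus-cosets with the irreducible components. Equivalently, the decisive uniqueness can be phrased character-theoretically: two minimal primes meeting $\RR^n_{>0}$ would correspond to two partial characters on $\Sat(\langle a_1,\dots,a_s\rangle)$ that both extend the assignment $a_i\mapsto c_i$ and take positive real values; their ratio is trivial on $\langle a_1,\dots,a_s\rangle$, hence descends to a character of the finite group $T/T^0$ and is therefore root-of-unity valued, so it must be trivial because the only positive real root of unity is $1$. This is the essential obstacle, and it is exactly the point at which the positivity hypothesis (rather than mere connectedness of the positive locus) does the work of singling out one component.
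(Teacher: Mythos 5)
Your proof is correct. It reaches the same conclusion as the paper's but by a somewhat different route: where the paper invokes the Eisenbud--Sturmfels decomposition $I_+(\rho)=\cap_j I_+(\rho_j)$ into primes indexed by extensions of the partial character $\rho$ to $\Sat(L)$, and then shows by inverting the integer matrix $A$ in $\log\rho_k(b)=A^{-1}\log\rho(c)$ that exactly one extension is positive-valued, you instead translate the torus part of $\VV(I)$ to an algebraic subgroup $T$, identify the components meeting the torus with closures of the cosets of the identity component $T^0$, and observe that the positive locus is the exponential of a linear subspace, hence connected and trapped in the single coset $T^0$. The connectedness argument is a genuinely different (and arguably slicker) way to single out one component; it trades the explicit character computation for the standard structure theory of diagonalizable groups, at the cost of having to justify carefully that the cosets of $T^0$ really are in bijection with the components of $\VV(I)$ meeting the torus --- which you do, correctly, via the fact that the finitely many cosets are disjoint, closed, irreducible, and equidimensional. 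Your closing paragraph (two positive extensions differ by a character of the finite group $\Sat(L)/L\cong T/T^0$, hence root-of-unity valued, hence trivial) is essentially the paper's argument repackaged, so in effect you have given both proofs; the paper's version buys slightly more, namely an explicit formula for the unique positive character, while yours makes the geometric mechanism (one Euclidean-connected component of the group) more transparent. The only stylistic difference worth noting is that the paper reduces to $I=I:(x_1\cdots x_n)^\infty$ at the level of ideals, whereas you handle the components in the coordinate hyperplanes set-theoretically; both are fine since such components miss $\RR^n_{>0}$ anyway.
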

\begin{proof}
Without loss of generality, we can assume that
$I=I:(x_1 \ldots x_n)^{\infty}$ and $I = I\RR[x^\pm]\cap \RR[x]$ as
all other components are contained in coordinate hyperplanes.  By
\cite[Corollary~2.5]{ES96},
\begin{equation*}
I = I_+(\rho) =\langle x^{m_+}-\rho(m)x^{m_-}:m \in L_{\rho} \rangle
\end{equation*}
for a unique lattice $L\subset \ZZ^n$ and partial character
$\rho : L \to \RR^*$.  By \cite[Corollary~2.2]{ES96}, $I_+(\rho)$,
seen as an ideal of $\CC[x]$, is radical and it has a decomposition
into prime ideals as
\begin{equation*}
I_+(\rho)=\cap^g_{j=1} I_+(\rho_j),
\end{equation*}
where $\{\rho_1, \ldots, \rho_g\}$ is the set of extensions of $\rho$
to the saturation $\Sat(L)$ of $L$ and $g$ is the order of the group
$\Sat(L)/L$.  A variety $\VV(I_+(\rho_k))$ has positive points if and
only if $\rho_k$ takes only positive real values. Fixing
$b_1,\ldots,b_r$ to be a basis of $\Sat(L_{\rho})$, any basis
$c_1,\ldots,c_r$ of $L$ can be expressed in terms of the $b_i$ as
$c_i=\sum_j a_{ij}b_{j}$ where $A = (a_{ij}) \in\ZZ^{r\times r}$ has
determinant~$g$.  Let $\rho_k$ be any of the extensions of $\rho$;
since $\rho = \rho_k|_L$, we have
\begin{equation}
\label{eq:characters}
\rho (c_i)=\rho_k \left( \sum_j a_{ij}b_{j} \right)=\prod_{j}\rho_k(b_j)^{a_{ij}}.
\end{equation}
These equations in the unknowns $\rho_k(b_j)$ determine the extensions
of $\rho$ and thus the irreducible components of~$\VV(I)$.  If
$\rho_k(b_j)$ is not positive and real for some $k$ and $j\in[r]$,
then $\VV(I_+(\rho_k))\cap \RR^n_{>0} = \emptyset$.  We only need to
consider components for which $\rho_k(b_j)>0$ for all $j\in[r]$.  In
this case we can take logarithms on both sides
of~\eqref{eq:characters}:
\begin{equation}\label{eq:logeq}
\log (\rho (c_i)) = \sum_j a_{ij} \log (\rho_k(b_j)).
\end{equation}
The result is a linear equation for $\log(\rho_k(b_j))$ whose
solutions yield characters $\rho_k$ such that $\VV(I_+(\rho_k))$ has
positive points.  The matrix $A$ can be inverted over~$\QQ$.  Write
$\log \rho_k(b) = (\log \rho_k(b_1), \dots, \log \rho_k(b_r))$ and
similarly $\log\rho(c) = (\log \rho(c_1), \ldots, \log \rho (c_r))$.
Then \eqref{eq:logeq} has the unique solution
$\log \rho_k(b) = A^{-1}\log\rho(c)$. Consequently, there is a unique
saturation $\rho^* : \Sat(L) \to \RR^*$ of $\rho$ such that
$\rho^*(b_i) > 0$.
\end{proof}

With Example~\ref{ex:monomialparam} and
Proposition~\ref{prop:binomcomp} in mind, one would like to analyze
the primary decomposition of any steady state ideal that one
encounters.  If the original steady state ideal was not binomial,
then maybe the primary decomposition reveals that at least all
components whose varieties intersect the positive orthant are
binomial.  In this case one has a monomial parametrization for each
such component.  Deciding if a non-binomial variety contains positive
real points is very hard, though.  Only in the binomial case it is
easy using the analysis of characters as in the proof of
Proposition~\ref{prop:binomcomp}.

\begin{remark}\label{rem:binomcomp}
If the steady state equations in variables $(k,x)$ are binomials in
$x$, then Proposition~\ref{prop:binomcomp} holds locally.  In
particular, for any specialization of the $k$ to positive real
numbers, one has a binomial ideal, as specialization of the $k$ could
only reduce the number of terms.  For a careful analysis of the
consequences of specialization on two different generating sets of the
same steady state ideal see~\cite[Section~2]{binomialCore}.  It
remains an interesting problem to systematically analyze primary
decompositions of steady state ideals in~$\RR(k)[x]$.
\end{remark}

\section{Discussion}
\label{sec:disc}

The results in this paper show that multistationarity of
mass-action systems is a semi-algebraic condition.  Polynomial
inequalities are used to describe where in parameter space
multistationarity can occur.  On the 2-site phosphorylation network
the result is particularly satisfying as
Corollary~\ref{cor:MM-setting} complements the results of Bihan et
al.\ and shows---in biologically meaningful terms---exactly where
multistationarity is possible.  The chamber decomposition is an
interesting structure because it is inherent to the biological system.
It would be interesting to apply our methods to other systems where
the chamber decomposition is explicitly known, e.g.\ the Wnt pathway
from~\cite{Gross2016}.  For us, this shows that it is worthwhile for
biologists to interact with real algebraic geometry.  Mass-action
systems whose steady state varieties admit a monomial parametrization
appear as a natural hunting ground.  Here the techniques of this paper
can be applied and combined with ever more powerful exact
computational methods from logic.  As an immediate goal, it would be
very interesting to prove or disprove Corollary~\ref{cor:MM-setting}
for $3$-site or $n$-site phosphorylation.

\section*{Acknowledgement}
This project is funded by the Deutsche Forschungsgemeinschaft,
284057449.  Alexandru Iosif and Thomas Kahle are also partially
supported by the DFG-RTG ``MathCore'', 314838170.
We thank the anonymous reviewers for their valuable
comments. One reviewer helped to improve the paper by providing a
simpler proof of Lemma~\ref{lem:rep_multi}, clarifying the statements
of Theorems~\ref{thm:multi_c_space} and~\ref{thm:scaling_c}, and
pointing us to \cite[Theorem~4.1]{bihan2018lower} which yields
Corollary~\ref{cor:MM-setting}.

\bibliographystyle{amsplain}
\bibliography{dualsite}

\end{document}